\newtheorem{problem}{Problem}
\newtheorem{definition}{Definition}
\newtheorem{lemma}{Lemma}
\newtheorem{proposition}{Proposition}
\newtheorem{theorem}{Theorem}
\newtheorem{corollary}{Corollary}
\newtheorem{remark}{Remark}
\newcommand{\todo}[1]{\textcolor{red}{[TODO: #1]}}
\newcommand{\e}{\mathbf{e}}
\newcommand{\x}{\mathbf{x}}
\newcommand{\xhat}{\hat{\mathbf{x}}}
\newcommand{\vx}{\mathbf{v}}
\newcommand{\vy}{\mathbf{w}}
\newcommand{\y}{\mathbf{y}}
\newcommand{\h}{\mathbf{h}}
\newcommand{\ones}{\mf{1}}
\newcommand{\R}{\mathbb{R}}
\newcommand{\mc}[1]{\mathcal{#1}}
\newcommand{\mf}[1]{\mathbf{#1}}
\newcommand{\bs}[1]{\boldsymbol{#1}}
\newcommand{\nodeset}{\mc{V}}
\newcommand{\edgeset}{\mc{E}}
\newcommand{\revcerseedgeset}{\widetilde{\edgeset}}
\newcommand{\graph}{\mc{G}}
\newcommand{\reversedgraph}{\widetilde{\graph}}
\DeclareMathOperator*{\argmax}{arg\,max}
\DeclareMathOperator*{\argmin}{arg\,min}
\renewcommand{\st}{~~\text{s.t.}~~}
\newcommand{\xreq}{\mathbf{x}^\req}
\newcommand{\xmagreq}{X^\req}
\newcommand\polyreq{\mc{P}_\req}
\newcommand{\nsp}[2]{{\mc{Q}}_{#1}^{(#2)}}
\def\RSet{\mathcal{R}}
\def\RsetInv{\mathcal{R}^{-1}}
\newcommand{\yinit}{\y_{-1}}
\newcommand{\K}{\mathcal{K}}
\newcommand{\F}{\mathcal{F}}
\newcommand{\hatK}{\hat{K}}
\newcommand{\hatF}{\hat{F}}
\newcommand{\hatcalK}{\hat{\mathcal{K}}}
\newcommand{\hatcalF}{\hat{\mathcal{F}}}
\newcommand{\convexhull}[1]{\mathrm{Conv}\big(#1\big)}
\newcommand{\conichull}[1]{\mathrm{Cone}\big(#1\big)}
\def\graph{\mc{G}}
\def\reals{\mathbb{R}}
\newcommand{\factor}[2]{\alpha_{#1}^{#2}}
\newcommand{\neighbor}[1]{\mc{N}_{#1}}
\def\dout{d}
\newcommand{\doutkey}[1]{d_{#1, \mathrm{key}}}
\def\doutmax{d_\mathrm{max}}
\newcommand{\nspfactornode}[2]{\beta_{#1}^{(#2)}}
\newcommand{\nspfactor}[1]{\beta_{#1}}
\newcommand{\keyloc}{\mathcal{V}_{\mathrm{key}}}
\newcommand{\hmax}{T}
\renewcommand{\emptyset}{\varnothing}
\def\ip{j}
\newcommand{\multfactor}{CRR}
\DeclarePairedDelimiter{\norm}{\lVert}{\rVert} 
\newcommand{\req}{\text{req}}
\newcommand{\defender}{defender}
\newcommand{\attacker}{attacker}
\let\oldnl\nl
\newcommand{\nonl}{\renewcommand{\nl}{\let\nl\oldnl}}
\renewcommand{\and}{\text{ and }}
\newcommand{\thegame}{\ac{dDAB} game}
\newcommand{\figref}[1]{Figure~\ref{#1}}
\renewcommand{\algref}[1]{Algorithm~\ref{#1}}
\newcommand{\secref}[1]{Section~\ref{#1}}
\title{Dynamic Adversarial Resource Allocation:\\ The dDAB Game}
\author{
    Yue Guan \textsuperscript{1 *}\quad  
    Daigo Shishika  \textsuperscript{2 *} \quad 
    Jason R. Marden \textsuperscript{3} \\
    \textbf{Michael Dorothy  \textsuperscript{4}  \quad
    Panagiotis Tsiotras \textsuperscript{1} \quad
    Vijay Kumar \textsuperscript{5}}
    \vspace{+5pt}\\
    \textsuperscript{1}
    Georgia Institute of Technology
    \textsuperscript{2}
    George Mason University \quad
    \textsuperscript{3}
    University of California Santa Barbara \\
    \textsuperscript{4}
    United States Army Research Laboratory \quad 
    \textsuperscript{5}
    University of Pennsylvania\\
}
\begin{document}

\maketitle
\begingroup\renewcommand\thefootnote{}
\footnotetext{We gratefully acknowledge the support of ARL grant ARL DCIST CRA W911NF-17-2-0181. The views expressed in this paper are those of the authors and do not reflect the official policy or position of the United States Army, Department of Defense, or the United States Government.}
\endgroup
\begingroup\renewcommand\thefootnote{*}
\footnotetext{The first two authors contributed equally as co-first authors.}
\endgroup


\pagenumbering{arabic}
\begin{abstract}
This work introduces the dynamic Defender-Attacker Blotto (dDAB) game, extending the classical static Blotto game to a dynamic resource allocation setting over graphs. 
In the dDAB game, a \defender{} is required to maintain numerical superiority against   \attacker{} resources across a set of key nodes in a connected graph. 
The engagement unfolds as a discrete-time game, where each player reallocates its resources in turn, with resources allowed to move at most one hop per time step. 
The primary goal is to determine the necessary and sufficient amount of \defender{} resources required to guarantee sustained defense, along with the corresponding strategies. 
To address the central challenge arising from graph-constrained resource reallocation, we conduct a reachability analysis, starting with simplified settings where \attacker{} resources act as a single cohesive group. 
We then extend the framework to allow \attacker{} resources to split and merge arbitrarily,  
and construct \defender{} strategies using superposition principles.
A set-based dynamic programming algorithm is developed to compute the optimal strategies, as well as the minimum amount of \defender{} resources to ensure successful defense. 
The effectiveness of our approach is demonstrated through numerical simulations and hardware experiments on the Georgia Tech Robotarium platform.
\end{abstract}


\section{Introduction}
\label{sec:introduction}

Deploying resources (robots, sensors, or supplies) to appropriate locations at the appropriate time is a fundamental problem in multi-agent systems, often studied as the multi-robot task allocation (MRTA) problem~\cite{korsah2013comprehensive,khamis2015multi}.
In real-world settings, resource allocation or MRTA are performed in a dynamically changing environment. 
Time-varying demand is one of the major sources of dynamics, exemplified by the applications in wireless network~\cite{wireless}, ride-sharing~\cite{bei2018algorithms}, power-grid~\cite{nair2018multi}, and cloud computing~\cite{anuradha2014survey}.

In this work, we study the dynamic resource allocation problem on a graph, where nodes represent physical locations and edges represent the traversability between those locations. 
The focus is on transporting the resources effectively in the environment to satisfy demands that change dynamically. 
Instead of achieving the desired allocation instantly, we require the resources%
\footnote{We use the terms robots and resources interchangeably. The term ``player'', however, is reserved for the entity (the \defender{} or the \attacker{}) that determines the allocation of these robots / resources.} 
to \emph{traverse} through the environment.
Such consideration arises naturally when dealing with embodied agents and resources, such as robots, or autonomous vehicles. 

To stress the dynamic aspect of the problem, we consider demands that are generated by an adversary. 
Specifically, we formulate the problem as a dynamic (turn-based) game played between a blue \defender{} and a red \attacker{}. 
The objective of the \defender{} is to defend a set of key nodes by maintaining its numerical superiority over the \attacker{} resource at these nodes. 
If the \attacker{} outnumbers the \defender{} at any key node, the \defender{} loses the game. 
In that sense, the demand imposed by the attacker is a hard constraint that the \defender{} must continuously satisfy throughout the game.
Note that many other safety-critical applications with dynamic demands (e.g., resilient power grid~\cite{duan2025graph}, wildfire surveillance~\cite{julian2019distributed}, etc.) can be formulated as such a hard-constrained resource allocation problem.

\begin{figure}
    \centering
    \includegraphics[width=0.55\textwidth]{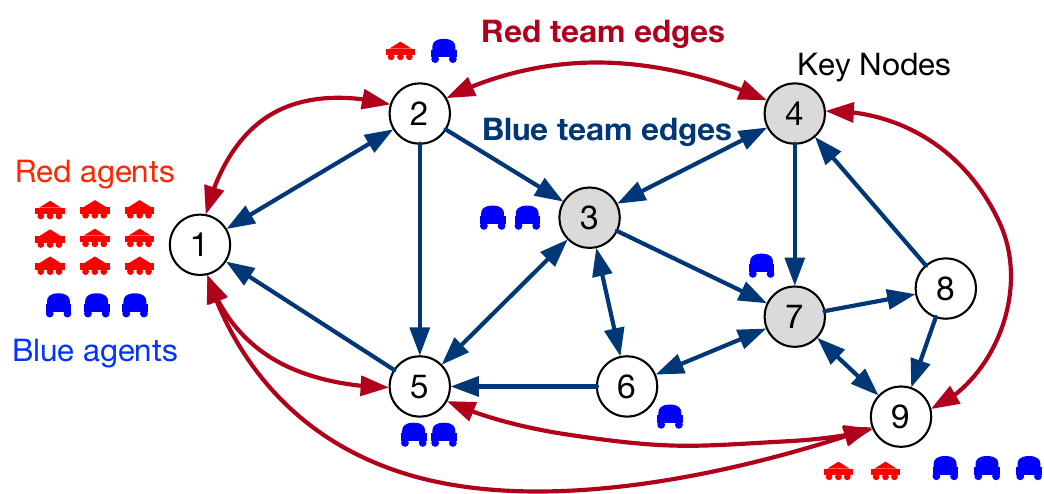}
    \caption{Illustration of the adversarial resource allocation problem.}
    \label{fig:illustration}
    \vspace{-0.2in}
\end{figure}

In this work, we consider centralized strategies. 
Namely, the \defender{} (resp. \attacker{}) decides the next allocation and sends instructions to the robots / resources to follow.
Consequently, the only intelligent agents are the \defender{} and the \attacker{}.
Our formulation also leads to feedback strategies that re-allocate resources based on the system state (the current allocation of the \attacker{} resources and the \defender{} resources). 
The re-allocation is done with all possible next actions of the opposing player in mind. 
This is a major difference from many prior works on resource allocation in the robotics community, where the focus has been either on achieving a desired terminal allocation that is fixed \cite{berman2009optimized,prorok2017impact}, or on scheduling to satisfy a time-varying but known demand (e.g., multiple traveling salesman problem)~\cite{khamis2015multi}.


\subsection{Related Work}

\paragraph*{Population model on graphs:}
The distributed resource allocation problem over a graph environment was proposed in~\cite{berman2009optimized}, where the authors developed stochastic control laws that drive the population of robots to a desired distribution to meet a \textit{static} demand.
The theory was later extended to accommodate heterogeneous robots and tasks with more diverse needs~\cite{prorok2017impact,ravichandar2020strata}.
However, the theoretical analysis in these works focused on the steady-state performance of the system, and a more delicate transient response to dynamically changing conditions was ignored. 
In contrast, our work focuses on the feedback mechanisms for a player to react to external inputs, but with the simplification of being centralized. Our work can be viewed as an ``outer loop'' that updates the desired allocation in response to adversarial actions, which the distributed control laws in \cite{berman2009optimized} can track as an ``inner loop'' at a faster time scale.

\paragraph*{Dynamic resource/task allocation:}
Dynamical aspect of the resource allocation problem has been studied in different ways.
Scheduling is one such formulation that considers tasks that must be completed in sequence \cite{tereshchuk2019efficient}.
On top of an efficient allocation algorithm, an adaptation mechanism is proposed in~\cite{tereshchuk2019efficient} which reacts to robot failures through a ``market-based" optimizer to re-allocate the leftover tasks.
A distributed resource allocation on a graph environment has also been studied with an adaptation mechanism \cite{lerman2006analysis}, where the population dynamics are controlled through the adaptation of individual behaviors based on local sensing.
These works provide scalable within-population interactions, but the adaptation schemes are purely reactive and do not contain any anticipation of the failure or changes that may occur in the future. 
In contrast, this paper emphasizes the between-population (\defender{} resources vs.~\attacker{} resources) strategic interactions, where each player selects its action based on the anticipated optimal reactions from the opposing player.%
\footnote{Note that in safety-critical systems, one can model the environment as an adversarial agent/team that seeks to undermine the performance of the deployed system.}

\paragraph*{Colonel Blotto Games:}
The static version of the adversarial resource allocation problem is commonly formulated as Colonel Blotto game~\cite{roberson2006colonel,powell2009sequential,chandan2020showing,kovenock2018optimal}. 
In the most standard version~\cite{gross1950continuous} of the game, two colonels allocate their resources to multiple locations. 
Whoever allocated more resource wins that location, and each colonel seeks to maximize the number of locations s/he wins. 
Many variants of the Colonel Blotto game have been studied, including asymmetric budget~\cite{roberson2006colonel}, asymmetric information~\cite{paarporn2019characterizing}, etc. 
However, most of the formulations in the existing literature consider static games, which assume that the desired allocation is achieved instantly and thus ignore the dynamics that are involved in the resource transportation. 
Although more recent works have considered dynamical extensions of Colonel Blotto games \cite{konrad2018budget,hajimirsaadeghi2017dynamic,klumpp2019dynamics}, their formulation does not capture the transportation of the resources in the environment.

\paragraph*{Preliminary work:}
The conference version of this work~\cite{shishika2022dynamic} introduced the \emph{dynamic Defender Attacker Blotto~(dDAB) game} that combines the ideas from Colonel Blotto games~\cite{gross1950continuous} and the population dynamics over graphs~\cite{prorok2017impact}. 
The conference version has identified the critical resource ratios (CRR) for a special class of graphs (ring graphs) and proposed a sampling-based algorithm that only provides certificates for the \attacker{}'s victory when the algorithm returns a solution.  
The analysis on the \defender{} side (e.g., necessary and sufficient conditions for the \defender's victory, the \defender's strategies, etc.) was not fully conducted in~\cite{shishika2022dynamic}. 
This paper provides a complete characterization of the dDAB game on any given graph.

\subsection{Contributions}

Our formulation yields feedback strategies that reallocate resources based on the evolving system state—namely, the current locations of the \attacker{} and the \defender{} resources. 
Unlike prior work that yields open-loop strategies against known demand~\cite{berman2009optimized,khamis2015multi}, 
we address the adversarial aspect of the proposed dDAB game by employing a novel reachability-based analysis.
Such approach leads to feedback strategies that reallocate resources with all possible next allocations of the opposing player in mind, thus providing worst-case guarantees.

To handle the game's temporal structure, we develop a set-based dynamic programming algorithm that recursively computes \emph{$k$-step safe sets}—\defender{} allocations that are \emph{necessary and sufficient} to maintain defense for $k$ time steps against any \attacker{} strategy. 
The proposed algorithm explicitly incorporates the traversability constraints of the resources, and exploits the geometric properties of the safe sets for computational efficiency. 
We further mitigate the curse of dimensionality by first analyzing no-splitting \attacker{} strategies, then generalizing to arbitrary strategies via a subteam superposition approach.

Our analysis leads to three key results:
\begin{enumerate}
    \item 
    Identification of the critical amount of \defender{} resources that is \emph{necessary and sufficient} for guaranteed defense over a given graph.
    \item 
    Synthesis of \emph{feedback strategies} that ensure successful defense against any \attacker{} strategy.
    \item 
    Formal proof that the \attacker{} gains no advantage by splitting its resources into subteams, along with \attacker{} strategies that guarantee its earliest victory when defense is infeasible.
\end{enumerate}
These results provide practical guidance for designing and deploying \defender{} robotic systems capable of provably securing a graph against intelligent \attacker{}s, with explicit guarantees on the required amount of resources.

\section{Problem Formulation}
\label{sec:formulation}
The \thegame{} is played between two players: the \defender{} and the \attacker{}.
The environment is represented as a directed graph $\graph = (\nodeset,\edgeset)$, where the $N$ nodes represent locations, and the directed edges represent the traversability among those locations.
We assume that $\graph$ is strongly connected~\cite{berman2009optimized}, i.e., every node is reachable from any other node.%
\footnote{
The assumption of strongly connected graph is used to avoid the degenerate cases with ``sinks" in the graph, which the \defender{} resource cannot get out from once reached. See~\figref{fig:sink_example} in Appendix~\ref{appdx:sink-example} for an example.
}
For notational simplicity, we assume that the two players share the same graph, but the present analysis easily extends to the case where the two players have different edge sets.

To capture the connectivity among the nodes, we define the graph adjacency matrix $A \in \reals^{N \times N}$ as follows:
\begin{equation*}
    \left[A\right]_{ij} = \left\{\begin{array}{cl}
         1 & \text{if $(j,i) \in \edgeset$},   \\
         0 & \text{otherwise}.
    \end{array} \right. 
\end{equation*}
The \textit{out-degree} of node $i$ is denoted as $\dout_i = \sum_j [A]_{ji}$, and its \textit{out-neighbors} is denoted as $\neighbor{i} = \{j\in \nodeset | (i,j) \in \edgeset\}$.

The total amount of resources for the \defender{} and the \attacker{} are denoted by $X\in \reals_{>0}$ and $Y\in \reals_{>0}$, respectively. 
For some time horizon $T$, the allocation of the \defender{}'s resources over the graph at time $t=0,1,\ldots, T$ is denoted by the state vector (allocation vector) $\x_t\in \reals^N$, which lies on a scaled simplex, such that $[\x_t]_i \geq 0$ and $\sum_i [\x_t]_i = X$.
The state vector (allocation vector) $\y_t\in \reals^N$ for the \attacker{} also satisfies the same conditions with $X$ replaced by $Y$.
{We use $\Delta_X$ and $\Delta_Y$ to denote the state space of the \defender{} and the \attacker{}.}
Note that continuous resources ($\x_t$ and $\y_t$ are continuous variables) are considered in this work.%
\footnote{Such an assumption on the state vector simplifies the analysis in \cite{berman2009optimized,prorok2017impact}, however, we will later show that our algorithms accommodate states that take discrete values.}

\subsection{Dynamics}
The major difference from the original Colonel Blotto game is that the \thegame{} is played over multiple time steps, and that the states evolve according to the following discrete-time dynamics:
\begin{equation}
    \x_{t+1} = K_t \x_t\;\;\;\text{and}\;\;\; \y_{t+1} = F_t \y_t,
    \label{eq:dynamics}
\end{equation}
where $K_t$ and $F_t$ represent the \emph{transition matrices} for the \defender{} and the \attacker{}, respectively. 
These matrices are left stochastic (column sum is unity), and their $ij$-th entry can take nonzero values only when $[A]_{ij}=1$. 
These matrices represent the action/control executed by the players.
For example, an action $K_t$ of the \defender{} is admissible if and only if it satisfies the following \emph{linear} constraints:
\begin{alignat}{2}
    &\quad K_t^\top \mf{1} = \mf{1}, && \label{eqn:stochastic-matrix-constraint}\\
    &\quad [K_t]_{ij} \geq 0, \qquad && \forall \; i, j \in \nodeset,
    \label{eqn:non-negativity-constraint}\\
    &\quad [K_t]_{ij} = 0, \qquad && \text{if } A_{ij} = 0.
    \label{eqn:adjacency-constraint}
\end{alignat}
The entry $[K_t]_{ij}$ denotes the fraction of resource on node $j$ to be transferred to node $i$ at the next time step.
We denote the admissible set for the matrices $K_t$ as $\mathcal{K}$, which
depends only on the underlying graph $\graph$ and is time-invariant.
The matrix $F_t$ for the \attacker{} also satisfies similar constraints, and
we denote the set of all admissible matrices $F_t$ as $\F$.%
\footnote{Under the assumption that the two players have the same graph, we have $\F = \K$. 
For consistency, we still use the notations of $\K$ and $\F$ to denote the two action spaces.}

\subsection{Terminal conditions \& Sequential Actions}
Similar to the Colonel Blotto game~\cite{gross1950continuous},
the engagement at each location is modeled solely based on the amount of resources. 
However, we evaluate the game outcome on a subset of nodes $\keyloc \subseteq \nodeset$, which we refer to as the key nodes%
\footnote{In a perimeter defense scenario, the key nodes can be the positions on the perimeter. 
When the \defender{} is defending a high-value asset, the key nodes can be the entrance points to the asset.}.
Specifically, the \defender{} successfully \emph{guards} a key node by allocating at least as much resource as the \attacker{} does, whereas the \attacker{} \emph{breaches} a key node by allocating more than what the \defender{} does.
For the \thegame{}, the \defender{} wants to prevent the \attacker{} from breaching any key node.
In this work, we mainly focus on a finite horizon~$T$.
The game terminates with the \attacker{}'s victory at the earliest time instance $t \in \{0, \ldots, T\}$ at which
\begin{equation}
    \exists \;i \in \keyloc, \st [\y_t]_i > [\x_t]_i.
    \label{eq:terminal_condition}
\end{equation}

The \defender{} wins the game if it can prevent the \attacker{} from achieving condition~\eqref{eq:terminal_condition} for all $t \!\in\! \{0, \ldots, T\}$. 
If the \defender{} can prevent~\eqref{eq:terminal_condition} for all time horizons $T \geq 1$, we say that the \defender{} can defend indefinitely.

The key node formulation provides a generalization to the prior work~\cite{shishika2022dynamic},
in which the \defender{} needs to defend \emph{all} nodes in the graph.

\subsection{Information Structure}
For the information structure, we assume that the players make decisions in sequence.
Specifically, the \defender{} acts first then the \attacker{} acts next, i.e., the \attacker{} selects its action after observing how the \defender{} allocated its resources.
The game outcome is evaluated after the \attacker's move.
To avoid the degenerate scenario where the \attacker{} wins immediately in the first time step, we let the \attacker{} specify its initial allocation $\yinit$, followed by the \defender{} freely picking its distribution $\x_0$ after observing $\yinit$.
The timeline of the dDAB game is presented in ~\figref{fig:time-line}.
In a realistic scenario where the two players make simultaneous actions,
our problem formulation corresponds to a worst-case scenario for the \defender{}.
Importantly, our setting accommodates state feedback strategies in contrast to previous results with constant action (transition) matrices \cite{berman2009optimized,prorok2017impact}.

\begin{figure}
    \centering
    \includegraphics[width=0.65\linewidth]{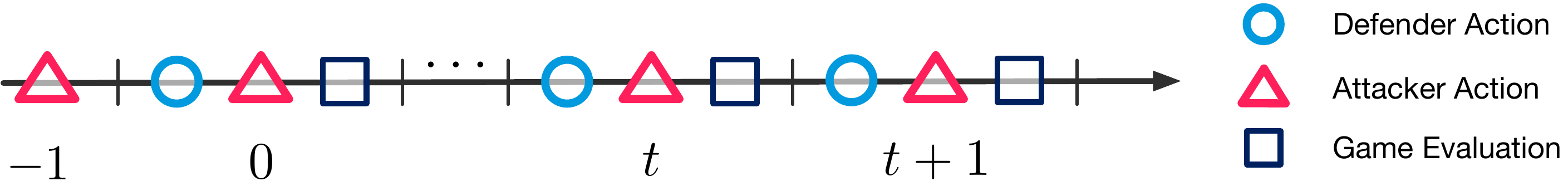}
    \caption{Sequence of events at every time step of the dDAB game. The \defender{} first moves its resources based on the observation of the current \attacker{} allocation. The \attacker{} then observes and reallocates. Finally, the game outcome at this time step is evaluated after the \attacker's move.}
    \label{fig:time-line}
\end{figure}

Finally, we consider centralized strategies in this work.
Specifically, the \defender{} (\attacker{}) serves as a coordinator, who decides the next allocation for its resources.
The allocation instructions, encoded as $K_t$ ($F_t$), are then sent to the resources (robots) to follow.

\subsection{A Simple Example}
\begin{figure*}[b]
    \centering
    \vspace{+5pt}
    \includegraphics[width=0.98\linewidth]{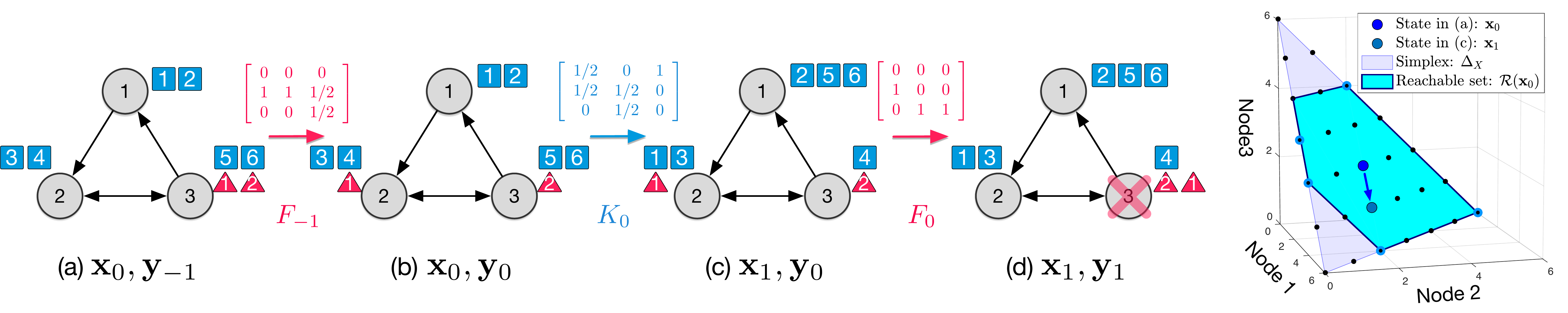}
    \vspace{-0.15in}
    \caption{An illustrative example of a three-node dDAB game, with \attacker{}'s victory at time $t=1$. Self-loop on each node is implied, and all nodes are key nodes. 
    The agents are indexed to illustrate their movements.
    Right-most plot presents the \defender's reachable set from the allocation in (a). The black dots indicate the discrete states if the \defender's resources consists of indivisible units/robots.
    }
    \label{fig:example}
    \vspace{-0.15in}
\end{figure*}

We present a three-node example in \figref{fig:example}, where all nodes are key nodes for simplicity, i.e., $\keyloc = \nodeset$.
In (a), the \attacker{} starts with an initial allocation of $\y_{-1} = [0,0,2]$, while the \defender{} selects $\x_0 = [2, 2, 2]$ as its starting configuration. 
In (b), after observing $\x_0$, the \attacker{} employs the red matrix $F_{-1}$ to update its allocation to $\y_0 = [0,1,1]$. 
In (c), the \defender{} redistributes its own resources via the blue matrix $K_0$.
The states depicted in (c) are $\x_1 = [3,2,1]$ and $\y_0=[0,1,1]$.
Finally, in (d), the \attacker{} observes that $\x_3$ has only one blue robot at node 3 and moves its Robot 1 from node 2 to node 3 to breach the node. 
Consequently, the game terminates at time $t=1$ with \attacker{}'s victory, concluding with the states $\x_1 = [3,2,1]$ and $\y_1 = [0,0,2]$.

\subsection{Research Problems}
Based on the discussion above, an instance of a \thegame{} is defined by: (i)~the available amount of resources $X$ and $Y$, (ii)~the graph $\graph$, and (iii) the required defense horizon.
Given a graph, our goal is to identify the necessary and sufficient amount of resources for the \defender{} to win the game, as well as its corresponding strategies.
To formalize the above goal, we introduce the following multiplicative factor.
\begin{definition}[Critical Resource Ratio]
\label{def:crr}
For a given graph $\graph$ and a time horizon $T$, the Critical Resource Ratio (\multfactor{}), $\factor{T}{} \geq 1$, is the \emph{smallest} positive number such that, if 
    $X \geq \factor{T}{} Y$,
then the \defender{} has a strategy to defend up to time step~$T$ against any admissible \attacker{} strategy that starts at any initial state $\yinit \in \Delta_Y$.
We use $\factor{\infty}{}$ to denote the \multfactor{} that enables the \defender{} to defend \emph{indefinitely}.
\end{definition}

Notice that the CRR defined above is the \emph{necessary and sufficient} amount of \defender{} resources to guarantee defense over the given time horizon for the given graph.

The two main questions we address in this work are:
\begin{problem}
    Given a graph and a finite horizon $T$, what is the \multfactor{} $\factor{T}{}$?
\end{problem}

\begin{problem}
    When $X\geq \factor{T}{}Y$, what is the \defender{} strategy that guarantees defense over $T$ time steps?
    Given insufficient amount of \defender{} resources, what is the optimal \attacker{} strategy to achieve the earliest possible breach?
\end{problem}

\section{Reachable Sets and Required Sets}
\label{sec:reachable-set}

In this section, we study the \defender{}'s allocation configurations that guarantee defense at the current time step and introduce key concepts essential for the subsequent analysis.
Most of the results are drawn from the conference version ~\cite{shishika2022dynamic}
and are included here for completeness.

\subsection{Reachable Sets}
\label{subsec:reachable_sets}
To better predict and understand how the allocation of resources evolves over time, we focus on the possible states that the \defender{} and \attacker{} can reach at the next step, i.e., their reachable sets.
Working with reachable sets offers two main advantages over working directly with the action spaces $\K$ and $\F$:
(i) the dimensionality of the reachable sets is significantly lower than that of the edge sets ($|\nodeset| \ll |\edgeset|$), and
(ii) the reachability analysis circumvents the non-uniqueness of actions that can produce a given transition from $\x_t$ to $\x_{t+1}$.
Since the dynamics of the two players are symmetric, we restrict our analysis to the \defender{}'s reachable sets and its action space $\mathcal{K}$.

\begin{definition}[Reachable Set from a Point]
The reachable set from a single point $\x_t$, denoted as $\RSet(\x_t)$, is the set of all states that the \defender{} can reach at the next time step with an admissible action. 
Formally,
\begin{equation}
    \RSet(\x_t) = \{ \x \;|\; \exists K\in\K \st \x =K \x_t \}.
\end{equation}
\end{definition}

\begin{remark}
    All points in the reachable set satisfy the conservation of resource. 
    That is, for all $\x_{t+1} \in \RSet(\x_t)$, we have that  $\mf{1}^\top \x_{t+1} = \mf{1}^\top \x_{t}$.
\end{remark}

To better understand the properties of the reachable sets, we first examine the structure of the action space.
Under the linear constraints in~\eqref{eqn:stochastic-matrix-constraint}--\eqref{eqn:adjacency-constraint}, the set of admissible actions $\mathcal{K}$ is a bounded polytope in the $|\edgeset|$-dimensional space.
We use the extreme points (vertices) of this polytope to characterize $\mathcal{K}$.

Given the admissible action space $\K$, we define the set of \textit{extreme actions} as
\begin{equation}\label{eqn:extreme-actions}
    \hat{\mathcal{K}} = \big\{K \in \mathcal{K} \;|\; [K]_{ij} \in \{0,1\} \big\}.
\end{equation}
In words, $\hatcalK$ contains all admissible actions $K$ whose entries are either 0 or 1. 
The cardinality of $\hatcalK$ is given by $\big\vert \hatcalK \big\vert = \prod_{j \in \nodeset} \dout_j$, where $\dout_j$ is the out-degree of node $j$.
We use $\ell$ to index the extreme actions in $\hatcalK$, 
i.e. $\hatcalK = \{\hatK^{(\ell)}\}_{\ell=1}^{|\hatcalK|}$.
The following theorem reveals the connection between the extreme actions and the admissible action set.

\begin{restatable}{theorem}{ea}
\label{thm:extreme-action}
The extreme actions defined in~\eqref{eqn:extreme-actions} are the vertices of the polytope $\mathcal{K}$. 
Formally,
\begin{equation}\label{eqn:convex-hull}
    \K = \convexhull{\hatcalK}.
\end{equation}
Consequently, for any admissible action $K \in \K$, 
there is a set of non-negative coefficients $\bm{\lambda}=\{\lambda^{(\ell)}\}_{\ell=1}^{|\hatcalK|}$ such that $\sum_{\ell=1}^{|\hatcalK|} \lambda^{(\ell)}=1$ and
\begin{equation}\label{eqn:action-convex-comb}
    K = \sum_{\ell=1}^{|\hatcalK|} \lambda^{(\ell)} \hatK^{(\ell)}.
\end{equation}
\end{restatable}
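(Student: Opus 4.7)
The plan is to exploit the fact that the constraints defining $\mathcal{K}$ are separable across the columns of $K$. Constraint~\eqref{eqn:stochastic-matrix-constraint} says that each column sums to $1$, constraint~\eqref{eqn:non-negativity-constraint} is entrywise, and constraint~\eqref{eqn:adjacency-constraint} forces entries outside $\neighbor{j}$ of column $j$ to vanish. None of these couples different columns, so $\mathcal{K}$ is a Cartesian product of $N$ independent polytopes, one per column. For each $j$, the $j$-th factor is the set of nonnegative vectors supported on $\neighbor{j}$ whose entries sum to one, i.e. the $(\dout_j-1)$-dimensional simplex over $\neighbor{j}$.

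With that structural observation in hand, I would proceed in three short steps. First, recall that the vertices of the simplex over $\neighbor{j}$ are precisely the standard basis vectors $\mathbf{e}_i$ with $i \in \neighbor{j}$; these are the only $\{0,1\}$-valued points of the simplex. Second, use the elementary fact that the vertex set of a Cartesian product of polytopes is the Cartesian product of the vertex sets. Applying this column by column shows that the vertices of $\mathcal{K}$ are exactly those matrices in which every column is a standard basis vector indexed by an out-neighbor; these are precisely the $\{0,1\}$-matrices in $\mathcal{K}$, i.e., the set $\hat{\mathcal{K}}$ defined in~\eqref{eqn:extreme-actions}. This also immediately yields the cardinality count $|\hat{\mathcal{K}}| = \prod_{j} \dout_j$ stated before the theorem.

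Third, since $\mathcal{K}$ is bounded (entries lie in $[0,1]$) and is defined by finitely many linear constraints, it is a bounded polytope and hence equals the convex hull of its vertices by the Minkowski-Weyl theorem. Combining with the identification of the vertices gives $\mathcal{K} = \mathrm{Conv}(\hat{\mathcal{K}})$, which is~\eqref{eqn:convex-hull}. The convex-combination representation~\eqref{eqn:action-convex-comb} for an arbitrary $K \in \mathcal{K}$ is then just an unpacking of the definition of convex hull (with Carathéodory available if a bound on the support of $\bm{\lambda}$ were desired, although the statement requires none).

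I do not expect a serious obstacle here; the proof is essentially structural. The only point worth handling carefully is making the product-of-simplices decomposition precise, since $K$ is formally a matrix rather than a tuple of column vectors, so I would state an explicit isomorphism $K \leftrightarrow (K_{\cdot,1},\dots,K_{\cdot,N})$ between $\mathcal{K}$ and $\prod_{j=1}^{N}\Delta(\neighbor{j})$ before invoking the vertex-of-product lemma. Everything else reduces to standard convex-polytope facts.
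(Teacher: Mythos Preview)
Your argument is correct, but the paper proves the theorem by a different, more explicit route. Rather than invoking the product-of-simplices structure and Minkowski--Weyl, the paper does a direct double inclusion: the direction $\mathrm{Conv}(\hat{\mathcal{K}})\subseteq\mathcal{K}$ is immediate, and for $\mathcal{K}\subseteq\mathrm{Conv}(\hat{\mathcal{K}})$ it writes down an explicit formula for the coefficients, namely $\lambda^{(\ell)}=\prod_{(j,i)\in\mathcal{I}^{(\ell)}}[K]_{ij}$ where $\mathcal{I}^{(\ell)}$ is the active edge set of $\hat{K}^{(\ell)}$, and then verifies that these are nonnegative, sum to one, and reproduce $K$. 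Your structural approach is cleaner and identifies the vertex set of $\mathcal{K}$ as a byproduct, which is what the theorem statement actually asserts; the paper's approach, by contrast, buys a concrete product formula for $\bm{\lambda}$ in terms of the entries of $K$ (essentially the product-measure construction on $\prod_j \Delta(\mathcal{N}_j)$), though the paper does not appear to exploit this formula downstream---the later action-extraction step solves for $\bm{\lambda}$ via a linear program instead.
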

\begin{proof}
See Appendix~\ref{appdx:reachable-sets}.
\end{proof}

\begin{remark}
The extreme action set $\hatcalK$ depends only on the graph $\graph$, and it only needs to be constructed once.
\end{remark}


The extreme action set for the \attacker{} is denoted as $\hatcalF$ and is defined similarly;
we use $\{\hatF^{(r)}\}_{r=1}^{|\hatcalF|}$ to index the elements of $\hatcalF$.

\subsubsection{Reachable Sets as Polytopes}
The reachable set $\RSet(\x_t)$ is, in fact, a polytope in $\Delta_X$, and it can be viewed as a transformation performed on the action space $\K$.
Formally, we have the following lemma, which is a direct result of Theorem~\ref{thm:extreme-action}.

\begin{lemma}
    \label{lmm:RSet-Polytope}
    Given a point $\x_t$, the reachable set $\RSet(\x_t)$ is a polytope {given by $\RSet(\x_t) = \convexhull{\{\hatK^{(\ell)} \x_t\}_{\ell=1}^{|\hatcalK|}}$.} 
\end{lemma}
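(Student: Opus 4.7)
The plan is to leverage Theorem~\ref{thm:extreme-action} directly: since $\RSet(\x_t)$ is defined as the image of the action polytope $\K$ under the linear map $K \mapsto K\x_t$, and linear maps preserve convex hulls of finite sets, the image is the convex hull of the images of the vertices $\hatK^{(\ell)}$.

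More concretely, I would prove set equality by double inclusion. For the $(\subseteq)$ direction, I would take any $\x \in \RSet(\x_t)$, write $\x = K\x_t$ for some admissible $K \in \K$, invoke Theorem~\ref{thm:extreme-action} to obtain non-negative coefficients $\{\lambda^{(\ell)}\}$ summing to one with $K = \sum_{\ell} \lambda^{(\ell)} \hatK^{(\ell)}$, and then distribute $\x_t$ to get
\begin{equation*}
\x = K\x_t = \sum_{\ell=1}^{|\hatcalK|} \lambda^{(\ell)} \hatK^{(\ell)} \x_t,
\end{equation*}
which exhibits $\x$ as a convex combination of the vertex images $\{\hatK^{(\ell)} \x_t\}$. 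For the $(\supseteq)$ direction, I would take an arbitrary convex combination $\x = \sum_\ell \lambda^{(\ell)} \hatK^{(\ell)} \x_t$ and factor it as $\x = \big(\sum_\ell \lambda^{(\ell)} \hatK^{(\ell)}\big) \x_t$. By Theorem~\ref{thm:extreme-action}, the bracketed matrix lies in $\K$ (convex combinations of extreme actions are admissible), so $\x \in \RSet(\x_t)$ by definition. The conclusion that $\RSet(\x_t)$ is a polytope is then immediate, since it is the convex hull of the finite set $\{\hatK^{(\ell)} \x_t\}_{\ell=1}^{|\hatcalK|}$.

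There is essentially no obstacle here: the work has already been done in Theorem~\ref{thm:extreme-action}, and the remaining argument is just the standard fact that linear images commute with convex hulls of finite sets. The only small subtlety worth flagging in the write-up is that the vertices of $\RSet(\x_t)$ need not be in bijection with the vertices $\hatcalK$ of the action polytope---several distinct extreme actions may produce the same next-state $\hatK^{(\ell)} \x_t$, or some vertex images may lie in the convex hull of the others---but this does not affect the stated equality, which only asserts that $\RSet(\x_t)$ equals the convex hull of these (possibly redundant) points.
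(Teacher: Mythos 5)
Your proposal is correct and follows essentially the same route as the paper: both invoke Theorem~\ref{thm:extreme-action} to write an arbitrary admissible $K$ as a convex combination of extreme actions and then distribute $\x_t$ to exhibit $K\x_t$ as a convex combination of the points $\hatK^{(\ell)}\x_t$. If anything, your write-up is slightly more complete, since you make the reverse inclusion (every convex combination of vertex images is reachable because $\sum_\ell \lambda^{(\ell)}\hatK^{(\ell)} \in \K$) explicit, whereas the paper leaves that direction implicit.
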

\begin{proof}
    For any $\x \in \RSet(\x_t)$, by definition, there is an action $K_t \in \K$, such that $\x = K_t \x_t$.
Based on the characterization of $\K$ in~\eqref{eqn:action-convex-comb}, this $\x$ can be represented as the following convex combination for some $\bm{\lambda}$:
\begin{equation}
    \label{eqn:characterization-reachable-set}
    \x = K \x_t =\bigg(\sum_{\ell=1}^{|\hatcalK|} \lambda^{(\ell)} \hatK^{(\ell)} \bigg)\x_t = \sum_{\ell=1}^{|\hatcalK|} \lambda^{(\ell)} \left(\hatK^{(\ell)} \x_t\right). 
\end{equation} 
Define $\vx^{(\ell)}_{t+1} = \hatK^{(\ell)} \x_t$ to be the state achieved by propagating $\x_t$ with the extreme action $\hatK^{(\ell)}$.
Then, the convex hull of these vertices gives us the polytope $\RSet(\x_t) = \convexhull{\{\vx_{t+1}^{(\ell)}\}_{\ell=1}^{|\hatcalK|}}$, which describes the set of states that the \defender{} at $\x_t$ can achieve at the next time step. 
\end{proof}

\figref{fig:example} presents an example of the reachable set for a three-node graph.
For discrete resources (robots) as illustrated in \figref{fig:example}(a), the \defender{} is able to achieve any discrete state (black dots) is are contained in the reachable set.


Using the same argument, we can compute the \attacker{} reachable set via $\RSet(\y_t) = \convexhull{\{\vy^{(r)}_{t+1}\}_r}$, where the vertices are given by $\vy^{(r)}_{t+1} = \hatF^{(r)} \y_t$ for $r=1,2,...,|\hatF|$.

Since any state in $\RSet(\x_t)$ can be reached at the next time step from $\x_t$, we view this polytope as the action space for   the \defender{} at state $\x_t$.
This definition of the action space resolves the two issues raised at the beginning of this section: dimensionality and nonuniqueness.

\subsubsection{Reachable Sets of Polytopes}
We extend the definition of the reachable set of a single point to the reachable set of a (potentially unbounded) set, which will play a significant role in our later analysis of the optimal strategies.
\begin{definition}[Reachable Set from a Set]
Given a set $P \subseteq \R^{n}_{\geq 0}$, the reachable set from this set, denoted as $\RSet(P)$, is the set of all states that the player can reach at the next time step with an admissible action starting from a state within $P$. 
Formally, 
\begin{equation}
    \RSet(P) =\{\x = K \x_t \;|\; K \in \K, ~ \x_t \in P\}.
\end{equation}
\end{definition}

\begin{lemma}
    \label{lmm:RSet-poly}
    Given a polytope $P$, the reachable set $\RSet(P)$ is also a polytope.
\end{lemma}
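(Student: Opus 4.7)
The plan is to prove the stronger identity
\[
\RSet(P) = \convexhull{\{\hat{K}^{(\ell)} \mf{p}_i : \ell = 1, \ldots, |\hatcalK|,\; i = 1, \ldots, m\}},
\]
where $\mf{p}_1, \ldots, \mf{p}_m$ are the vertices of the polytope $P$; since the right-hand side is the convex hull of finitely many points, the lemma follows at once.

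First I would establish the inclusion $\RSet(P) \subseteq \convexhull{\{\hat{K}^{(\ell)} \mf{p}_i\}_{\ell, i}}$. Given any $\x = K\x_t$ with $K \in \K$ and $\x_t \in P$, Theorem~\ref{thm:extreme-action} lets me write $K = \sum_\ell \lambda^{(\ell)} \hat{K}^{(\ell)}$, and the vertex decomposition of $P$ gives $\x_t = \sum_i \mu_i \mf{p}_i$. Distributing the product yields
\[
\x = \sum_{\ell, i} \lambda^{(\ell)} \mu_i \, \hat{K}^{(\ell)} \mf{p}_i,
\]
a convex combination since the coefficients are nonnegative and $\sum_{\ell, i} \lambda^{(\ell)} \mu_i = \bigl(\sum_\ell \lambda^{(\ell)}\bigr)\bigl(\sum_i \mu_i\bigr) = 1$.

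For the reverse inclusion, each extreme point $\hat{K}^{(\ell)} \mf{p}_i$ already lies in $\RSet(\mf{p}_i) \subseteq \RSet(P)$ by Lemma~\ref{lmm:RSet-Polytope}, so it suffices to show that $\RSet(P)$ is \emph{convex}. Establishing this convexity is the main obstacle, because the map $(K, \x_t) \mapsto K\x_t$ is bilinear, and the bilinear image of a convex product set is not convex in general. The key structural fact I would exploit is that constraints~\eqref{eqn:stochastic-matrix-constraint}--\eqref{eqn:adjacency-constraint} are column-separable: the $j$-th column of any admissible $K$ ranges independently over a fixed simplex $\Delta_j \subset \R^N$ depending only on the graph. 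Given $\y^{(k)} = K_k \x^{(k)} \in \RSet(P)$ for $k = 1, 2$ and $\lambda \in [0, 1]$, I would set $\x = \lambda \x^{(1)} + (1-\lambda) \x^{(2)} \in P$ (by convexity of $P$) and build $K$ column-by-column using the mass-weighted average
\[
[K]_{\cdot, j} = \frac{\lambda [\x^{(1)}]_j \, [K_1]_{\cdot, j} + (1-\lambda)[\x^{(2)}]_j \, [K_2]_{\cdot, j}}{\lambda [\x^{(1)}]_j + (1-\lambda)[\x^{(2)}]_j},
\]
choosing any admissible column whenever the denominator vanishes. Every resulting column is a convex combination of elements of $\Delta_j$ and therefore remains in $\Delta_j$, so $K \in \K$; a direct expansion shows $K\x = \lambda \y^{(1)} + (1-\lambda) \y^{(2)}$, which yields convexity.

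Combining convexity with the containment of all $\hat{K}^{(\ell)} \mf{p}_i$ in $\RSet(P)$ gives the reverse inclusion, completing the identity above and hence proving that $\RSet(P)$ is a polytope.
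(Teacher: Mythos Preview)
Your argument is correct and reaches the same identity the paper targets, but two differences are worth flagging.

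First, the paper's proof allows $P$ to be an \emph{unbounded} polyhedron: it invokes the resolution theorem to write $\x_t = \sum_r \theta^{[r]} \x^{[r]} + \sum_m \phi^{[m]} \h^{[m]}$ with extreme rays $\h^{[m]}$, and then asserts $\RSet(P) = \convexhull{\{\hat K^{(\ell)} \x^{[r]}\}} + \conichull{\{\hat K^{(\ell)} \h^{[m]}\}}$ as a Minkowski sum. Your proof treats $P$ as the convex hull of its vertices, i.e.\ bounded. Since the lemma is later applied to the Q-sets, which are unbounded (they are cut out by inequalities $\x \geq \xreq$ with no upper bound), you would need the conic part as well. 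The fix is mechanical: carry the ray term $\sum_m \phi_m \h^{[m]}$ through your forward-inclusion computation, and for the reverse inclusion observe that $\RSet(P)$ is closed under adding nonnegative multiples of each $\hat K^{(\ell)} \h^{[m]}$ because $P$ is closed under adding $\h^{[m]}$.

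Second, where the paper simply says ``it is straightforward to show'' the displayed identity, you actually supply the missing step: convexity of $\RSet(P)$. Your mass-weighted column construction is exactly the right tool here, and it is the same device the paper deploys in other places (the overall-action formula of Lemma~\ref{lmm:overall-action} and the reverse-action formula of Lemma~\ref{lmm:inverse-action}). So your route is not really different---it is the paper's route with the ``straightforward'' part made explicit---but your convexity argument is a genuine contribution over what the paper writes down.
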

\begin{proof}
    Due to the resolution theorem~\cite{bertsimas1997introduction}, any point $\x_t \in P$ can be expressed as 
    \begin{equation*}
        \x_t = \sum_{r=1}^R \theta^{[r]} \x^{[r]} + \sum_{m=1}^M \phi^{[m]} \h^{[m]},
    \end{equation*}
    where $\{\x^{[r]}\}_r$ is the set of vertices of $P$ and $\{\h^{[m]}\}_m$ is the set of extreme rays.
    Then, it is straightforward to show that 
    \begin{equation*}
        \RSet(P) = \convexhull{\{\hat{K}^{[\ell]} \x^{[r]}\}_{\ell, r}} + \conichull{\{\hat{K}^{[\ell]} \h^{[m]}\}_{\ell, m}},
    \end{equation*}
    where $\mathrm{Cone}$ represents the conic hull of the rays and the summation is a Minkowski sum.
\end{proof}

\subsection{Required Set}
In this section, we consider the \defender{}'s selection of $\x_{t+1}$ after observing the \attacker{}'s current allocation $\y_t$. 
The goal is to identify the set of the \defender{}'s feasible states $\x_{t+1}$ such that the \attacker{}, upon observing $\x_{t+1}$, cannot select an action $\y_{t+1} \in \RSet(\y_t)$ that results in a successful breach of any key node.

For the \defender{} to defend every key node at time $t+1$, it is necessary and sufficient that the allocation vector $\x_{t+1}$ matches or outnumbers $\y_{t+1}$ at every key node $i \in \keyloc$: 
\begin{equation}
    [\x_{t+1}]_i\geq[\y_{t+1}]_i \quad \forall i \in \keyloc.
    \label{eq:vert_condition}
\end{equation}

Since the \attacker{} takes its action after observing the \defender{}'s allocation $\x_{t+1}$, the question is whether the \defender{} can select an allocation $\x_{t+1}$ such that \eqref{eq:vert_condition} is true for all $\y_{t+1} \in \RSet(\y_t)$.
This observation leads to the following condition for selecting $\x_{t+1}$ to guarantee defense at time $t+1$:
\begin{equation}
    \label{eqn:xreq_opt_R-1}
    [\x_{t+1}]_i \geq \max_{\y_{t+1} \in \RSet(\y_t)} [\y_{t+1}]_i \quad  \forall i \in \keyloc.
\end{equation}

Since $ \RSet(\y_t)$ is a bounded polytope (Lemma~\ref{lmm:RSet-Polytope}), the optimization  $\max_{\y_{t+1} \in \RSet(\y_t)} [\y_{t+1}]_i$ can be viewed as a linear program, 
whose optimum is attained at one of the vertices of $\RSet(\y_t)$. 
Consequently, we define the minimum required \defender{} resources at $t+1$ as $\xreq_{t+1}$, whose elements are
\begin{equation}\label{eqn:x_req}
    [\xreq_{t+1}]_i = \left \{ 
    \begin{array}{ll}
       \max_r \left[ \vy_{t+1}^{(r)} \right]_i \qquad  &  \text{if }i \in \keyloc\\
        0 & \text{otherwise}
    \end{array}
    \right .
    ,
\end{equation}
where $\big\{\vy^{(r)}_{t+1}\big\}_r = \big\{\hatF^{(r)} \y_t\big\}_r$ are the vertices of $\RSet(\y_t)$.
Then, the condition in~\eqref{eqn:xreq_opt_R-1} can be expressed in the following (component-wise) vector inequality form 
\begin{equation}
    \label{eqn:xreq_opt_R}
    \x_{t+1} \geq \xreq_{t+1}.
\end{equation}

\begin{remark}
The \defender's minimum required resource at the \emph{next} time step, $\xreq_{t+1}=\xreq_{t+1}(\y_t)$, is a function of the \attacker{}'s \emph{current} state, $\y_t$.
\end{remark}

We now claim that the \defender{} can guarantee defense at $t\!+\!1$ by selecting $\x_{t+1}$ inside the polytope $\polyreq(\y_t)$, which is defined as follows.

\begin{definition}[Required Set]
    Given the \attacker{}'s allocation $\y_t$ at time $t$,  the required set for the \defender{} at time $t+1$ is defined as:
    \begin{equation}\label{eqn:poly_req}
        \polyreq(\y_t) \triangleq \{ \x_{t+1} \;|\; [\x_{t+1}]_i\geq [\xreq_{t+1}(\y_t)]_i,\; \forall\; i\in \nodeset \}.
    \end{equation}
\end{definition}

\begin{proposition}
    \label{prop:p-req}
    The condition $\x_{t+1} \in \polyreq(\y_{t})$ is necessary and sufficient for the \defender{} to defend time step $t+1$. 
\end{proposition}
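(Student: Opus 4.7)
The plan is to prove both directions by directly unpacking the definition of $\polyreq(\y_{t-1})$ and invoking the polytope structure of the \attacker{}'s reachable set from Lemma~\ref{lmm:RSet-Polytope}.

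For sufficiency, I would assume $\x_t \in \polyreq(\y_{t-1})$, which by \eqref{eqn:poly_req} means $[\x_t]_i \geq [\xreq_t(\y_{t-1})]_i$ for every node $i \in \nodeset$. The key observation is that
\begin{equation*}
    [\xreq_t(\y_{t-1})]_i \;=\; \max_r [\vy_t^{(r)}]_i \;=\; \max_{\y_t \in \RSet(\y_{t-1})} [\y_t]_i,
\end{equation*}
where the second equality holds because $\RSet(\y_{t-1})$ is the convex hull of the vertices $\{\vy_t^{(r)}\}_r$ (Lemma~\ref{lmm:RSet-Polytope}) and the linear functional $\y \mapsto [\y]_i$ attains its maximum over a bounded polytope at one of its vertices. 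Hence for every admissible \attacker{} move $\y_t \in \RSet(\y_{t-1})$ the componentwise inequality $\x_t \geq \y_t$ holds, so the termination condition \eqref{eq:terminal_condition} cannot be met and the \defender{} survives time step $t$.

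For necessity, I would argue by contrapositive: if $\x_t \notin \polyreq(\y_{t-1})$, then there exists some node $i^\ast \in \nodeset$ with $[\x_t]_{i^\ast} < [\xreq_t(\y_{t-1})]_{i^\ast}$. By the definition \eqref{eqn:x_req}, this componentwise maximum is attained by some extreme action $\hatF^{(r^\ast)} \in \hatcalF$, so the candidate move $\y_t^\ast \triangleq \hatF^{(r^\ast)} \y_{t-1}$ is an admissible \attacker{} state in $\RSet(\y_{t-1})$ satisfying $[\y_t^\ast]_{i^\ast} > [\x_t]_{i^\ast}$. Because the \attacker{} observes $\x_t$ before moving (per the timeline in \figref{fig:time-line}), it can select precisely this $\y_t^\ast$, triggering \eqref{eq:terminal_condition} at node $i^\ast$ and winning.

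The argument is essentially definitional, so I do not anticipate a significant technical obstacle. The one subtlety worth making explicit is the reduction of the componentwise max over the full polytope $\RSet(\y_{t-1})$ to a max over its finite vertex set $\{\vy_t^{(r)}\}_r$; this is where Lemma~\ref{lmm:RSet-Polytope} is doing real work, and it is also what guarantees that the attacker's winning response in the necessity direction is a concrete admissible action rather than an abstract supremum point.
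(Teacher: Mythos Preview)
Your proposal is correct and follows essentially the same approach as the paper. The paper's proof is terser and leans on the discussion preceding the proposition (around equations \eqref{eqn:xreq_opt_R-1}--\eqref{eqn:x_req}) for the vertex-reduction step, whereas you make that step explicit inside the proof; but the logical content is identical.
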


\begin{proof}
From~\eqref{eqn:xreq_opt_R-1} and the definition of the \attacker{}’s reachable set, the \defender{} can guarantee that key node $i$ is defended against all feasible \attacker{} allocations at $t\!+\!1$ if it allocates at least $[\xreq_{t+1}]_i$ to that node. This establishes \emph{sufficiency}.

Suppose the \defender{} allocates $[\x_{t+1}]i < [\xreq_{t+1}]_i$. Then, the condition~\eqref{eqn:xreq_opt_R-1} is violated, and there exists a vertex $\vy_{t+1}^{(r)}$ of the \attacker{}’s reachable set such that $[\vy_{t+1}^{(r)}]_i > [\x_{t+1}]_i$. 
In this case, after observing the \defender{}'s allocation, the \attacker{} can select a feasible action (e.g., $\hatF^{(r)}$) to reach $\vy_{t+1}^{(r)} \in \RSet(\y_t)$ and breach key node $i$. This establishes \emph{necessity}.
\end{proof}

\begin{remark}
    \label{rmk:preq-equivalent}
    The required set $\polyreq(\y_t)$ can be equivalently expressed as
    \begin{align*}
        \polyreq(\y_t) &= \{\x_{t+1} \big \vert [\x_{t+1}]_i \geq \max_{\y_{t+1} \in \RSet(\y_t)} [\y_{t+1}]_i, \; \forall i \in \keyloc\}\\
        &= \{\x_{t+1} \big\vert [\x_{t+1}]_i \geq \max_{F_{t}\in \F} \; [F_t \y_{t}]_i, \; \forall i \in \keyloc\}.
    \end{align*}
\end{remark}

In other words, as long as the \defender{} can reach an allocation within the required set $\polyreq(\y_t)$, it is guaranteed to be safe at the next time step $t\!+\!1$. Conversely, if the \defender{} fails to achieve such an allocation, it will lose the game at $t\!+\!1$ against a rational \attacker{}.

There are three possible reasons the \defender{} may fail to reach the required set:

\begin{enumerate}
    \item  \textit{Insufficient resources}: 
    The total required resource, $\xmagreq_{t+1} = \mathbf{1}^\top \xreq_{t+1}$, depends on the graph $\graph$ and the current \attacker{} allocation $\y_t$.  
    If $\xmagreq_{t+1} > X$, then no \defender{} strategy can guarantee defense, regardless of the current allocation $\x_t$.
    \item  \textit{Suboptimal strategy}: An allocation within the required set is feasible from the current \defender{} state $\x_t$, but the \defender{} selects a bad next allocation outside of $\polyreq(\y_t)$.
    \item Bad current allocation:
    The \defender{} has sufficient total resource ($X \ge \xmagreq_{t+1}$), but its current state $\x_t$ does not allow it to reach any point in the required set. That is, $\RSet(\x_t) \cap \polyreq(\y_t) = \emptyset$.
\end{enumerate}

\begin{figure}[b]
    \centering
    \includegraphics[width = 0.35\textwidth]{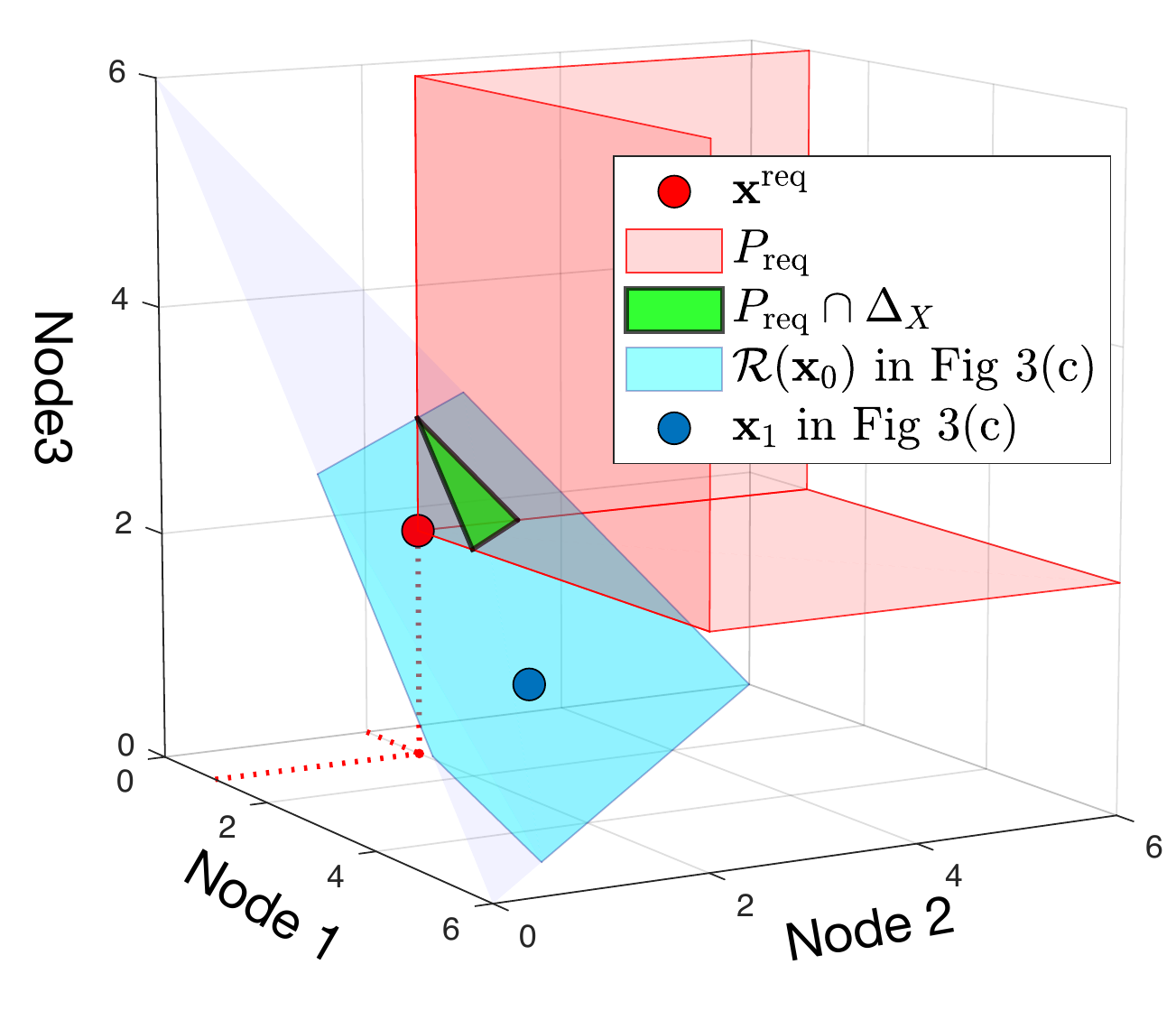}
    \vspace{-0.1in}
    \caption{Illustration of the required set $\polyreq$ for the graph in~\figref{fig:example} with $\y_1 = [0,1,1]^\top$ and $X=6$.
    We conclude that $\xreq_1=[1,2,2]^\top$ and hence $\xmagreq=4$. The red surface shows the boundary of $\polyreq$.}
    \label{fig:required_set}
\end{figure}

\subsection{Example}
For the \attacker{} configuration $\y_0$ in \figref{fig:example}(c), the required \defender{} allocation at the next time step, denoted $\xreq_1$, can be determined as follows.
At \textbf{node~1}, the \attacker{} can place at most one robot by moving Robot 2 from node 3.
At \textbf{node~2}, the \attacker{} can place two robots by keeping Robot 1 at node 2 and moving Robot 2 there.
Similarly, at \textbf{node 3}, two robots can be placed.
Thus, we can conclude that $\xreq_1 = [1, 2 , 2]$.
One can observe that the \defender{} allocation $\x_1=[3, 2, 1]$ in (c) falls short for $\xreq_1$ at node 3, and thus is breached by the \attacker{} in (d).

\figref{fig:required_set} highlights in green the intersection between the required set $\polyreq(\y_0)$ and the \defender{}'s state space $\Delta_X$.
For the action sequence presented in \figref{fig:example}, one can observe that the selected \defender{} state $\x_1$ lies outside the green intersection, leading to the \defender{}'s defeat at time $t=1$. 
This failure corresponds to case (2) suboptimal strategy, since the reachable set (cyan) does intersect the required set, but the \defender{} chooses a suboptimal allocation.

\subsection{Degenerate Parameter Regime}
Notice that $\xmagreq_{t+1}=\mf{1}^\top \xreq_{t+1}$ depends on $\graph$ and $\y_{t}$.
Clearly, the \defender{} does not have a strategy to guarantee defense if $\xmagreq_{t+1}>X$.
This immediately leads to the following result.

\begin{proposition}[Degenerate Parameter Regime \cite{shishika2022dynamic}]
\label{thm:degenerate_regime}
Let $\doutkey{i}$ denote the number of key nodes connected to node $i$, i.e.,
$\doutkey{i} \triangleq |\{j \in \keyloc \mid (i,j) \in \edgeset\}|$.
Define $\doutmax \triangleq \max_{i \in \nodeset} \doutkey{i}$
as the maximum number of key nodes adjacent to any node.
If the total resources satisfy
\begin{equation}
    X < \doutmax Y,
\end{equation}
then the \attacker{} can win the game at time step $t=0$.
\end{proposition}

\begin{proof}
Let $i^\star \in \nodeset$ be a node achieving $\doutkey{i^\star} = \doutmax$.
Consider the strategy where the \attacker{} initializes the game with $\yinit$ that concentrates all its resources at node $i^\star$.
At time $t=0$, the \defender{} allocates resources $\x_0$ to the key nodes adjacent to $i^\star$.
There are $\doutmax$ such key nodes.
To prevent an immediate loss, the \defender{} must allocate at least $Y$ units of resources to each of these nodes, which requires at least $\doutmax Y$ units of resources.

If $X < \doutmax Y$, then there exists at least one neighboring key node $j$ such that $x_0^j < Y$.
After observing $\x_0$, the \attacker{} moves all resources from $i^\star$ to node $j$ and captures it immediately.
Hence the \attacker{} can win at time $t=0$.
\end{proof}

Based on Proposition~\ref{thm:degenerate_regime}, the rest of the paper focuses on the case where 
\begin{equation*}
    X \geq \doutmax Y.
\end{equation*}

\section{No-Splitting Attacker}\label{sec:nsp}
This section develops the tools to construct optimal feedback strategies for the \defender{} and the \attacker. 
We first focus on the case where the \attacker{} resources move as a single concentrated group (a blob).
In~\secref{sec:split}, we generalize the results to scenarios where the \attacker{} splits its resource into multiple subgroups.
Note that throughout this paper, we do not restrict the \defender{}'s allocation strategies.

Let $\e_i \in \mathbb{R}^n$ be the unit vector with its $i$-th element equal to one.
In the sequel, we use the shorthand ${\y^{(i)} = Y\e_i}$ to denote the \attacker{} allocation that is fully concentrated on node $i$.

\begin{definition}[No-Splitting Attacker Strategy] 
A no-splitting \attacker{} strategy selects its action exclusively from the set of extreme actions, i.e., $F_t \in \hatcalF$ for all $t$. 
\end{definition}

\noindent Under a no-splitting \attacker{} strategy, if the initial \attacker{} allocation is fully concentrated, then the \attacker{}'s state remains concentrated for all time steps, i.e., $\y_t = \y^{(i_t)} \triangleq Y\e_{i_t}$, where $i_t$ denotes the location of the \attacker{}'s concentrated resources.

\subsection{K-step Safe Sets}
The key challenge we address in this section is the fact that selecting a state in the required set does not imply that the \defender{} can do so again in the next time step.\footnote{See Fig.~4 of \cite{shishika2022dynamic} for an example of a situation where single-step defense can be achieved, but the \attacker{} is able to breach after two steps.} 
As an example, for the \defender{} to guarantee defense over the next \emph{two} time steps starting from the current allocations $\x_t$ and $\y_t$, the following condition is necessary:
\begin{subequations}
    \begin{eqnarray}
        &\exists\; \x_{t+1}\in\polyreq(\y_t)\cap\RSet(\x_t)\;\text{s.t.}\;
        \label{eqn:2-step-a}
        \\
        &\polyreq(\y_{t+1})\cap\RSet(\x_{t+1})\neq\emptyset,\; \forall\; \y_{t+1}\in\RSet(\y_t).
        \label{eqn:2-step-b}
    \end{eqnarray}
\end{subequations}
In words, \eqref{eqn:2-step-a} ensures that the \defender{} selects a reachable allocation $\x_{t+1}$ that guarantees defense at $t+1$, 
while \eqref{eqn:2-step-b} ensures that the selected $\x_{t+1}$ allows transition to a state $\x_{t+2}$ that guarantees defense at $t+2$ against all potential \attacker{} allocation $\y_{t+1} \in \RSet(\y_t)$.

The need to account for all possible future \attacker{} actions quickly renders this formulation intractable beyond two steps.
To overcome this, we introduce the notion of $k$-step safe sets, later formalized as Q-sets.

\begin{definition}[$k$-step Safe Set]
\label{def:Q_set}
Let $\y_{t-1} = \y^{(i)}$ be the \attacker{} state concentrated at node $i$. The set $\nsp{k}{i}  \subseteq \mathbb{R}_{\geq 0}^{|\nodeset|}$ is defined such that $\x_t \in \nsp{k}{i}$ if and only if there exists a \defender{} strategy that can defend against any no-splitting \attacker{} strategy through time step $t+k$ (inclusive).%
\footnote{There may exist a strategy for the \attacker{} that breaches the system at time $t+k+1$, but not before.} 
\end{definition}

While the above definition introduces the concept of multi-step safe sets, we next present a recursive formulation for constructing Q-sets in the following theorem.

\begin{theorem}
    \label{thm:recursive-form}
    The following recursive expression provides the $k$-step safe set:
    \begin{subequations}
        \label{eqn:Q-set-def}
        \begin{alignat}{2}
            \nsp{0}{i} &= \polyreq(\y^{(i)}),
            \label{eqn:Q-set-def-0}\\
            \nsp{k}{i} &= \Big\{\x \big\vert  \x \in \polyreq(\y^{(i)}) \wedge \RSet(\x) \cap \nsp{k-1}{\ip} \ne \emptyset ~\forall \ip \in \neighbor{i} \Big\},   \qquad \forall ~ k \geq 1,  \label{eqn:Q-set-def-k}
        \end{alignat}
    \end{subequations}
    {where $\neighbor{i}$ is the set of out-neighbors of node $i$.} 
\end{theorem}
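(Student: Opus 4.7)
The plan is to prove Theorem~\ref{thm:recursive-form} by induction on $k$, translating the recursive set identity into a two-step backward induction that alternates the defender's existential choice with the attacker's universal choice over the (finitely many) no-splitting successor states.

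For the base case $k=0$, the claim $\nsp{0}{i} = \polyreq(\y^{(i)})$ is exactly Proposition~\ref{prop:p-req} specialized to $\y_{t-1}=\y^{(i)}$: the condition $\x_t \in \polyreq(\y^{(i)})$ is necessary and sufficient for the defender to survive the immediately following time step, and no further obligations are imposed when $k=0$.

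For the inductive step, I would fix $k\geq 1$, assume the characterization for $k-1$, and prove both inclusions. For sufficiency, suppose $\x_t \in \polyreq(\y^{(i)})$ and $\RSet(\x_t)\cap \nsp{k-1}{j} \neq \emptyset$ for every $j \in \neighbor{i}$. Since $\y_{t-1}=\y^{(i)}$, Proposition~\ref{prop:p-req} guarantees defense at time $t$. The attacker then plays some $\y_t = \y^{(j)}$ with $j \in \neighbor{i}$ (these are the only no-splitting states reachable from $\y^{(i)}$, because a no-splitting move uses an extreme action $\hatF^{(r)}$ and the adjacency constraint~\eqref{eqn:adjacency-constraint} only allows mass to move to out-neighbors). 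By hypothesis the defender can pick some $\x_{t+1} \in \RSet(\x_t)\cap\nsp{k-1}{j}$; by the inductive hypothesis applied at time $t+1$, this $\x_{t+1}$ suffices to defend against any no-splitting continuation through time $(t+1)+(k-1)=t+k$. Conversely, for necessity, if $\x_t$ defends against every no-splitting strategy through time $t+k$, then in particular it defends at time $t$, giving $\x_t \in \polyreq(\y^{(i)})$; and for each $j \in \neighbor{i}$ the attacker may choose $\y_t=\y^{(j)}$, so the defender must have some response $\x_{t+1}\in\RSet(\x_t)$ from which defense through $t+k$ is still possible, which by the inductive hypothesis means $\x_{t+1}\in\nsp{k-1}{j}$, hence the intersection is nonempty.

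The main obstacle is disciplined bookkeeping of the information pattern: the defender's move at time $t+1$ is selected \emph{after} observing the attacker's move to $\y^{(j)}$, so the quantifier structure is ``$\forall j \in \neighbor{i}\ \exists \x_{t+1}\in\RSet(\x_t)\cap\nsp{k-1}{j}$,'' which is exactly the nonempty-intersection condition in~\eqref{eqn:Q-set-def-k}. A secondary subtlety is ensuring that restricting the attacker to extreme actions $\hatF^{(r)}$ indeed captures all no-splitting transitions out of $\y^{(i)}$ — this follows because a stochastic matrix applied to a single unit mass at $i$ produces a single unit mass at $j$ exactly when the $j$-th entry of column $i$ equals $1$, forcing the column to be a standard basis vector and hence the action to be extreme — and that all such $j$ lie in $\neighbor{i}$ by~\eqref{eqn:adjacency-constraint}. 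No other step requires work beyond invoking Proposition~\ref{prop:p-req} and the inductive hypothesis.
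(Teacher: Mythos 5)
Your proposal is correct and takes essentially the same route as the paper, which organizes the identical backward induction into two lemmas (sufficiency and necessity of the Q-sets) rather than a single induction proving both inclusions at once. The only nuance worth noting is that invoking Proposition~\ref{prop:p-req} for the \emph{necessity} direction against a no-splitting \attacker{} implicitly uses the fact (handled by a footnote in the paper) that if some possibly-split $\y_t \in \RSet(\y^{(i)})$ breaches a node, then the concentrated move $\y^{(\ip)}$ to that node breaches it as well — immediate here since $\polyreq(\y^{(i)})$ simply requires at least $Y$ on every out-neighbor of $i$.
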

\begin{proof}
    The proof proceeds by induction. 
    The base case in~\eqref{eqn:Q-set-def-0} matches the required condition for single-step safety (cf. Proposition~\ref{prop:p-req}).
    For the inductive step,~\eqref{eqn:Q-set-def-k} requires that:
    $\x$ defends against an \attacker{} concentrated at node $i$ at the current step, i.e., $\x \in \polyreq(\y^{(i)})$; 
    meanwhile, 
    for every possible next \attacker{} node $j \in \neighbor{i}$, the \defender{} can reach the corresponding $(k\!-\!1)$-step safe set, i.e., $\RSet(\x) \cap \nsp{k-1}{j} \ne \emptyset$.
    Formal proofs of sufficiency (Lemma~\ref{lmm:Q-suf}) and necessity (Lemma~\ref{lmm:Q-nes}) are provided below.
\end{proof}

\begin{lemma}[Sufficiency of Q-sets]
    \label{lmm:Q-suf}
    Let the Q-sets be defined in~\eqref{eqn:Q-set-def}, and suppose that the \attacker{} starts with $\y_{t-1} = \y^{(i)}$. 
    Then, by having $\x_t\in \nsp{k}{i}$, the \defender{} can defend at least until time step $t+k$. 
\end{lemma}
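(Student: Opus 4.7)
My plan is a straightforward induction on $k$, using Proposition~\ref{prop:p-req} as the base-case engine and the recursive structure of~\eqref{eqn:Q-set-def-k} to pass from $k-1$ to $k$.

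For the base case $k=0$, the definition gives $\nsp{0}{i} = \polyreq(\y^{(i)})$, so the hypothesis $\x_t \in \nsp{0}{i}$ together with $\y_{t-1} = \y^{(i)}$ is exactly the sufficient condition of Proposition~\ref{prop:p-req}, and the defender survives time step $t$, i.e., until time $t+0$.

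For the inductive step I would split the argument into two pieces. First, defense at the current step $t$ is covered because $\nsp{k}{i} \subseteq \polyreq(\y^{(i)})$ directly from~\eqref{eqn:Q-set-def-k}, so Proposition~\ref{prop:p-req} applies at once. Second, I need to extend the defense to steps $t+1, \ldots, t+k$. The attacker now picks $\y_t$; since the strategy is no-splitting, $\y_t = \y^{(j)} = Y\e_j$ for some $j \in \nodeset$, and the adjacency constraint~\eqref{eqn:adjacency-constraint} together with the extreme-action characterization forces $(i,j) \in \edgeset$, hence $j \in \neighbor{i}$. The second clause of~\eqref{eqn:Q-set-def-k} then guarantees $\RSet(\x_t) \cap \nsp{k-1}{j} \neq \emptyset$, so the defender may choose $\x_{t+1}$ from this intersection. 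Applying the inductive hypothesis to the pair $(\y_t, \x_{t+1}) = (\y^{(j)}, \x_{t+1})$ with $\x_{t+1} \in \nsp{k-1}{j}$ yields defense until $(t+1)+(k-1) = t+k$, closing the induction.

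I do not anticipate a real obstacle; the proof is essentially an unwinding of the recursion. The one point to handle carefully is the matching between the quantifier $\forall\, \ip \in \neighbor{i}$ appearing in the definition of $\nsp{k}{i}$ and the set of nodes that a no-splitting attacker can reach in one step from $i$. These two sets must coincide (modulo a self-loop convention at $i$) for the induction to go through cleanly, because the defender, who moves first, must be prepared against every such $j$ that the attacker might subsequently select. Once this matching is pinned down, no quantitative estimates or geometric arguments are needed.
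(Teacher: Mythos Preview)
Your proposal is correct and follows essentially the same inductive argument as the paper: the base case invokes Proposition~\ref{prop:p-req} directly, and the inductive step uses $\nsp{k}{i}\subseteq\polyreq(\y^{(i)})$ for current-step safety together with the nonempty intersection $\RSet(\x_t)\cap\nsp{k-1}{j}$ to push the defender into a state where the hypothesis applies. The only cosmetic difference is that the paper indexes the step as $k\to k+1$ rather than $k-1\to k$; your side remark about the self-loop convention is harmless, since $\neighbor{i}$ is exactly the set of nodes reachable by a no-splitting attacker from $i$ in one step.
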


\begin{proof}
We provide a proof by induction. 

\textit{Base Case:} When $k = 0$, we have $\x_t \in \nsp{0}{i} = \polyreq(\y^{(i)})$.
From Proposition~\ref{prop:p-req}, the defense is guaranteed at time $t$.

\textit{Inductive hypothesis:} 
Suppose that for some $k\geq 1$, and for all $i \in \nodeset$, the condition $\x_t \in \nsp{k}{i}$ guarantees defense until time $t+k$ given that $\y_{t-1} = \y^{(i)}$.

\textit{Induction:}
Given $\y_{t-1} = \y^{(i)}$,
we let $\x_t \in \nsp{k+1}{i}$. 
Under the no-splitting strategy, suppose that the \attacker{} selects $\y_t = \y^{(\ip)}$, for some arbitrary $\ip \in \neighbor{i}$.
The \attacker{} cannot immediately win with this (or any other) action since the \defender{} state $\x_t \in \nsp{k+1}{i} \subseteq \polyreq(\y_{t-1})$ guarantees defense at time step $t$.
After observing $\y_t = \y^{(\ip)}$, we let the \defender{} select its next state so that $\x_{t+1} \!\in\! \RSet(\x_t) \cap \nsp{k}{\ip}$.
This new selection is reachable since $\x_t \in \nsp{k+1}{i}$ ensures that $\RSet(\x_t) \cap \nsp{k}{\ip} \ne \emptyset$ (from~\eqref{eqn:Q-set-def-k}).
After the \defender{}'s action, we are at a situation where $\y_t = \y^{(\ip)}$ and $\x_{t+1} \in \nsp{k}{\ip}$.
From the inductive hypothesis, the \defender{} can defend another $k$ steps from this time on. 
The \defender{} can thus defend until time step $t\!+\!k\!+\!1$.

\end{proof}

\begin{lemma}[Necessity of Q-sets]
    \label{lmm:Q-nes}
    Let the Q-sets be defined in~\eqref{eqn:Q-set-def}, and suppose that the \attacker{} starts with $\y_{t-1} = \y^{(i)}$. 
    If $\x_t\notin \nsp{k}{i}$, the \attacker{} can win the game before or at time step $t+k$. 
\end{lemma}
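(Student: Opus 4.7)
The plan is to prove necessity by induction on $k$, mirroring the structure of the sufficiency proof but now specifying an explicit attacker strategy whenever the defender falls outside $\nsp{k}{i}$.

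\textbf{Base case ($k=0$).} If $\x_t \notin \nsp{0}{i} = \polyreq(\y^{(i)})$, then by the necessity direction of Proposition~\ref{prop:p-req}, after the defender commits to $\x_t$ the attacker (currently at $\y_{t-1}=\y^{(i)}$) has an admissible move $\y_t \in \RSet(\y^{(i)})$ that breaches some node. So the attacker wins at time $t$, which is $\leq t+0$.

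\textbf{Inductive step.} Assume the claim holds for $k-1$: whenever $\y_{s-1}=\y^{(j)}$ and $\x_s \notin \nsp{k-1}{j}$, the attacker wins by step $s+k-1$. Now suppose $\y_{t-1}=\y^{(i)}$ and $\x_t \notin \nsp{k}{i}$. By the defining condition~\eqref{eqn:Q-set-def-k}, exactly one of the following two failure modes occurs:
\begin{itemize}
    \item[(A)] $\x_t \notin \polyreq(\y^{(i)})$. Then as in the base case, the attacker wins immediately at time $t \leq t+k$.
    \item[(B)] $\x_t \in \polyreq(\y^{(i)})$, but there exists some $\ip \in \neighbor{i}$ with $\RSet(\x_t)\cap \nsp{k-1}{\ip}=\emptyset$.
\end{itemize}
In case (B), the attacker's strategy is to move its entire mass from $i$ to $\ip$; this is admissible because $\ip\in\neighbor{i}$, and it is realized by an extreme action in $\hatcalF$. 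So $\y_t=\y^{(\ip)}$. Since $\x_t \in \polyreq(\y^{(i)})$, no breach occurs at step $t$. Now the defender must pick the next state $\x_{t+1}$ from $\RSet(\x_t)$; but by the choice of $\ip$ this entire set is disjoint from $\nsp{k-1}{\ip}$, so \emph{every} admissible defender response satisfies $\x_{t+1}\notin \nsp{k-1}{\ip}$. Setting $s=t+1$ and $j=\ip$, the inductive hypothesis applies and guarantees the attacker wins by step $(t+1)+(k-1)=t+k$.

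\textbf{Main subtlety.} The only delicate point is confirming that the dichotomy above is exhaustive and that the attacker's move in case (B) is legal and forces the defender into the bad configuration \emph{for all} admissible responses, not merely for one. The first follows directly from the logical form of~\eqref{eqn:Q-set-def-k} (a conjunction whose negation is the disjunction (A)$\lor$(B)); the second uses the fact that $\RSet(\x_t)\cap \nsp{k-1}{\ip}=\emptyset$ is a property of the entire reachable set, so the defender cannot escape regardless of which $\x_{t+1}\in\RSet(\x_t)$ it picks. Once these are in place, the induction closes cleanly and yields the necessity direction, completing the proof of Theorem~\ref{thm:recursive-form} when combined with Lemma~\ref{lmm:Q-suf}.
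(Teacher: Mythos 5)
Your proof is correct and takes essentially the same route as the paper's: induction on $k$, with the base case resting on the necessity direction of Proposition~\ref{prop:p-req} and the inductive step negating~\eqref{eqn:Q-set-def-k} into the two failure modes, then moving the concentrated attacker to the offending neighbor $\ip$ so that every defender response lands outside $\nsp{k-1}{\ip}$ and the inductive hypothesis applies. The only (cosmetic) difference is that the paper adds a footnote in the base case noting the breaching move can always be taken concentrated, so the winning strategy is no-splitting as in the definition of the $k$-step safe set; omitting this does not affect the validity of your argument for the lemma as stated.
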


\begin{proof}
We prove this lemma via an inductive argument.

\textit{Base case:} Suppose $\x_t \notin \nsp{0}{i} = \polyreq(\y^{(i)})$. 
Then, by the construction of $\polyreq(\y^{(i)})$, there exists $\ip \in \neighbor{i}$ such that $\y_t = \y^{(\ip)}$ defeats $\x_t$ on node $\ip$.%
\footnote{
If $\x_t \notin \polyreq(\y^{(i)})$, we know that there exists at least one $\y_t \in \RSet(\y^{(i)})$ that breaches $\x_t$, and this $\y_t$ is not necessarily a concentrated configuration. 
Suppose this (potentially split) $\y_t$ defeats $\x_t$ on node $\ip$. 
Since we are starting from a concentrated state $\y_{t-1} = \y^{(i)}$, the \attacker{} can move all its resource to the same node $\ip$, and this concentrated state would also breach node $\ip$.
}
This corresponds to a \defender{} defeat at time $t$.

\textit{Inductive hypothesis:} Suppose that, for all $i$, $\x_t \notin \nsp{k}{i}$ implies that the \attacker{} with state $\y_{t-1} = \y^{(i)}$ can win the game before or at time step $t+k$.

\textit{Induction:}
Let the \attacker{} start with $\y_{t-1} = \y^{(i)}$ and the \defender{} select $\x_t \notin \nsp{k+1}{i}$. 
From the definition of $\nsp{k+1}{i}$, we have either of the following two cases: 
(i) $\x_t \notin \polyreq(\y^{(i)})$, which leads to an immediate defeat at $t$; or
(ii) there exists $\ip \in \neighbor{i}$, such that $\RSet(\x_t) \cap \nsp{k}{\ip} = \emptyset$.
In the latter case, the \attacker{} can move to $\y_t = \y^{(\ip)}$. 
Then, for all possible next \defender{} allocation $\x_{t+1} \in \RSet(\x_0)$, we have that $\x_{t+1} \notin \nsp{k}{\ip}$.
From the inductive hypothesis, the \defender{} will be defeated within 
$k$ steps from this time $t+1$. Thus, the \attacker{} can win the game before or at time step $t+k+1$.

\end{proof}

Next, we present two important properties of the Q-sets.
\begin{remark}\label{rem:monotonicity}
For a fixed node $i \in \nodeset$, the sequence $\big\{\nsp{k}{i}\big\}_{k}$ is a decreasing sequence of sets. Formally, for all $i\in \nodeset$ and $k \geq 0$,
    \begin{equation}
        \label{eqn:Q-set-monotonicity}
        \nsp{k+1}{i} \subseteq \nsp{k}{i}.
    \end{equation}
\end{remark}
The above remark follows directly from the definition of the Q-sets. 
That is, if the \defender{} can defend $k+1$ steps from some state, then it can clearly defend $k$ steps.

\begin{restatable}{theorem}{py}
\label{thm:Q-polytope}
All Q-sets are polytopes. 
\end{restatable}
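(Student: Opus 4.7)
I would prove this by induction on $k$, showing that every $\nsp{k}{i}$ is cut out by finitely many linear inequalities. The base case is immediate from the definition $\nsp{0}{i} = \polyreq(\y^{(i)}) = \{\x \mid [\x]_j \geq [\xreq(\y^{(i)})]_j,\ \forall j \in \nodeset\}$, an intersection of halfspaces. For the inductive step, assume $\nsp{k-1}{j}$ is a polytope for every $j \in \nodeset$ and rewrite the recursion as
\begin{equation*}
    \nsp{k}{i} \;=\; \polyreq(\y^{(i)}) \,\cap\, \bigcap_{j \in \neighbor{i}} S_j, \qquad S_j \;:=\; \{\x \mid \RSet(\x) \cap \nsp{k-1}{j} \ne \emptyset\}.
\end{equation*}
Since a finite intersection of polytopes is a polytope, it is enough to show that each $S_j$ is polytopic.

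The content of the proof therefore concentrates on a single lemma: for any polytope $Q$, the preimage set $\{\x \mid \RSet(\x) \cap Q \ne \emptyset\}$ is polytopic. The naive approach of taking the union $\bigcup_\ell \{\x \mid \hatK^{(\ell)}\x \in Q\}$ over extreme actions fails, since $Q$ can meet the interior of $\RSet(\x)$ without containing any vertex of it, so one does not reduce to finitely many linear preimages. My fix is a flow-variable linearization: for $K \in \K$ and $\x \in \R^N_{\geq 0}$, introduce $p_{ij} = [K]_{ij}[\x]_j$, the mass transferred along edge $(j,i)$. I would then show that $\x' = K \x$ for some $K \in \K$ if and only if there exist $p_{ij} \geq 0$ with $p_{ij} = 0$ whenever $A_{ij}=0$, satisfying $\sum_i p_{ij} = [\x]_j$ for every $j$ and $[\x']_i = \sum_j p_{ij}$ for every $i$. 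The forward direction is a direct calculation; for the converse one sets $[K]_{ij} = p_{ij}/[\x]_j$ when $[\x]_j > 0$, and when $[\x]_j = 0$ fills column $j$ with any stochastic vector supported on $\neighbor{j}$, which exists because $\graph$ is strongly connected.

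All of these constraints are jointly linear in $(\x,\x',p)$, so the set $\Phi := \{(\x,\x') \mid \x' \in \RSet(\x)\}$ is the coordinate projection of a polyhedron and therefore itself polyhedral. Consequently,
\begin{equation*}
    S_j \;=\; \pi_\x\bigl(\Phi \cap (\R^N \times \nsp{k-1}{j})\bigr),
\end{equation*}
which is a projection of the intersection of two polyhedra and hence polyhedral by Fourier--Motzkin elimination. Intersecting with $\polyreq(\y^{(i)})$ closes the induction.

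\textbf{Where the difficulty sits.} The only nontrivial part is the key lemma, and its difficulty is the bilinearity of $K\x$ in the pair $(K,\x)$: one cannot describe $\{(\x, K\x) \mid K \in \K\}$ directly as a polyhedron in $(\x, K)$ coordinates. The flow reformulation is the essential trick, because the new variables $p_{ij}$ absorb the product $[K]_{ij}[\x]_j$ and render the entire graph of $\RSet$ polyhedral; after that the result is a routine consequence of polyhedral projection.
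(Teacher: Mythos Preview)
Your proof is correct but follows a genuinely different route from the paper's. You linearize the bilinear constraint $\x' = K\x$ by introducing flow variables $p_{ij} = [K]_{ij}[\x]_j$, which makes the graph $\Phi = \{(\x,\x') : \x' \in \RSet(\x)\}$ a projected polyhedron, and then obtain each $S_j$ as a further coordinate projection. The paper instead introduces the \emph{reversed graph} $\reversedgraph$ (adjacency $A^\top$) and proves in Lemmas~\ref{lmm:inverse-action}--\ref{lmm:inverse-rset} that the inverse reachable set $\RsetInv(Q)$ coincides with the \emph{forward} reachable set $\widetilde{\RSet}(Q)$ on $\reversedgraph$; since Lemma~\ref{lmm:RSet-poly} already shows that forward reachable sets of polytopes are polytopes (as convex/conic hulls of extreme-action images of the vertices and rays), the induction closes immediately. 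Your flow reformulation is more elementary and entirely self-contained, requiring no reversed-graph machinery, but it yields only an implicit half-space description via Fourier--Motzkin elimination. The paper's route buys an explicit vertex/ray representation of each $\RsetInv(\nsp{k-1}{j})$ in terms of the extreme actions of $\reversedgraph$, which is exactly what powers the numerical Q-Prop algorithm in Section~\ref{sec:algorithm}; so the reversed-graph detour is doing double duty as both a proof device and a computational primitive.
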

We delay the proof of Theorem~\ref{thm:Q-polytope} to the algorithmic section, where we introduce additional tools to characterize and efficiently construct the Q-sets.

\subsection{Indefinite Defense}
The recursive definition of the Q-sets in~\eqref{eqn:Q-set-def} can be viewed as an operator mapping from $\big(2^{\Delta_X} \big)^{|\nodeset|}$ to itself, where $2^S$ denotes the power set of set~$S$.
Consequently, \eqref{eqn:Q-set-def} can be viewed as an iterative algorithm, and its fixed point(s) is therefore of great interest to study.
Note that a fixed point of  \eqref{eqn:Q-set-def} is an element in $\big(2^{\Delta_X} \big)^{|\nodeset|}$.

\begin{definition}[Indefinite Safe Set]
    \label{def:Q_inf}
    We define the indefinite safe sets $\nsp{\infty}{i} \subseteq \Delta_X$ for $i\in\nodeset$ as follows:
    \begin{equation}
        \label{eqn:q-inf-def}
        \nsp{\infty}{i} = \bigcap_{k \geq 0} \nsp{k}{i}.
    \end{equation}
\end{definition}

\begin{remark}
    Since the Q-sets are nested (descending), the above definition is equivalent to $\nsp{\infty}{i} = \lim_{k \to \infty} \nsp{k}{i}$.
\end{remark}

\begin{remark}
    The indefinite safe sets are either all empty or all nonempty. In the first case, the \defender{} cannot defend indefinitely with a finite amount of resource.   
\end{remark}

The first natural question is whether the collection of indefinite safe sets defined in~\eqref{eqn:q-inf-def} is a fixed point of the recursive formula in~\eqref{eqn:Q-set-def}.
\begin{restatable}{theorem}{fp}
    \label{thm:fixed-point}
   If the indefinite safe sets defined in~\eqref{eqn:q-inf-def} are nonempty, they satisfy the following fixed point relation for all nodes $i \in \nodeset$:
    \begin{equation}
        \label{eqn:q-inf-fixed-point}
        \nsp{\infty}{i} \!= \!\Big\{\x \big\vert \x \in \polyreq(\y^{(i)}) \and \RSet(\x) \cap \nsp{\infty}{\ip} \ne \emptyset ~\forall \ip \in \neighbor{i} \Big\}.
    \end{equation}
\end{restatable}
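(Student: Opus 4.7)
The plan is to prove the fixed-point identity by establishing set inclusion in both directions. Write $T_i$ for the right-hand side of~\eqref{eqn:q-inf-fixed-point}, namely the set of all $\x$ with $\x \in \polyreq(\y^{(i)})$ and $\RSet(\x) \cap \nsp{\infty}{\ip} \ne \emptyset$ for every $\ip \in \neighbor{i}$. By the hypothesis (together with the remark right after Definition~\ref{def:Q_inf}), every $\nsp{\infty}{\ip}$ is nonempty, so the condition defining $T_i$ is not vacuous.

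First I would handle the forward inclusion $T_i \subseteq \nsp{\infty}{i}$. Let $\x \in T_i$. By the monotonicity in Remark~\ref{rem:monotonicity}, $\nsp{\infty}{\ip} \subseteq \nsp{k}{\ip}$ for every $k \geq 0$ and every $\ip$, so
$$\RSet(\x) \cap \nsp{k}{\ip} \;\supseteq\; \RSet(\x) \cap \nsp{\infty}{\ip} \;\ne\; \emptyset.$$
Combined with $\x \in \polyreq(\y^{(i)}) = \nsp{0}{i}$, a straightforward induction on $k$ using the recursion~\eqref{eqn:Q-set-def-k} yields $\x \in \nsp{k+1}{i}$ for every $k \geq 0$, and hence $\x \in \bigcap_{k \geq 0} \nsp{k}{i} = \nsp{\infty}{i}$.

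The reverse inclusion $\nsp{\infty}{i} \subseteq T_i$ is where the real work lies. Suppose $\x \in \nsp{\infty}{i}$; then $\x \in \nsp{k+1}{i}$ for every $k \geq 0$, which via~\eqref{eqn:Q-set-def-k} immediately gives $\x \in \polyreq(\y^{(i)})$ and also $\RSet(\x) \cap \nsp{k}{\ip} \ne \emptyset$ for every $\ip \in \neighbor{i}$ and every $k \geq 0$. What I need is the stronger assertion $\RSet(\x) \cap \nsp{\infty}{\ip} \ne \emptyset$, i.e.\ nonemptiness must survive passage to the infinite intersection. This is precisely where Theorem~\ref{thm:Q-polytope} (each $\nsp{k}{\ip}$ is a polytope, hence closed) and Lemma~\ref{lmm:RSet-Polytope} ($\RSet(\x)$ is a polytope, hence compact) become indispensable. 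The family $\{\RSet(\x) \cap \nsp{k}{\ip}\}_{k \geq 0}$ is a nested sequence of nonempty closed subsets of the compact set $\RSet(\x)$, so by the finite intersection property (Cantor's intersection theorem) its intersection is nonempty. Interchanging intersections,
$$\emptyset \;\ne\; \bigcap_{k \geq 0}\bigl(\RSet(\x) \cap \nsp{k}{\ip}\bigr) \;=\; \RSet(\x) \cap \bigcap_{k \geq 0} \nsp{k}{\ip} \;=\; \RSet(\x) \cap \nsp{\infty}{\ip},$$
so $\x \in T_i$, as required.

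The main obstacle is this topological passage to the limit in the reverse inclusion: without compactness a nested sequence of nonempty sets can limit to the empty set, so the identity truly depends on the (deferred) Theorem~\ref{thm:Q-polytope} and on the boundedness of $\RSet(\x)$. Once that closedness/compactness infrastructure is available, the remaining work is just bookkeeping on the recursion~\eqref{eqn:Q-set-def-k} together with monotonicity.
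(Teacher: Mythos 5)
Your proposal is correct and follows essentially the same route as the paper: a double-inclusion argument in which the easy direction uses monotonicity of the Q-sets plus the recursion, and the nontrivial direction ($\nsp{\infty}{i}$ inside the right-hand side) passes nonemptiness to the limit via compactness of $\RSet(\x)$ and closedness of the nested sets $\nsp{k}{\ip}$. The only cosmetic difference is that the paper isolates this compactness step as a standalone lemma proved by a convergent-subsequence argument, while you invoke Cantor's intersection theorem directly — the underlying idea is identical.
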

\begin{proof}
    See Appendix~\ref{appdx-sec:fixed-point}.
\end{proof}

In the following theorem, we formalize the natural conjecture that indefinite safe sets  guarantee an indefinite defense for the \defender{}.

\begin{theorem}
    \label{thm:q_inf}
    If $\nsp{\infty}{i}\neq\emptyset$,
    then $\x_t \in \nsp{\infty}{i}$ is necessary and sufficient for indefinite defense given that the \attacker{} is at $\y_{t-1} = \y^{(i)}$.
\end{theorem}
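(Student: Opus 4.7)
The plan is to prove the two directions separately, leveraging the fixed-point characterization of $\nsp{\infty}{i}$ from Theorem~\ref{thm:fixed-point} for sufficiency and the nested-set structure in Remark~\ref{rem:monotonicity} together with Lemma~\ref{lmm:Q-nes} for necessity.

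For \emph{sufficiency}, I would argue that $\x_t \in \nsp{\infty}{i}$ allows the \defender{} to maintain the invariant ``$\x_\tau \in \nsp{\infty}{i_\tau}$ whenever $\y_{\tau-1} = \y^{(i_\tau)}$'' for all $\tau \geq t$. The base case is the hypothesis. For the inductive step, assume the invariant holds at some time $\tau$. Since $\nsp{\infty}{i_\tau} \subseteq \polyreq(\y^{(i_\tau)})$ (by the fixed-point expression in Theorem~\ref{thm:fixed-point}), Proposition~\ref{prop:p-req} ensures the \defender{} is not breached at time $\tau$. After the \attacker{} (constrained to no-splitting) selects some $\y_\tau = \y^{(i_{\tau+1})}$ with $i_{\tau+1}\in\neighbor{i_\tau}$, the fixed-point relation guarantees that $\RSet(\x_\tau) \cap \nsp{\infty}{i_{\tau+1}} \ne \emptyset$, so the \defender{} can choose $\x_{\tau+1}$ inside this non-empty intersection, closing the induction. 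Since the invariant persists forever, the \defender{} defends indefinitely.

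For \emph{necessity}, I would use a contrapositive, short argument. Suppose $\x_t \notin \nsp{\infty}{i} = \bigcap_{k \geq 0} \nsp{k}{i}$. Then there exists some finite $k_0 \geq 0$ such that $\x_t \notin \nsp{k_0}{i}$. By Lemma~\ref{lmm:Q-nes}, the \attacker{} starting at $\y_{t-1} = \y^{(i)}$ has a no-splitting strategy that wins no later than time step $t + k_0$. Hence the \defender{} cannot defend indefinitely, establishing necessity.

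The main obstacle I anticipate is subtle but not technical: ensuring that the sufficiency argument is genuinely against \emph{every} admissible no-splitting \attacker{} trajectory (not merely the one chosen greedily by the \attacker{} at each step). The induction above handles this because the choice $i_{\tau+1} \in \neighbor{i_\tau}$ is arbitrary, and the fixed-point property in Theorem~\ref{thm:fixed-point} provides a reachable safe state for \emph{every} such neighbor, so the \defender{}'s feedback policy is well-defined for all attacker realizations. A secondary minor point is that we must invoke Theorem~\ref{thm:fixed-point}'s non-vacuous hypothesis --- since we assumed $\nsp{\infty}{i} \ne \emptyset$ (hence all $\nsp{\infty}{j}$ are non-empty by the remark following Definition~\ref{def:Q_inf}), the fixed-point identity is available and the intersection with each $\nsp{\infty}{\ip}$ is indeed non-empty as required.
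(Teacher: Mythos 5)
Your proposal is correct and follows essentially the same route as the paper's proof: sufficiency by induction using the fixed-point relation of Theorem~\ref{thm:fixed-point} (the inclusion $\nsp{\infty}{i}\subseteq\polyreq(\y^{(i)})$ plus non-empty reachable intersection with $\nsp{\infty}{\ip}$ for every neighbor $\ip$), and necessity via $\x_t\notin\nsp{k}{i}$ for some finite $k$ together with Lemma~\ref{lmm:Q-nes}. No substantive differences from the paper's argument.
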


\begin{proof}
    The necessity is straightforward. If $\x_t \notin \nsp{\infty}{i}$, then $\x_t \notin \nsp{k}{i}$ for some finite $k$. 
    From the necessity of the $k$-step safe sets, we know that the \defender{} will be defeated within $k$ steps.

    For the sufficiency, suppose at time step $t$, the system is at the state $\y_{t-1} = \y^{(i)}$ and $\x_t \in \nsp{\infty}{i}$.
    Since $\nsp{\infty}{i} \subseteq \polyreq(\y^{(i)})$, the \defender{} can defend at least the current time step $t$.
    Next, suppose the \attacker{} moves to $\y_{t} = \y^{(\ip)}$, where $\ip \in \neighbor{i}$.
    From~\eqref{eqn:q-inf-fixed-point}, there is a state $\x_{t+1} \in \nsp{\infty}{\ip}$ that is reachable from $\x_t$.
    Since $\nsp{\infty}{\ip} \subseteq \polyreq(\y^{(\ip)})$, the defender can also defend the time step $t+1$. 
    Through mathematical induction, one can easily argue that being in $\nsp{\infty}{i}$ when $\y_{t-1} = \y^{(i)}$ guarantees indefinite defense for all \emph{no-splitting} \attacker{} strategies.
    
\end{proof}

The conditions on the graph that guarantee convergence of the iterative algorithm in~\eqref{eqn:Q-set-def} as well as the conditions for the existence of such fixed point(s) is an ongoing research. 
Note that not all graphs have such a fixed point, for example, a sink graph (see Figure~\ref{fig:sink_example} in Appendix~\ref{appdx:sink-example}) does not have one, since it requires infinite \defender{} resources to guard indefinitely. 
Empirically, we found that for all strongly-connected and undirected graphs, the iterative algorithm in~\eqref{eqn:Q-set-def} converges within $N$ iterations, where $N=|\nodeset|$ is the number of nodes.
A follow-up work would be to establish the convergence guarantees.

\subsection{Q-Set Propagation}
The Q-set propagation process is described in \algref{alg:Q-prop}, which takes three inputs: the graph environment $\mathcal{G}$, the \attacker{} total resource  $Y$, and the horizon of the game~$T$.
We assume that the players do not consider their performance beyond~$T$, and therefore, Q-sets are only computed up to this horizon.
The algorithm applies the recursion in~\eqref{eqn:Q-set-def} to construct the Q-sets for each node. 
In practice, a numerically efficient implementation uses an equivalent but computationally friendly formulation~\eqref{eqn:q-set-comp} in Section~\ref{sec:algorithm}.

The iterative construction terminates if the Q-sets converge, as checked in line~4 of the algorithm\footnote{
Since all Q-sets are polytopes, one can simply check the vertices (and extreme rays) for convergence.
}. 
In this case, we can conclude that the \defender{} has a strategy to defend \textit{indefinitely} against all no-splitting \attacker{} strategies. 
The output $k_\infty$ gives the smallest finite number such that $\nsp{k_\infty}{i}=\nsp{\infty}{i}$ for all $i\in\nodeset$.
For the remainder of the paper, when \algref{alg:Q-prop} converges, we refer to the converged Q-sets as $\{\nsp{\infty}{i}\}_i$ for notational simplicity.

\begin{algorithm}[ht]
\caption{Q-Prop}
\label{alg:Q-prop}
\SetAlgoLined
\SetKwInput{KwInputs}{Inputs}
\KwInputs{Graph $\graph$, \attacker{} total resource $Y$, game horizon $\hmax$\;}
Set $k_\infty = \infty$ and set $\nsp{0}{i} = \polyreq(\y^{(i)})$ ~for all $i \in \nodeset$\;
\For{$k = 1$ to $\hmax$}{
    Construct $\nsp{k}{i}$ using~\eqref{eqn:q-set-comp} for all $i \in \nodeset$\;
    \If{$\nsp{k}{i} = \nsp{k-1}{i}$ for all $i \in \nodeset$}{
    $k_{\infty} = k-1$\;
    \textbf{Break}\;
    }
}
\textbf{Return:} $\{\nsp{k}{i}\}_{i,k}$,  $k_{\infty}$
\end{algorithm}



\subsection{K-step Strategies}
The proof of Theorem~\ref{thm:recursive-form} provides a guideline for the strategies that the \defender{} and the \attacker{} would deploy under the no-splitting assumption. 
We first summarize the \defender{} strategy in the following two algorithms.

\begin{algorithm}[htb]
\caption{Initial Defender Allocation (against No-Splitting Attacker)}
\label{alg:defender-init-nsp}
\SetAlgoLined
\SetKwInput{KwInputs}{Inputs}
\KwInputs{Graph $\graph$, 
\defender{} total resource $X$,
\attacker{} total resource $Y$,
\attacker{} initial allocation $\yinit = \y^{(i_{-1})}$,
game horizon $\hmax$\;}

Construct Q-sets via~\algref{alg:Q-prop}\;
$ k_{\mathrm{max}, 0} \gets \argmax_k \left\{k \leq \min\{\hmax, k_\infty\} | \Delta_X \cap \nsp{k}{i_{-1}} \ne \emptyset \right\}$ 
\Comment{{\footnotesize\texttt{find the longest defense time}}}

$\x_{0} \gets$ any element in $\nsp{k_{\max,0}}{i_{-1}}$

\textbf{Return:} Initial allocation $\x_{0}$, guaranteed defense time $k_{\max,0}$
\end{algorithm}
\begin{algorithm}[htb]
\caption{Feedback Defender Strategy (against No-Splitting Attacker)}
\label{alg:defender-strategy-nsp}
\SetAlgoLined
\SetKwInput{KwInputs}{Inputs}
\KwInputs{Q-sets,
previous \defender{} allocation $\x_{t-1}$, 
previous \attacker{} allocation $\y_{t-1} = \y^{(i_{t-1})}$,
game horizon $\hmax$\;}

$k_{\max, t} \gets \argmax_k \left \{k \leq \min\{\hmax, k_\infty\} | \RSet(\x_{t-1}) \cap \nsp{k}{i_{t-1}} \ne \emptyset \right\}$ \Comment{{\footnotesize\texttt{exploit \attacker{}'s mistake}}} 
\\
$\x_{t} \gets$ any element in $\nsp{k_{\max,t}}{i_{t-1}}$\;

\textbf{Return:} New allocation $\x_{t}$, guaranteed defense time $k_{\max,t}$
\end{algorithm}

\algref{alg:defender-strategy-nsp} presents the feedback strategy for the \defender{}.  
Suppose $k_{\max, 0} = k_\infty$ in \algref{alg:defender-init-nsp}, then the \defender{} can indefinitely defend regardless of the \attacker{}'s no-splitting strategy. 
In this case, the \defender{} observes $\y_{t-1} = \y^{(i_{t-1})}$ and reallocates its resources to the corresponding Q-set: $\nsp{k_\infty}{i_{t-1}} = \nsp{\infty}{i_{t-1}}$. 

On the other hand, if Algorithm~\ref{alg:defender-init-nsp} outputs $k_{\max, 0} < k_\infty$, then either \algref{alg:Q-prop} did not converge, or the \defender{} does not have enough resource to achieve indefinite defense.
By the construction of Q-sets, the \defender{} has a guarantee to defend up to time step $t=k_{\max, 0}$.
If we also have $k_{\max, 0} <T$, then the \attacker{} will identify a strategy to win at $t=k_{\max, 0}+1$ (shown later in Algorithms~\ref{alg:attacker-init} and~\ref{alg:attacker-strategy}).
Under the rational strategies by both players, $k_{\max, t}$ will reduce by 1 at each time step, and the game terminates with \attacker{}'s win at $t=k_{\max, 0}+1 \leq T$.
However, if the \attacker{} does not play rationally, the \defender{} may be able to delay the breaching. 
The search / optimization performed in line~1 of Algorithm~\ref{alg:defender-strategy-nsp} ensures that the \defender{} exploits such opportunity.\footnote{Note that if $k_{\max, 0}=T<k_\infty$, we do not have an estimate of when the \attacker{} will be able to breach, even if the game continued beyond $t=T$. However, the \defender{} still has a guarantee to defend up to time step $T$, and that is sufficient to identify the outcome of the finite-horizon game.}

The following two algorithms describe the \attacker{} strategy under the restriction of no-splitting. 
In particular, \algref{alg:attacker-init} presents the initial allocation for the \attacker{}, and \algref{alg:attacker-strategy} provides the feedback attacker strategy at time steps $t \geq 0$. 

\begin{algorithm}[ht]
\caption{Attacker Initial Allocation}
\label{alg:attacker-init}
\SetAlgoLined
\SetKwInput{KwInputs}{Inputs}
\KwInputs{Graph $\graph$, \defender{} total resource $X$, \attacker{} total resource $Y$, game horizon $\hmax$\;}
Construct Q-sets using Algorithm~\ref{alg:Q-prop}\;
    \uIf{$\exists k \leq \min\{\hmax, k_\infty\}$ and $i \in \nodeset$ such that $\Delta_X \cap \nsp{k}{i} = \emptyset$}
    {$ k_{\mathrm{min}, -1} \gets \argmin_k \left\{k \leq \min\{\hmax, k_\infty\} \; |\; \Delta_X \cap \nsp{k}{i} = \emptyset \right\}$
    \Comment{{\footnotesize\texttt{find the earliest breach}}}
    \\
    $i_{-1}^* \gets$ any element in  $\{i \, \vert \, \Delta_X \cap \nsp{k_{\mathrm{min}, -1}}{i} = \emptyset$ \}\;}
    \Else{
    $ k_{\mathrm{min}, -1} \gets \infty$ \Comment{{\footnotesize\texttt{no breach found}}} \\
    $i_{-1}^* \gets$ any element in node set $\nodeset$\;
    }
\textbf{Return:} Initial allocation $\yinit = \y^{(i_{-1}^*)}$, guaranteed breach time $k_{\min,-1}$.
\end{algorithm}

\begin{algorithm}[ht]
\caption{Feedback Attacker Strategy}
\label{alg:attacker-strategy}
\SetAlgoLined
\SetKwInput{KwInputs}{Inputs}
\KwInputs{Q-sets, observed \defender{} allocation $\x_{t}$, planning horizon $\hmax$\;}
\uIf{$\exists k \leq \min\{\hmax, k_\infty\}$ and $i \in \neighbor{i_{t-1}}$ such that $\x_t \notin \nsp{k}{i}$}
{$ k_{\mathrm{min}, t} \gets$
\resizebox{.42\textwidth}{!}{$\argmin_k \left\{k \leq \min\{\hmax, k_\infty\} \; |\; \x_t \notin \nsp{k}{i}, \,i \in \neighbor{i_{t-1}} \right\}$}; \Comment{{\footnotesize\texttt{exploit \defender{}'s mistake}}}
\\
$i_{t}^* \gets$ any element in  $\{i \in \neighbor{i_{t-1}} \, \vert \, \x_t \notin \nsp{k_{\mathrm{min}, t}}{i}$ \}\;
}
\Else{
$ k_{\mathrm{min}, t} \gets \infty$\;
$i_{t}^* \gets$ any element in $\neighbor{i_{t-1}}$\;
}
\textbf{Return:} Next allocation $\y_{t} = \y^{(i_t^*)}$, guaranteed breach time $k_{\min,t}$.
\end{algorithm}

%

The \attacker{} can defeat the \defender{} only when the \defender{} allocates resources outside the Q-sets. 
Since we formulated the dDAB game as a game of kind without any performance metric, when the \defender{} allocates resources within $\nsp{k}{i}$, the \defender{} is guaranteed to defend the next $k$ steps, and thus the \attacker{} does not have preference over which node to move to next. 
Therefore, we have arbitrary selections in line~7 of \algref{alg:attacker-init} and line~6 of \algref{alg:attacker-strategy}.
Introducing a cost for the \defender{}'s reallocation is a potential extension of this work. 
Our recent work~\cite{guan2023adversarial} explored this idea and developed a more general framework based on convex body chasing~\cite{friedman1993convex}, where the Q-sets are the convex bodies to be chased.

As we show later in Corollary~\ref{cor:no_splitting}, the \attacker{} has no incentive to split, i.e., if the \attacker{} can win a dDAB game by splitting, it can also win the game without splitting. 
Consequently, the algorithms presented here are sufficient for the \attacker{} to play the dDAB game. 
However, the \defender{} strategies need to be generalized to counter potential splitting \attacker{}, which we will present in the next section.

\subsection{Ring-graph Example}
We apply the proposed algorithms to an example that admits indefinite defense. 
Consider the (directed) ring graph with self-loops shown in~\figref{fig:ring-example}. In this case, the Q-set propagation algorithm converges immediately with $k_\infty \!=\! 0$, implying that $\nsp{\infty}{i} \!=\! \polyreq(\y^{(i)})$ for all nodes $i \!\in\! \nodeset$.
The resulting strategy for the \defender{} is straightforward: upon observing the \attacker{}'s allocation $\y_0$, the \defender{} places one unit of resource at the \attacker{}'s current node and one unit of resource at the node immediately ahead in the ring. This allocation guarantees coverage of both the current and potential next positions of the \attacker{} resource.
The \defender{} can maintain this pattern indefinitely, thereby ensuring indefinite defense.
\begin{figure}[t]
    \centering
    \includegraphics[width=\linewidth]{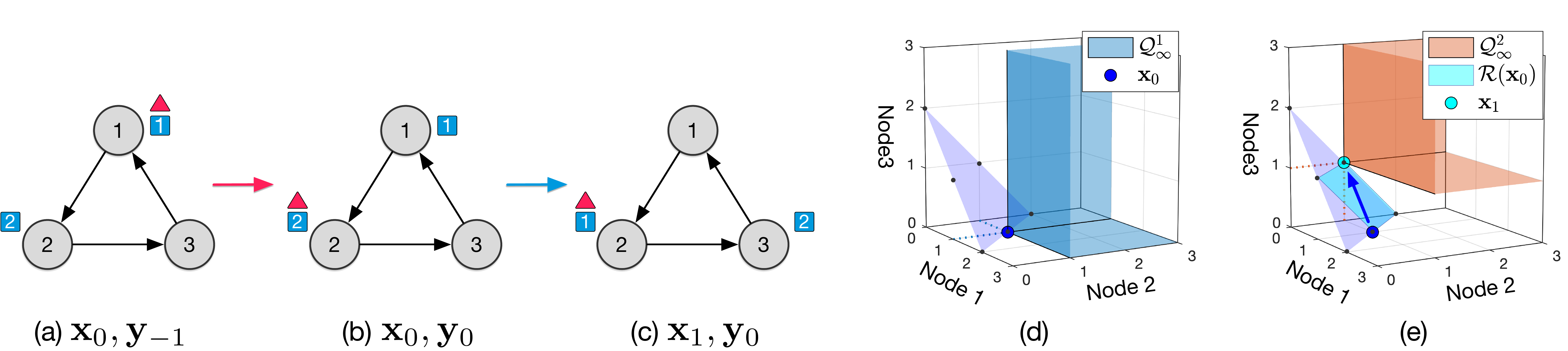}
    \caption{An example of indefinite-defense on a ring graph with self-loops. 
    Subplots (a)-(c) presents an example of an allocation sequence;
    subplots (d) and (e) illustrate the indefinite Q-sets and the construction of the defender strategy.}
    \label{fig:ring-example}
    \vspace{-0.2in}
\end{figure}

Next, we demonstrate how the above \defender{} strategy is generated using \algref{alg:defender-strategy-nsp}. Subplots (d) and (e) in \figref{fig:ring-example} illustrate the computed indefinite Q-sets for nodes 1 and 2 in the ring graph; the Q-set for node 3 is similar, with its vertex located at $[1,0,1]$.

With the \attacker{} robot at node~1 and two \defender{} robots, the initial \defender{} allocation $\x_0$ is obtained by solving the feasibility problem $\Delta_X \cap \nsp{\infty}{1}$, which yields the unique solution $\x_0 = [1,1,0]$, as shown in subplot (d).
After the \attacker{} moves to node 2, the \defender{} computes its next allocation via $\RSet(\x_0) \cap \nsp{\infty}{2}$. This again yields a unique solution, $\x_1 = [0,1,1]$, depicted in subplot (e).
If the \attacker{} remains at node 2, the \defender{}'s allocation remains unchanged. Otherwise, if the \attacker{} moves to node 3 (i.e., $\y_1 = [0,0,1]$), the same process can be repeated, yielding the next \defender{} allocation $\x_2 = [1,0,1]$.
This behavior is consistent with Theorem 3 in the prior work~\cite{shishika2022dynamic},
though here we derive it from an algorithmic framework capable of handling generic graphs.

\begin{remark}
    \label{rmk:discrete-resource}
    Although the Q-sets are constructed based on continuous resources, one can extract policies for indivisible robots by selecting the discrete states lying within the Q-sets (e.g., black dots in \figref{fig:ring-example} (d) and (e)).
\end{remark}

\section{Generalized Defense Strategies}
\label{sec:split}
This section generalizes the \defender{} strategy in the previous section to scenarios where the \attacker{} can split its resources to multiple nodes. 
In particular, we show that if the \defender{} has sufficient amount of resources to defend against any no-splitting \attacker{} strategy, then it can defend against any \attacker{} strategy, including the splitting ones. 
This result implies that the \attacker{} can win the game if and only if it can win using a no-splitting strategy, and consequently the \attacker{} does not have any incentive to split its resources to win the game.
Finally, we obtain the critical resource ratio~(CRR), which describes the necessary and sufficient amount of the \defender{} resource required to guarantee defense against any \attacker{} strategy.   

\subsection{Attacker and Defender Subteams}
To extend the analysis from no-splitting strategies to more general strategies, we introduce the notion of subteams.
\begin{definition}[Attacker Subteam]
We refer to the \attacker{} resource allocated to each node as an  \emph{\attacker{} subteam}.
The size of the $i$-th \attacker{} subteam (on node $i$) at time $t$ is $[\y_t]_i$.
\end{definition}

In general, any \attacker{} action can be viewed as a superposition of the subteam actions, which results in the splitting and merging of subteams into a new set of subteams.
\figref{fig:subteam_attacker_splitting} illustrates an example where two \attacker{} subteams split and merge into a new set of three subteams.
Note that the \attacker{}'s action to achieve the allocation in \figref{fig:subteam_attacker_splitting}(d) from \figref{fig:subteam_attacker_splitting}(a) is non-unique.
\begin{figure}[ht]
    \centering
    \includegraphics[width = 0.8\textwidth]{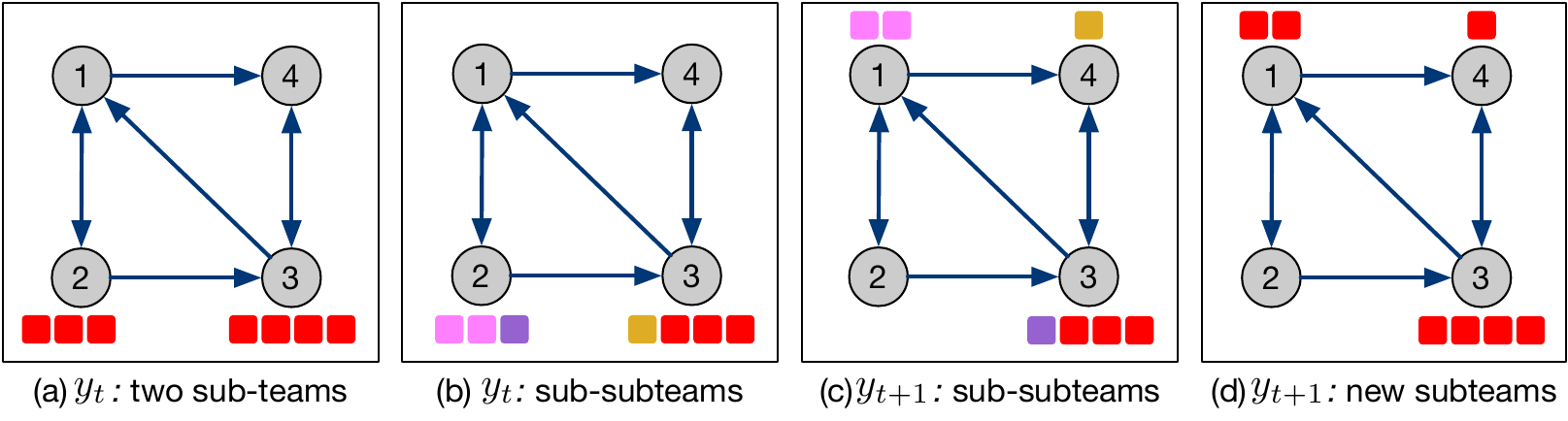}
    \vspace{-0.1in}
    \caption{Splitting and merging of an \attacker{} subteam. 
    (a) Two \attacker{} subteams at $t$: a 3-unit subteam on node 2 and a 4-unit subteam on node 3. 
    (b) Each subteam is about to split into two: magenta-purple and yellow-red (the game is still at time $t$.)
    (c) The re-allocation of each subteam after $t$. 
    (d) Three new subteams on nodes 1,3 and 4 at $t+1$.}
    \label{fig:subteam_attacker_splitting}
    \vspace{-0.15in}
\end{figure}

Based on the necessity and sufficiency of Q-sets, we define the $i$-th \defender{} subteam as the subset of the \defender{} resource that can defend against the $i$-th \attacker{} subteam, assuming that the \attacker{} subteam does \emph{not} further split in the future.
\begin{definition}[Defender Subteam]\label{def:defender_subteam}
The $i$-th (scaled) \defender{} subteam is defined as
\begin{equation}\label{eq:defender_subteam}
    \x_{t}^{(i)} \triangleq \frac{[\y_{t-1}]_i}{Y} \xhat_t^{(i)}, 
    \text{\;where\;} \xhat_t^{(i)}\in\nsp{k}{i}.
\end{equation}
We refer to $\xhat_t^{(i)}$ as the unscaled \defender{} subteam. 
\end{definition}

Note that the Q-sets in~\eqref{eqn:Q-set-def} are defined based on the total \attacker{} resource $Y$. 
Consequently, each \defender{} subteam is scaled according to the size of its corresponding \attacker{} subteam in \eqref{eq:defender_subteam}. 
The defense condition for the \defender{} subteams is thus linked to the Q-sets through the unscaled \defender{} subteam $\xhat$.

As illustrated in~\figref{fig:splitting-attacker}(a), the \attacker{} subteams at nodes~1 and 2 have sizes of 1 (red robot) and 2 (dark red and pink robots), respectively. 
The initial \defender{} subteams are $\x_{0}^{(1)} = [1,1,0]$ and $\x_{0}^{(2)} = [0,2,2]$.
Specifically, the 1st \defender{} subteam consists of the two blue robots, while the 2nd \defender{} subteam comprises the dark blue and cyan robots.
The dark blue \defender{} robots are assigned to defend against the dark red \attacker{} robot, and the cyan robots are designated to the pink robot.

Based on the results from the no-splitting \attacker{} scenario, if the \attacker{} subteams do not split further, each \defender{} subteam can successfully defend against its respective \attacker{} subteam for the next $k$ steps, ensured by condition~\eqref{eq:defender_subteam}.

\begin{figure}[b]
    \centering
    \includegraphics[width = 0.65\textwidth]{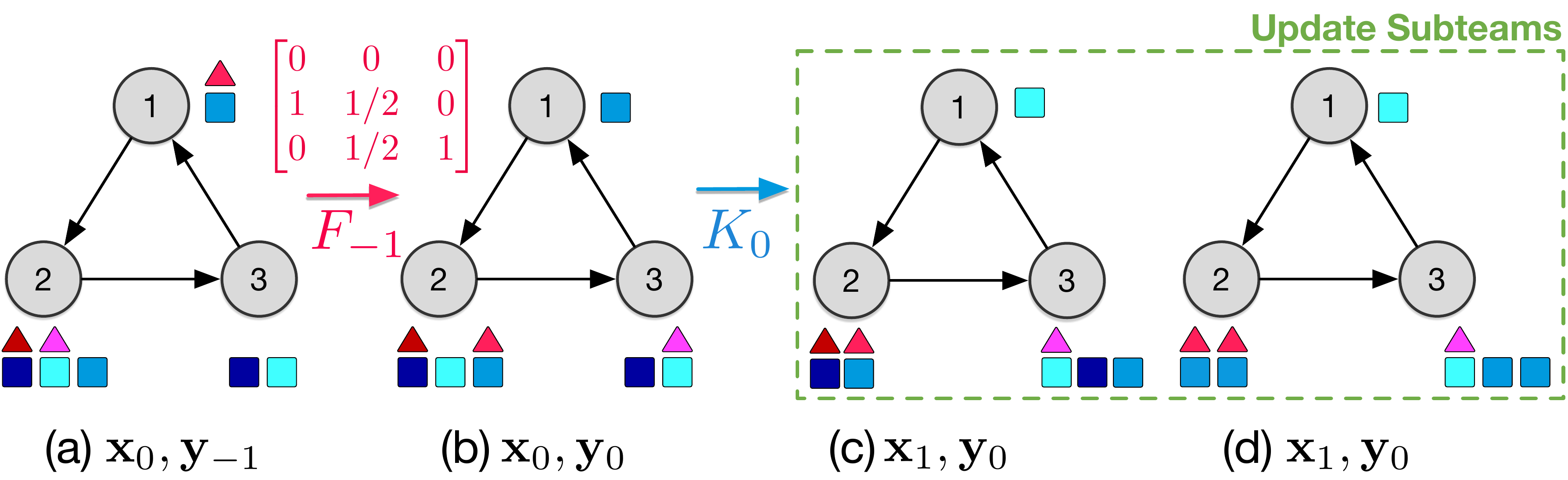}
    \caption{An illustrative ring-graph example with splitting \attacker{} resources and the response of corresponding \defender{} subteam.}
    \label{fig:splitting-attacker}
\end{figure}

\subsection{Generalization to Splitting Attacker}
We now extend the above construction to scenarios where the \attacker{} subteams further split, which is summarized in the following theorem.
\begin{restatable}{theorem}{mt}
    \label{thm:kstep_defense_general_new}
    Given the \attacker{}'s initial state $\yinit$, the \defender{} can defend against any \attacker{} strategy until time~$T$ if the \defender{}'s initial state $\x_0$ can be expressed as 
    \begin{equation}
        \label{eqn:splitting-defender-allocation}
        \x_0 = \sum_{i\in \nodeset} \frac{[\yinit]_i}{Y} \hat{\x}_0^{(i)}, ~\text{for some } \hat{\x}_0^{(i)} \in \nsp{T}{i}.
    \end{equation}
\end{restatable}

\begin{proof}
A formal proof is presented in Appendix~\ref{appdx-sec:general-defender-thm}. 
\end{proof}

For the rest of this subsection, we present the basic intuition behind the proof construction. 

Consider a generic \attacker{} action $F_{t-1}$ that transitions $\y_{t-1}$ to $\y_t$. 
Let $\bs{f}^{(i)}$ represent the $i$-th column of $F_{t-1}$, i.e., $F_{t-1} = [\bs{f}^{(1)}, \bs{f}^{(2)}, ..., \bs{f}^{(N)}]$.
The vector $\bs{f}^{(i)}$ encodes the splitting action of the \attacker{} subteam on node $i$,
where the fraction of this subteam relocating to node $j$ is given by $[\bs{f}^{(i)}]_j$.
For instance, in \figref{fig:splitting-attacker}, the red action $F_{-1}$ yields $[\bs{f}^{(2)}]=[0,1/2, 1/2]^\top$, meaning that one of the two red robots on node 2 remains, while the other moves to node 3.

The $i$-th \defender{} subteam should react to the splitting of the $i$-th \attacker{} subteam in the following manner:
\begin{enumerate}
    \item The $i$-th \defender{} subteam is divided into ``sub-subteams", according to the splitting action $\bs{f}^{(i)}$ of the $i$-th \attacker{} subteam.
    \item Each $j$-th \defender{} sub-subteam of the $i$-th subteam counteracts the $j$-th \attacker{} sub-subteam that moves from node $i$ to node $j$.
    Specifically, the sub-subteam is given by 
    \[
        \x_{t}^{(i \to j)} = [\bs{f}^{(i)}]_j  \x^{(i)}_t = \frac{[\bs{f}^{(i)}]_j[\y_{t-1}]_i}{Y}  \xhat_{t}^{(i)}.
    \]
    \item This counteraction is executed by applying the action $K^{(i \to j)}$, resulting in a new unscaled defender subteam:
    \[\xhat_{t+1}^{(i\to j)} = K^{(i\to j)}\xhat_{t}^{(i)} \in \nsp{k-1}{j}.\]
    \item The new state of the \defender{} sub-subteam at the next time step is then
    \begin{equation}
        \label{eqn:sub-subteam-requirement}
        \x_{t+1}^{(i\to j)} = K^{(i\to j)} \x_{t}^{(i \to j)} = \frac{[\bs{f}^{(i)}]_j[\y_{t-1}]_i}{Y}\xhat_{t+1}^{(i\to j)}.
    \end{equation}    
\end{enumerate}

In the example shown in \figref{fig:splitting-attacker}, the new states of the sub-subteams are given by $\x_{1}^{(1\to 2)} = [0,1,1]$ (blue), $\x_{1}^{(2\to 2)} = [0,1,1]$ (dark blue), and $\x_{1}^{(2\to 3)} = [1,0,1]$ (cyan).
Notably, $\x_{t+1}^{(i\to j)}$ contributes only a portion of the new $j$-th \defender{} subteam, originating from the previous $i$-th subteam.

To compute the new $j$-th \defender{} subteam, we group the \defender{} resources from different subteams that responded to the \attacker{} resources and ended up at node $j$. This grouping is expressed as:
\begin{equation}
    \label{eqn:new-def-subteam-j}
    \x_{t+1}^{(j)} = \sum_{i\in \nodeset} \x_{t+1}^{(i\to j)} = \sum_{i\in \nodeset} \frac{[\bs{f}^{(i)}]_j[\y_{t-1}]_i}{Y} \xhat_{t+1}^{(i\to j)}.
\end{equation}
This step is illustrated in \figref{fig:splitting-attacker}(d), where the new \defender{} subteams are $\x_{1}^{(2)} = [0,2,2]$ (blue), and $\x_{1}^{(3)} = [1,0,1]$ (cyan). Meanwhile, two new \attacker{} subteams form at nodes 2 and 3, with sizes of 2 and 1, respectively.

The critical question is whether this $j$-th new \defender{} subteam can effectively defend against the new \attacker{} subteam at node $j$. 
Unscaling the new $j$-th \defender{} subteam in~\eqref{eqn:new-def-subteam-j} gives 
\begin{equation}
    \xhat_{t+1}^{(j)} = \frac{Y}{[\y_t]_j} \x_{t+1}^{(j)} = \sum_i \frac{[\bs{f}^{(i)}]_j[\y_{t-1}]_i}{[\y_t]_j} \xhat_{t+1}^{(i\to j)}.%
\end{equation}
Note that the size of the new \attacker{} subteam on node $j$ is $[\y_{t}]_j = \sum_i [\bs{f}^{(i)}]_j [\y_{t-1}]_i$.
Thus, $\xhat_{t+1}^{(j)}$ is a convex combination of the states ${\xhat_{t+1}^{(i\to j)}}$, all of which are in the Q-set $\nsp{k-1}{j}$ as constructed in~\eqref{eqn:sub-subteam-requirement}. Given that the Q-sets are polytopes (Theorem~\ref{thm:Q-polytope}), it follows that $\xhat_{t+1}^{(j)} \in \nsp{k-1}{j}$.
Therefore, we can prove by induction that the \defender{} subteams at the next time step can maintain their defense for an additional $k\!-\!1$ steps.

\subsection{Generalized Defender Strategy}
As a direct consequence of Theorem~\ref{thm:kstep_defense_general_new}, the \defender{} strategy outlined in~\secref{sec:nsp} can be extended to scenarios in which the \attacker{} distributes its resources across multiple nodes.
The generalized \defender{} strategy is summarized in Algorithms~\ref{alg:initial-defender-allocation} and~\ref{alg:defender-strategy}.
Algorithm~\ref{alg:initial-defender-allocation} determines the initial \defender{} allocation, while Algorithm~\ref{alg:defender-strategy} leverages the subteam construct introduced above to respond to a splitting \attacker{}.

In particular, Lines 3–12 of Algorithm~\ref{alg:initial-defender-allocation} aim to improve \defender{} performance when the \attacker{} does not concentrate all resources at the most critical node leading to the earliest breach.

In Algorithm~\ref{alg:defender-strategy}, Line 2 attempts to exploit potential \attacker{} mistakes.
Lines 3–9 respond to the splitting of \attacker{} resources, and Lines 10–13 construct the new subteam and output the next desired \defender{} allocation.

\begin{algorithm}[h]
\caption{Initial Defender Allocation}
\label{alg:initial-defender-allocation}
\SetAlgoLined
\SetKwInput{KwInputs}{Inputs}
\KwInputs{Graph $\graph$, total resources $X$ and $Y$, \attacker{} initial allocation $\y_{-1}$, game horizon $\hmax$\;}
Construct Q-sets via~\algref{alg:Q-prop}\;
$\mathcal{I}_{-1} \gets \{i \, | \, [\y_{-1}]_i > 0\}$\;
\For {$k = 1, \ldots, T$}{
\For{$i \in \mathcal{I}_{-1}$}{
    $\xhat^{(i)}_{0,k}\gets \argmin_{\x \in \Delta_X \cap \nsp{i}{k}} \mathbf{1}^\top \x$ \;
}
$X_{\mathrm{tot},k} = \sum_{i \in \mathcal{I}_{-1}} \frac{[\y_{-1}]_i}{Y} \mathbf{1}^\top \xhat^{(i)}_0$ \;
\If{$X_{\mathrm{tot},k} > X$}{
$k_{\mathrm{max},0} = k-1$ \;
\textbf{Break}\;
}

}
 $X_{\mathrm{tot}} \gets X_{\mathrm{tot},k_{\mathrm{max},0}}$
 and 
 $\xhat^{(i)}_{0} \gets \frac{X}{X_{\mathrm{tot}}} \xhat^{(i)}_{k_{\mathrm{max},0}}$

$\x_0\gets \sum \frac{[\y_{-1}]_i }{Y}\xhat^{(i)}_{0}$;

\textbf{Return:} Q-sets $\{\nsp{k}{i}\}_{k, i}$, initial allocation and subteams $(\x_{0}, \{\xhat_{0}^{(i)}\}_i)$, longest guaranteed defense time $k_{\max,0}$.
\end{algorithm}

\begin{algorithm}[h]
\caption{Feedback Defender Strategy}
\label{alg:defender-strategy}
\SetAlgoLined
\SetKwInput{KwInputs}{Inputs}
\KwInputs{Q-sets, previous scaled subteams $\{\xhat_{t-1}^{(i)}\}_i$, observed \attacker{} allocations $\y_{t-1}$ and  $\y_{t-2}$\;}
$\mathcal{I}_{t-1} \gets \{i \, | \, [\y_{t-1}]_i > 0\}$\;
$ k_{\mathrm{max}, t} \gets \argmax_k \left\{k \; |\; 
\RSet(\hat{\x}_{t-1}^{(i)})\cap \nsp{k}{i} \ne \emptyset \; \forall \; i \in \mathcal{I}_{t-1} \right\}$;
 ~~~\Comment{find the longest survival time}
\\
    {\footnotesize \tcp{Generate \defender{} subteams' actions}}
\For{$i \in \nodeset$ \st $\xhat_{t-1}^{(i)}\neq \bs{0}$}{

    $\bs{f}_i \gets $ $i$-th column of $F_{t-2}$;
    
    {\footnotesize \tcp{Reaction to \attacker{} subteam i splitting}}
    \For{$j \in \nodeset$ \st $[\bs{f}_i]_j\neq 0$}{
    $\xhat^{(i \to j)}_{t}\gets$ element in $\RSet{(\x_{t-1}^{(i)})} \cap \nsp{k_{\max,t}}{j}$;
    
    $K^{(i\to j)} \gets $ action s.t. $\xhat^{(i \to j)}_{t} = K^{(i\to j)} \xhat_{t-1}^{(i)}$;
    }
    
    
}

{\footnotesize \tcp{New subteams}}
\For{$i \in \nodeset$}{
$\x_{t}^{(i)} \gets \sum_j \frac{[\y_{t-2}]_j}{Y}[\bs{f}_j]_i\xhat^{(j \to i)}_{t}$\footnotemark;
}
$\x_{t} = \sum_i \x_{t}^{(i)}$;

$K_t \gets $ overall action from \eqref{eqn:overal-action-formula}\;

\textbf{Return:} Action $K_t$, resultant subteams $\{\xhat_{t+1}^{(i)}\}_i$,  guaranteed defense time $k_{\max,t}$.
\end{algorithm}
\vspace{-0.1in}

\newpage
\subsection{The Critical Resource Ratio}
\label{sec:CRR}
\footnotetext{
Note that $\x_{t}^{(i)}= \frac{[\y_{t-1}]_i}{Y} \xhat_{t}^{(i)} = \frac{[\y_{t-1}]_i}{Y} \sum_j \frac{[\y_{t-2}]_j}{[\y_{t-1}]_i}[\bs{f}_j]_i\xhat^{(j \to i)}_{t} = \sum_j \frac{[\y_{t-2}]_j}{Y}[\bs{f}_j]_i\xhat^{(j \to i)}_{t}$, for $i$ such that $[\y_{t-1}]_i > 0$. If $[\y_{t-1}]_i = 0$, then there is no need to create a subteam designated for node $i$.
}

We leverage the results in the previous sections to identify the critical resource ratio (\multfactor{}, see Definition~\ref{def:crr}).
In Section~\ref{sec:nsp}, we have shown that being in $\nsp{k}{i}$ is the necessary and sufficient for the \defender{} to defend against any no-splitting \attacker{} strategy that starts from $\yinit = \y^{(i)}$ for $k$ steps.
This leads to an intermediate version of the \multfactor{} defined for the case of no-splitting \attacker{} starting on node $i$:
\begin{equation}
    \label{eqn:crr-def-nsp}
    \nspfactornode{k}{i} \triangleq \min_{\x \in \nsp{k}{i}} \ones^\top \x.
\end{equation}
Given that the \attacker{} can freely select its initial state $\yinit$, we define the k-step CRR given a no-splitting \attacker{} as
\begin{equation}
    \nspfactor{k} \triangleq \max_{i \in \nodeset} \nspfactornode{k}{i}.
\end{equation}

The following result shows that the CRR against general splitting \attacker{} is identical to the one defined under no-splitting restriction.

\begin{theorem}[Critical Resource Ratio]
\label{thm:sufficient_resource}
The necessary and sufficient resource ratio for the \defender{} to  achieve $k$-step defense against any  \attacker{} strategy  is given by
\begin{equation}
    \factor{k}{} = \nspfactor{k} {= \frac{1}{Y}\max_{i \in \nodeset} \min_{\x \in \nsp{k}{i}} \ones^\top \x.}
\end{equation}
\end{theorem}
\begin{proof}
It is obvious that $\factor{k}{}\geq\nspfactor{k}$, since it is necessary to guard against no-splitting  strategies. 
Consequently, it suffices to show that $X=\nspfactor{k}{}Y$ is sufficient to guard against any admissible \attacker{} strategy, including the ones with splitting.

Using the result of Theorem~\ref{thm:kstep_defense_general} with $t=0$ and $T=k$, consider the following initial \defender{} state that is sufficient to guard against any given $\y_{-1}$ over the next $k$ time steps:
\begin{equation}
    \x_0 = \frac{1}{Y}\sum_i [\y_{-1}]_i\, \xhat_{0}^{(i)}, \text{\;\;where\;} \xhat_{0}^{(i)}\in\nsp{k}{i}.
\end{equation}

The minimum amount of resource required to achieve the above allocation is given by
\begin{align*}
    &\min_{(\xhat_{0}^{(1)}, \ldots, \xhat_{0}^{(|\nodeset|)}) \in \nsp{k}{1} \times \cdots \times \nsp{k}{|\nodeset|}} \bs{1}^\top \Big(\frac{1}{Y}\sum_i [\y_{-1}]_i\, \xhat_{0}^{(i)}\Big) =\\
    &=    \frac{1}{Y}\sum_i [\y_{-1}]_i\, \Big(\min_{\xhat_{0}^{(i)}\in\nsp{k}{i}} \bs{1}^\top\xhat_{0}^{(i)}\Big)
    =\frac{1}{Y}\sum_i [\y_{-1}]_i \, \nspfactornode{k}{i} X \\
    &\leq \frac{1}{Y}\sum_i [\y_{-1}]_i \nspfactor{k} X = \nspfactor{k} X,
\end{align*}
where the equality is given when the \attacker{} initial state $\yinit$ places all its resources on the node $i^* \in \argmax_{i \in \nodeset} \nspfactornode{k}{i}$.
Hence, we have $\factor{k}{} = \nspfactor{k}$. 
\end{proof}

\begin{corollary}[No Incentive to Split]
\label{cor:no_splitting}
For a given graph $\graph$ and resources $X$ and $Y$, the \attacker{} has a strategy to win the dDAB game if and only if it has a no-splitting winning strategy.
\end{corollary}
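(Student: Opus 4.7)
The plan is to derive this corollary as an essentially immediate consequence of Lemma~\ref{cor:sufficient_resource}. The ``if'' direction is trivial: a no-splitting winning strategy is, by definition, an admissible \attacker{} strategy, so its existence implies the \attacker{} has a winning strategy. All the content lies in the ``only if'' direction, which I would prove by contrapositive.

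Specifically, I would argue as follows. Suppose the \attacker{} has \emph{no} no-splitting winning strategy for horizon $T$ (I would handle both finite $T$ and $T=\infty$ in parallel, using $\nsp{k}{i}$ and $\nsp{\infty}{i}$ respectively). Then, by the necessity direction of Theorem~\ref{thm:recursive-form} (or Theorem~\ref{thm:q_inf} in the indefinite case), for every starting node $i \in \nodeset$ the intersection $\Delta_X \cap \nsp{T}{i}$ must be nonempty; equivalently, $X \geq \nspfactornode{T}{i}$ for all $i$, and hence $X \geq \nspfactor{T}$. By Lemma~\ref{cor:sufficient_resource}, this gives $X \geq \factor{T}{} Y$, meaning the \defender{} can guard against \emph{any} admissible \attacker{} strategy, splitting or not. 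Contrapositively, if the \attacker{} has any winning strategy, then it must also have a no-splitting winning strategy.

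I expect no serious obstacles, since the whole machinery of Section~\ref{sec:split} was built precisely to reduce the general case to the no-splitting case. The only thing worth being careful about is stating the equivalence in a way that covers both the finite-horizon game (won at some $t \le T$) and the indefinite game, and making sure the corresponding versions of $\factor{T}{}$ and $\nspfactor{T}$ are the ones invoked; the key identity $\factor{T}{} = \nspfactor{T}$ from Lemma~\ref{cor:sufficient_resource} does all the real work. The proof therefore reduces to one short paragraph citing Lemma~\ref{cor:sufficient_resource} together with the necessity portion of Theorem~\ref{thm:recursive-form}.
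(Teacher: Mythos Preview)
Your proposal is correct and matches the paper's own proof in substance: both derive the corollary immediately from the identity $\factor{k}{} = \nspfactor{k}$ of Lemma~\ref{cor:sufficient_resource}, with the ``if'' direction trivial. The only cosmetic difference is that the paper argues the nontrivial direction directly (attacker wins $\Rightarrow X < \factor{k}{}Y = \nspfactor{k}Y \Rightarrow$ a no-splitting strategy wins), whereas you phrase it as the contrapositive and spell out the intermediate step via the necessity part of Theorem~\ref{thm:recursive-form}; this is the same argument.
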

\begin{proof}
The sufficiency is given trivially. 
The necessity comes as a direct consequence of Theorem~\ref{thm:sufficient_resource}.
If the \attacker{} can win a dDAB game with some strategy, we have that $Y \geq \factor{k}{} X$. 
From Theorem~\ref{thm:sufficient_resource}, we obtain $Y \geq \factor{k}{} X = \nspfactor{k} X$, which implies that the \attacker{} can also win with a no-splitting strategy.

\end{proof}

The following corollary regarding the indefinite defense is a direct consequence of Corollary~\ref{cor:no_splitting}.
\begin{corollary}
    If an indefinite defense is feasible against all no-splitting \attacker{} strategies, the \defender{} can also indefinitely defend against any \attacker{} strategies.
\end{corollary}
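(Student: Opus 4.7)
The plan is to combine Corollary~\ref{cor:no_splitting} with the fixed-point characterization of the indefinite safe sets. By Theorem~\ref{thm:q_inf}, ``indefinite defense is feasible against all no-splitting strategies'' is equivalent to $\nsp{\infty}{i}$ being nonempty for every $i\in\nodeset$, which in turn means (by Theorem~\ref{thm:fixed-point}) that the collection $\{\nsp{\infty}{i}\}_i$ is a fixed point of the recursion in~\eqref{eqn:Q-set-def}. This fixed-point structure is the key ingredient that lets the finite-horizon machinery of Theorem~\ref{thm:kstep_defense_general} lift cleanly to the infinite horizon.

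First I would give a short ``soft'' argument using Corollary~\ref{cor:no_splitting} directly: the hypothesis implies that $X \geq \nspfactor{T} Y$ for all finite $T$, and Lemma~\ref{cor:sufficient_resource} then yields $X \geq \factor{T}{} Y$ for every $T$, so for each finite horizon the defender has a strategy that survives any (possibly splitting) attacker. This already shows that no splitting attacker strategy can produce a breach at any fixed finite time, which morally gives the corollary. The subtlety is that these strategies could a priori depend on $T$, so the soft argument does not produce a single playable indefinite strategy.

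To produce such a strategy, I would instead invoke Theorem~\ref{thm:kstep_defense_general} with the $\nsp{\infty}{i}$ replacing the $\nsp{T-t}{i}$ in Definition~\ref{def:defender_subteam}. Given any initial attacker allocation $\yinit$, the defender picks $\hat{\x}_0^{(i)} \in \nsp{\infty}{i}$ (nonempty by hypothesis) and sets $\x_0 = \tfrac{1}{Y}\sum_i [\yinit]_i \hat{\x}_0^{(i)}$. Lemma~\ref{lem:safeness} still applies verbatim because $\nsp{\infty}{i}\subseteq \polyreq(\y^{(i)})$, so the current step is safe. For the inductive step, I would rerun the construction of Lemma~\ref{lem:induction}, replacing each invocation of ``there exists $K^{(i\to j)}$ carrying $\hat{\x}^{(i)}\in \nsp{k}{i}$ into $\nsp{k-1}{j}$'' by the fixed-point statement of Theorem~\ref{thm:fixed-point}: for every $\hat{\x}\in\nsp{\infty}{i}$ and every $j\in\neighbor{i}$, the intersection $\RSet(\hat{\x})\cap \nsp{\infty}{j}$ is nonempty. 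The convex-combination argument that shows the new subteams $\hat{\x}_{t+1}^{(j)}$ remain in the appropriate safe set carries over unchanged, since $\nsp{\infty}{j}$ is a polytope (indeed an intersection of polytopes with the convexity inherited from Theorem~\ref{thm:Q-polytope}).

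The main obstacle is exactly this extension from finite to infinite horizon. In the finite case the recursion moves along a strictly decreasing index ($\nsp{T-t}{i}\to\nsp{T-t-1}{\ip}$), whereas in the infinite case we need the subteam update to remain inside the \emph{same} family $\{\nsp{\infty}{i}\}_i$; Theorem~\ref{thm:fixed-point} is precisely what makes this closure work. Once that substitution is justified, induction on $t$ has no terminal time to bound it, and the defender's strategy defends at every $t\geq 0$, completing the proof.
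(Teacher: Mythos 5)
Your proposal is correct, and it is more than the paper does. The paper's own proof is exactly your ``soft'' argument, phrased as a two-line contradiction: if some splitting \attacker{} strategy wins, it must breach at a finite time, so by Corollary~\ref{cor:no_splitting} a no-splitting strategy also wins, contradicting the hypothesis. So your first paragraph already reproduces the paper's route (via Lemma~\ref{cor:sufficient_resource} and Corollary~\ref{cor:no_splitting}), and the concern you raise about $T$-dependence is a real looseness that the paper simply does not address: under the paper's definition of indefinite defense (prevent a breach for every horizon $T$) the soft argument suffices, but it does not by itself exhibit one horizon-independent strategy. Your second half fills that gap constructively: running the subteam superposition of Theorem~\ref{thm:kstep_defense_general} and Lemma~\ref{lem:induction} with $\nsp{\infty}{i}$ in place of $\nsp{T-t}{i}$, and using the fixed-point relation of Theorem~\ref{thm:fixed-point} to supply the closure step $\RSet(\hat{\x})\cap\nsp{\infty}{j}\neq\emptyset$, yields a single stationary feedback strategy that defends forever. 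Two small points to tighten: the hypothesis should be read as $\Delta_X\cap\nsp{\infty}{i}\neq\emptyset$ (nonemptiness within the defender's budget, via the necessity direction of Lemma~\ref{lmm:Q-nes}), which is also what guarantees the superposition $\x_0=\frac{1}{Y}\sum_i[\yinit]_i\hat{\x}_0^{(i)}$ fits within $X$; and $\nsp{\infty}{j}$ is in general only a closed convex set (a countable intersection of polytopes) unless Algorithm~\ref{alg:Q-prop} converges in finitely many steps, but convexity is all the convex-combination step in Lemma~\ref{lem:induction} requires, so your argument goes through.
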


\begin{proof}
    Note that if an \attacker{} can win the game with a splitting strategy, it must breach a node at some finite time step. 
    From Corollary~\ref{cor:no_splitting}, it can also win without splitting, which contradicts the assumption.
    
\end{proof}

\subsection{Additional Properties of CRR}
We present some additional properties of \multfactor{} that are straightforward to obtain.
\begin{proposition}
    The sequence $(\factor{T}{})_{T=1}^\infty$ is monotonically nondecreasing with respect to horizon $T$.
\end{proposition}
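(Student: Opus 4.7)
The plan is to exploit the trivial nesting of time horizons: any strategy that defends for $T+1$ steps also defends for the initial $T$ of those steps. This makes the argument a direct set-containment check on the collection of ``valid ratios'' underlying Definition~\ref{def:crr}.

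First, I would fix any $T \geq 1$ and consider an arbitrary ratio $r$ that is admissible for horizon $T+1$, meaning that whenever $X \geq r Y$ the defender has some strategy that prevents the terminal condition~\eqref{eq:terminal_condition} from being triggered at any $t \in \{0, 1, \ldots, T+1\}$, regardless of the attacker's initial allocation $\yinit \in \Delta_Y$ and subsequent play. By definition, $\factor{T+1}{}$ is the infimum of the set of all such admissible ratios.

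Next, I would observe that any such $r$ is also admissible for horizon $T$. Indeed, the same defender strategy, against the same attacker behavior, avoids~\eqref{eq:terminal_condition} throughout $\{0,1,\ldots,T+1\}$, and since $\{0,1,\ldots,T\} \subseteq \{0,1,\ldots,T+1\}$, it certainly avoids~\eqref{eq:terminal_condition} on the shorter window. Hence the set of admissible ratios for horizon $T+1$ is contained in the set of admissible ratios for horizon $T$. Taking infima (and using that both infima are attained by the ``smallest'' clause in Definition~\ref{def:crr}) yields $\factor{T}{} \leq \factor{T+1}{}$.

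There is no significant obstacle here; the statement is essentially a bookkeeping consequence of the definition of CRR, and the only thing to be careful about is to invoke the universal quantifier over $\yinit$ and over attacker strategies identically on both sides so that the implication ``defends for $T{+}1$ steps $\Rightarrow$ defends for $T$ steps'' transfers verbatim between the two CRR problems.
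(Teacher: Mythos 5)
Your argument is correct and is essentially the paper's own (one-line) proof, just spelled out more formally: defending for $T+1$ steps implies defending for $T$ steps, so the set of admissible ratios shrinks as the horizon grows and the minimal ratio cannot decrease. No gap to report.
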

This property is obvious from the fact that the ability to defend over $T$ time steps immediately implies the ability to defend any duration less than $T$.
\footnotetext{
Note that $\x_{t}^{(i)}= \frac{[\y_{t-1}]_i}{Y} \xhat_{t}^{(i)} = \frac{[\y_{t-1}]_i}{Y} \sum_j \frac{[\y_{t-2}]_j}{[\y_{t-1}]_i}[\bs{f}_j]_i\xhat^{(j \to i)}_{t} = \sum_j \frac{[\y_{t-2}]_j}{Y}[\bs{f}_j]_i\xhat^{(j \to i)}_{t}$, for $i$ such that $[\y_{t-1}]_i > 0$. If $[\y_{t-1}]_i = 0$, then there is no need to create a subteam designated for node $i$.
}

\vspace{+0.1in}

\begin{proposition}
    If $\nsp{\infty}{i}\ne \emptyset$ for some $i \in \nodeset$, then $\nspfactornode{\infty}{i} = \nspfactornode{\infty}{j} < \infty$ for all $i,j \in \nodeset$.
\end{proposition}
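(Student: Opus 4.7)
The plan is to combine mass conservation under admissible actions with the fixed-point characterization of the indefinite safe sets (Theorem~\ref{thm:fixed-point}) and strong connectivity of $\graph$. The first step is to invoke the remark following Definition~\ref{def:Q_inf}, which asserts that $\nsp{\infty}{i}\ne\emptyset$ for some $i$ implies $\nsp{\infty}{j}\ne\emptyset$ for every $j\in\nodeset$; this makes the quantities $\nspfactornode{\infty}{j}$ well-defined. Finiteness is immediate: since $\nsp{\infty}{j}\subseteq \R^n_{\geq 0}$ is nonempty, any $\x\in \nsp{\infty}{j}$ has $\ones^\top \x<\infty$, so the infimum $\nspfactornode{\infty}{j}=\min_{\x\in\nsp{\infty}{j}}\ones^\top\x$ is finite (it is attained because $\nsp{\infty}{j}$ is closed as a countable intersection of polytopes, and the linear objective is bounded below on the nonnegative orthant).

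For the equality, the key observation is that every admissible action is a left-stochastic matrix, so every $\x'\in\RSet(\x)$ satisfies $\ones^\top\x'=\ones^\top\x$ (the remark after the definition of $\RSet$). Combined with the fixed-point relation~\eqref{eqn:q-inf-fixed-point}, this yields a one-step monotonicity lemma: if $j'\in\neighbor{j}$, then
\begin{equation*}
    \nspfactornode{\infty}{j'} \;\leq\; \nspfactornode{\infty}{j}.
\end{equation*}
Indeed, take $\x^*\in\nsp{\infty}{j}$ minimizing $\ones^\top\x$; by the fixed-point property there exists $\x'\in\RSet(\x^*)\cap\nsp{\infty}{j'}$, and mass conservation gives $\ones^\top\x'=\ones^\top\x^*=\nspfactornode{\infty}{j}$, so $\x'$ is a feasible point for the minimization defining $\nspfactornode{\infty}{j'}$.

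Iterating this one-step inequality along a directed path in $\graph$ from $i$ to $j$ (which exists by strong connectivity) gives $\nspfactornode{\infty}{j}\leq\nspfactornode{\infty}{i}$. Swapping the roles of $i$ and $j$ and using a path from $j$ to $i$ produces the reverse inequality, so $\nspfactornode{\infty}{i}=\nspfactornode{\infty}{j}$ for every pair. The only step that requires a bit of care is verifying that the minimum in the definition of $\nspfactornode{\infty}{j}$ is actually attained (so that "pick $\x^*$ minimizing" is legitimate); this will follow from closedness of $\nsp{\infty}{j}$ together with the fact that the linear program $\min\,\ones^\top\x$ over a nonempty closed polyhedral set in $\R^n_{\geq 0}$ has an optimal solution. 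With attainment established, the argument above is entirely self-contained, relying only on Theorem~\ref{thm:fixed-point}, mass conservation, and strong connectivity.
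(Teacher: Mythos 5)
Your proof is correct, and its skeleton is the same as the paper's: establish a one-step monotonicity of $\nspfactornode{}{}$ along directed edges and then cascade it around the strongly connected graph to force equality. Where you differ is in how the one-step inequality is obtained. The paper works at finite horizon: from the recursion \eqref{eqn:Q-set-def-k} and mass conservation it gets $\nspfactornode{k+1}{i} \geq \nspfactornode{k}{j}$ for $j \in \neighbor{i}$, and then passes to the limit to claim $\nspfactornode{\infty}{i} \geq \nspfactornode{\infty}{j}$ — a step that tacitly uses $\sup_k \nspfactornode{k}{j} = \nspfactornode{\infty}{j}$, which itself needs a compactness argument (of the flavor of Lemma~\ref{appdx-lmm:nonempty-intersection}) that the paper does not spell out. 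You instead apply the fixed-point relation \eqref{eqn:q-inf-fixed-point} of Theorem~\ref{thm:fixed-point} directly at infinity: a mass-minimizer $\x^*\in\nsp{\infty}{j}$ has a successor in $\RSet(\x^*)\cap\nsp{\infty}{j'}$ of the same total mass, giving $\nspfactornode{\infty}{j'}\leq\nspfactornode{\infty}{j}$ with no limit interchange. This buys a cleaner argument at the cost of leaning on Theorem~\ref{thm:fixed-point} (whose proof already absorbs the compactness work), and it also requires the nonemptiness hypothesis of that theorem, which you correctly discharge via the remark that the indefinite safe sets are all empty or all nonempty. One small justification to fix: you argue attainment of the minimum by calling $\nsp{\infty}{j}$ ``polyhedral,'' but a countable intersection of polyhedra need not be polyhedral; attainment instead follows because $\nsp{\infty}{j}\subseteq\R^N_{\geq 0}$ is closed and the objective $\ones^\top\x$ is coercive on the nonnegative orthant, so minimizing over the compact set $\nsp{\infty}{j}\cap\{\x : \ones^\top\x\leq c\}$ for suitable $c$ suffices. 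With that cosmetic repair your argument is complete and self-contained.
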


\vspace{-0.2in}
\begin{proof}
    Note that for any finite $k$, $\nspfactornode{k+1}{i} \geq \nspfactornode{k}{j}$ for all $j \in \neighbor{i}$, since for a \defender{} to defend an \attacker{} starting from node $i$ for $k+1$ steps, it has to be able to defend $k$ steps after the \attacker{} moves to node $j$.
    Consequently, we have $\nspfactornode{\infty}{i} \geq \nspfactornode{\infty}{j}$ for all $j \in \neighbor{i}$. 
    Since the graph is strongly connected, there exists a directed path from $j$ to $i$. 
    One can then cascade the inequality along the path from $j$ to $i$, and it follows that $\nspfactornode{\infty}{j} \geq \nspfactornode{\infty}{i}$. 
\end{proof}

\begin{proposition}[Lowerbound of $\factor{T}{}$~\cite{shishika2022dynamic}]
For a general graph and an arbitrary $T$,  $\factor{T}{}$ is bounded from below by $\underline{\alpha} = d^+_{\mathrm{max}}$, where $d^+_{\mathrm{max}}=\max_{j\in\nodeset}d^+_j$ is the maximum out-degree of the graph. 
\end{proposition}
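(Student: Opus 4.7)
The plan is essentially to invoke Theorem~\ref{thm:degenerate_regime} (the degenerate parameter regime) and observe that it already contains everything needed. Recall that that theorem exhibits an explicit \attacker{} winning strategy whenever $X < d^+_{\max} Y$: the \attacker{} concentrates all of its $Y$ units on a node $i^\star$ achieving the maximum out-degree $d^+_{\max}$ as its initial state $\yinit$, and then, after observing any \defender{} allocation $\x_0 \in \Delta_X$, exploits the fact that $\sum_{j \in \neighbor{i^\star}} [\x_0]_j \leq X < d^+_{\max} Y$, so by pigeonhole at least one out-neighbor $j^\star \in \neighbor{i^\star}$ satisfies $[\x_0]_{j^\star} < Y$. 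The \attacker{} then moves its entire mass to $j^\star$, giving $[\y_0]_{j^\star} = Y > [\x_0]_{j^\star}$ and triggering the terminal condition~\eqref{eq:terminal_condition} at $t=0$.

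From this it follows immediately that no \defender{} allocation (and hence no \defender{} strategy) can avoid defeat at $t=0$ whenever $X < d^+_{\max} Y$. Since defense through horizon $T \geq 0$ requires in particular defense at $t=0$, this means that the implication ``$X \geq \alpha Y \Rightarrow$ \defender{} can defend until $T$'' fails for any $\alpha < d^+_{\max}$. By Definition~\ref{def:crr}, $\factor{T}{}$ is the smallest $\alpha$ for which this implication holds, so $\factor{T}{} \geq d^+_{\max} = \underline{\alpha}$.

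There is essentially no obstacle here; the entire content of the lemma is already packaged inside Theorem~\ref{thm:degenerate_regime}. The only item worth being careful about is the quantifier structure in Definition~\ref{def:crr}, specifically that $\factor{T}{}$ must work against \emph{every} initial allocation $\yinit$. This is what licenses the \attacker{} picking the adversarial initial configuration $\yinit = Y \e_{i^\star}$, so the lower bound argument fits the definition cleanly. The statement is independent of $T$ because the counterexample already defeats the \defender{} at $t = 0$, which is contained in every nonempty horizon.
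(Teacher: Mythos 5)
Your proposal is correct and follows essentially the same route as the paper: the paper's own justification is exactly the concentration-at-the-max-out-degree argument of Theorem~\ref{thm:degenerate_regime} (attacker places all $Y$ at the node of out-degree $d^+_{\max}$, pigeonhole forces some out-neighbor to hold less than $Y$, attacker moves there and breaches at $t=0$). Your added care about the quantifiers in Definition~\ref{def:crr} and the $T$-independence is fine but does not change the substance.
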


This property can be proved by considering the case where the \attacker{} initially concentrates all its resources at the node with the maximum out degree. Unless the \defender{} allocates an equal amount or more to every one of the neighboring nodes, the \attacker{} will have an action to win the game, i.e., move all the \attacker{} resources to the neighboring node where the \defender{} allocates less than $Y$ unit of resources.
\vspace{+0.1in}

\begin{proposition}[Upperbound of $\factor{T}{}$]
For a strongly-connected graph and an arbitrary $T\in[0, \infty]$, $\factor{T}{}$ is bounded from above by $\bar{\alpha} = \sum_{i\in \nodeset} L_i$, where $L_i$ is the length of the shortest loop that passes through node~$i$.
\end{proposition}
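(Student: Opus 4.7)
The plan is to exhibit an explicit stationary defender strategy that uses exactly $\bar{\alpha}Y$ resources and keeps at least $Y$ units on every node at every time step. Since the attacker's total resource is $Y$, this will immediately ensure that the termination condition~\eqref{eq:terminal_condition} is never triggered, yielding $\factor{T}{}\le\bar{\alpha}$ for every $T$, finite or infinite.

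For each node $i\in\nodeset$, fix a shortest directed cycle $C_i=(v_0^{(i)},v_1^{(i)},\dots,v_{L_i-1}^{(i)})$ with $v_0^{(i)}=i$; strong connectivity of $\graph$ guarantees that such a cycle exists and has finite length $L_i$. I would assign $L_iY$ units of defender resource to $C_i$ by placing $Y$ units on every node of $C_i$ and then rotating these units by one step along the cycle at each time step. Since the number of tokens equals the number of cycle positions, after each rotation $C_i$ still carries exactly $Y$ units on every one of its $L_i$ nodes. Superimposing this construction over all $i$ produces the stationary defender allocation $[\x]_j=Y\cdot m(j)$, where $m(j)=|\{k:j\in C_k\}|$, with total mass $\sum_j Ym(j)=Y\sum_k L_k=\bar{\alpha}Y$.

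The first substantive step is to verify that the simultaneous per-cycle rotations are realized by a single admissible matrix $K\in\K$. For each cycle $C_k$ passing through $j$, the $Y$ tokens at $j$ move to the cycle-successor $n_k^{(j)}$; aggregating the contributions from all cycles through $j$ suggests defining
\begin{equation*}
[K]_{\ell j}=\frac{|\{k:j\in C_k,\;n_k^{(j)}=\ell\}|}{m(j)}.
\end{equation*}
I would then check directly that $K$ is nonnegative, column-stochastic, and supported on $\edgeset$ (every transition with positive fraction is a cycle edge, hence a graph edge), and that $K\x=\x$ by noting that each cycle through $\ell$ contributes a unique predecessor of $\ell$, so $\sum_j|\{k:j\in C_k,\;n_k^{(j)}=\ell\}|=m(\ell)$.

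The remainder is immediate. Since $i\in C_i$ by construction, $m(i)\ge 1$ and hence $[\x_t]_i\ge Y$ for all $t$ and all $i$; combined with the attacker budget constraint $[\y_t]_i\le\mf{1}^\top\y_t=Y$, this gives $[\y_t]_i\le[\x_t]_i$ componentwise at every time step, so the defender never loses. The only delicate point in the argument is the aggregation on shared nodes: several cycles may traverse the same node in different directions, and one must confirm that ``one rotation per cycle'' actually collapses into a single left-stochastic matrix $K$ that respects the graph constraints---which is precisely the bookkeeping the formula for $K$ above is designed to handle.
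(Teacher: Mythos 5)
Your proof is correct and follows essentially the same route as the paper: have $Y$ units patrol the shortest loop through each node $i$ (costing $L_iY$ per node) and superimpose, so every node carries at least $Y$ defender resource at all times while the attacker can never place more than $Y$ anywhere. Your extra bookkeeping showing the simultaneous rotations collapse into a single admissible column-stochastic matrix $K$ with $K\x=\x$ is a welcome detail that the paper leaves implicit (it is in the spirit of its Lemma~\ref{lmm:overall-action} on combining subteam actions), but it does not change the argument.
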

\begin{proof}
    Since the graph is strongly connected, for every node~$i$, there is a loop that passes through $i$. 
    Then, for every node~$i$, the \defender{} can have $L_i Y$ unit of resource patrolling the shortest loop that passes through $i$, resulting every node on the loop (in particular, node $i$) having $Y$ unit of \defender{} resource at all time.
\end{proof}

\section{Algorithmic Solution}
\label{sec:algorithm}
In this section, we first develop an algorithm to numerically construct the Q-sets. 
The proposed algorithm also helps us prove Theorem~\ref{thm:Q-polytope}, which states that all Q-sets are polytopes.
Recall that we treated the next state $\x_{t+1}$ in the reachable set as the action for the \defender{} to take at time step $t$. 
In reality, however, the \defender{} needs to find a feasible action $K_t \in \K$ to reach $\x_{t+1}$.
In the second subsection, we formulate this action extraction problem as a linear program, which can be solved efficiently. 

\subsection{Q-set Construction}
Recall the recursive definition of the Q-sets in~\eqref{eqn:Q-set-def-k}:
\begin{equation*}
    \nsp{k}{i} = \Big\{\x \; \big\vert \; \x \in \polyreq(\y^{(i)}) \and \RSet(\x) \cap \nsp{k-1}{\ip} \ne \emptyset ~\forall \ip \in \neighbor{i} \Big\}.
\end{equation*}
To numerically construct the Q-sets, we first examine the properties of the set $\{\x | \RSet(\x) \cap \nsp{k}{\ip} \ne \emptyset\}$.
This set consists of states from which the \defender{} can reach $\nsp{k}{\ip}$ at the next time step.
It is unclear yet, whether these ``inverse reachable sets" induce nice properties for the Q-sets. 

We formally define the inverse reachable set of some set $P$ as follow:
\begin{definition}[Inverse Reachable Set]
Given a set $P \subseteq \mathbb{R}^N_{\geq 0}$, we define the inverse reachable set of $P$ as 
\begin{equation}
    \RsetInv(P) = \left\{\x \;\big\vert\; \RSet(\x) \cap P \ne \emptyset \right\}.
\end{equation}
\end{definition}

With the notion of the inverse reachable set, we can simplify the recursive construction of Q-sets in~\eqref{eqn:Q-set-def} as 
\begin{subequations}
\begin{alignat}{2}
    \nsp{0}{i} &= \polyreq(\y^{(i)}), \\
    \nsp{k}{i} &= \Big(\bigcap_{\ip\in \neighbor{i}} \RsetInv(\nsp{k-1}{\ip})\Big) \; \cap \; \polyreq(\y^{(i)}) \quad & \forall ~ k \geq 1.
\end{alignat}
\end{subequations}

We now discuss the computation of the inverse reachable sets. 
Note that any admissible action $K\in \K$ can be reversed.
That is, if one can use an action to reach $\x_{t+1}$ from some $\x_t$, then one can also find a reverse action that brings the \defender{}'s allocation from $\x_{t+1}$ to $\x_t$. 
Based on this intuition, we introduce the notion of a reversed graph, which has the same node set as the original graph but with all the directed edges reversed.

\begin{definition}
For a graph $\graph=(\nodeset, \edgeset)$ with connectivity matrix $A$,
its reversed graph $\reversedgraph = (\nodeset, \revcerseedgeset)$ is defined based on the connectivity matrix $\widetilde{A} = A ^\top$.
\end{definition}

We denote $\widetilde{\mathcal{K}}$ as the admissible action set of $\widetilde{\mathcal{G}}$.
The reachable set for the reversed graph is then defined as 
\begin{equation*}
    \widetilde{\RSet}(\x) = \left\{\x'\;|\; \exists \widetilde{K} \in \widetilde{\K} \st \x'=\widetilde{K}\x\right\}.
\end{equation*}

The following lemma relates the (forward) actions and the reversed actions.

\begin{restatable}{lemma}{ia}
\label{lmm:inverse-action}
Given an arbitrary admissible action under the original graph $K \in \K$ and an arbitrary starting state $\x$, 
suppose the resultant state is $\x' = K \x$.
We can reverse the action using an admissible action under the reversed graph $\widetilde{K} \in \widetilde{\K}$ to achieve $\x = \widetilde{K} \x'$.
The reverse action $\widetilde{K}$ can be constructed as 
\begin{equation}
    \label{eqn:reverse-action}
    \left[\widetilde{K}\right]_{ij} = 
    \begin{cases}
        \frac{[K]_{ji} [\x]_i}{[\x']_j} &\text{if } [\x']_j > 0, \\
        \frac{1}{\sum_{i} \left[\widetilde{A}\right]_{ij}} &\text{if } [\x']_j = 0 \text{ and } [\widetilde{A}]_{ij} = 1,
        \\
       0 &\text{if } [\x']_j = 0 \text{ and } [\widetilde{A}]_{ij} = 0.
    \end{cases}
\end{equation}
\end{restatable}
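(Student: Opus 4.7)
The plan is to verify the two requirements of the lemma in turn: first, that the constructed matrix $\widetilde{K}$ is indeed admissible under the reversed graph (i.e. $\widetilde{K} \in \widetilde{\K}$), and second, that $\widetilde{K} \x' = \x$. The formula~\eqref{eqn:reverse-action} is motivated by Bayes-like bookkeeping: the $(i,j)$ entry of $\widetilde{K}$ should represent the fraction of the mass residing at node $j$ of $\x'$ that originated at node $i$ of $\x$. With this interpretation the formula becomes transparent, and the only nuance is the degenerate case $[\x']_j = 0$, where there is no mass to reallocate and the entries of column $j$ can be assigned freely, subject only to column-stochasticity and the adjacency pattern of $\widetilde{A}$.

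For admissibility I would check the three constraints~\eqref{eqn:stochastic-matrix-constraint}--\eqref{eqn:adjacency-constraint} applied to $\widetilde{\K}$. Non-negativity is immediate from the non-negativity of $K$, $\x$, and the uniform weights. The adjacency constraint follows since in the first case $[A]_{ji}=0$ forces $[K]_{ji}=0$ (hence the entry vanishes), while the second and third cases explicitly respect $\widetilde{A}$. For the column-sum, when $[\x']_j > 0$ I would compute
\begin{equation*}
\sum_i [\widetilde{K}]_{ij} \;=\; \frac{1}{[\x']_j}\sum_i [K]_{ji}[\x]_i \;=\; \frac{[K\x]_j}{[\x']_j} \;=\; 1,
\end{equation*}
using $\x' = K\x$. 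When $[\x']_j = 0$, the uniform-weight construction in the second case of~\eqref{eqn:reverse-action} gives column sum $1$ by design.

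For the identity $\widetilde{K}\x' = \x$, expand
\begin{equation*}
[\widetilde{K}\x']_i \;=\; \sum_{j:\,[\x']_j>0} \!\!\frac{[K]_{ji}[\x]_i}{[\x']_j}\,[\x']_j \;+\; \sum_{j:\,[\x']_j=0} [\widetilde{K}]_{ij}\cdot 0 \;=\; [\x]_i \!\!\sum_{j:\,[\x']_j>0}\!\! [K]_{ji}.
\end{equation*}
The step I expect to require the most care is showing that the remaining sum equals $1$ whenever $[\x]_i > 0$. The key observation is: if $[\x']_j = 0$, then $\sum_k [K]_{jk}[\x]_k = 0$, so every term $[K]_{jk}[\x]_k$ in this sum vanishes; in particular, for the index $k=i$ we get $[K]_{ji}[\x]_i = 0$, meaning $[K]_{ji}=0$ whenever $[\x]_i>0$. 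Therefore for such $i$ the truncated column sum agrees with the full column sum $\sum_j [K]_{ji} = 1$ (by left-stochasticity of $K$), yielding $[\widetilde{K}\x']_i = [\x]_i$. For $i$ with $[\x]_i = 0$ the identity is trivial since both sides are zero. This completes the verification, and the entire argument reduces to careful accounting; no deeper structural property of $\K$ beyond left-stochasticity and the adjacency constraint is needed.
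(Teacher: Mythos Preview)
Your proof is correct and follows the same direct-verification approach as the paper: check the three admissibility constraints for $\widetilde{K}$ and then compute $\widetilde{K}\x'$ entrywise. In fact your treatment of the degenerate columns $[\x']_j=0$ is more careful than the paper's, which simply remarks that such columns have no effect and then writes the sums over all $j$ without separating cases; your observation that $[\x']_j=0$ forces $[K]_{ji}=0$ whenever $[\x]_i>0$ is exactly the bookkeeping needed to make that step rigorous.
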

\begin{proof}
    See Appendix~\ref{appdx-sec:reverse-graph}.
\end{proof}

Based on the above result, we have the equivalence between the inverse reachable set of the graph $\graph$ and the reachable set of the reversed graph $\reversedgraph$.

\begin{restatable}{lemma}{irs}
\label{lmm:inverse-rset}
For any graph $\graph$, we have
\begin{equation*}
    \RsetInv(P) = \widetilde{\RSet}(P) \qquad \forall \; P \subseteq \R_{\geq0}^N.
\end{equation*}
\end{restatable}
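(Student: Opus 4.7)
The plan is to establish the two set inclusions separately, and both directions reduce essentially to invoking Lemma~\ref{lmm:inverse-action} together with the fact that reversing a graph is an involution, i.e., the reverse of $\reversedgraph$ is $\graph$ itself.

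For the inclusion $\RsetInv(P) \subseteq \widetilde{\RSet}(P)$, I would take an arbitrary $\x \in \RsetInv(P)$. By the definition of the inverse reachable set, there exist $\x' \in P$ and $K \in \K$ with $\x' = K\x$. Lemma~\ref{lmm:inverse-action} then furnishes a matrix $\widetilde{K} \in \widetilde{\K}$ (constructed explicitly via \eqref{eqn:reverse-action}) such that $\x = \widetilde{K}\x'$. Since $\x' \in P$, this exhibits $\x$ as an element of $\widetilde{\RSet}(P)$.

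For the reverse inclusion $\widetilde{\RSet}(P) \subseteq \RsetInv(P)$, I would take $\x \in \widetilde{\RSet}(P)$, so there exist $\x' \in P$ and $\widetilde{K} \in \widetilde{\K}$ with $\x = \widetilde{K}\x'$. Here I would invoke Lemma~\ref{lmm:inverse-action} with the roles of the two graphs swapped: since the reversed graph of $\reversedgraph$ is exactly $\graph$, the lemma applied to $\widetilde{\graph}$ (with action $\widetilde{K}$ taking $\x'$ to $\x$) produces an admissible action $K \in \K$ with $\x' = K\x$. Thus $\x' \in \RSet(\x) \cap P$, which shows $\x \in \RsetInv(P)$.

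The only substantive point to justify is that Lemma~\ref{lmm:inverse-action} can indeed be applied ``in the other direction.'' This amounts to observing that the construction \eqref{eqn:reverse-action} depends only on the incidence structure of the graph one starts with: replacing $(K, \widetilde{K}, A, \widetilde{A})$ by $(\widetilde{K}, K, \widetilde{A}, A)$ yields the same formula and the same admissibility argument, because $\widetilde{\widetilde{A}} = A$. This symmetry observation is the only step requiring care; the rest is bookkeeping of definitions. I would state it as a short remark before deploying the lemma symmetrically, to avoid having to reprove the construction.
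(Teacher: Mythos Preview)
Your proposal is correct and follows essentially the same approach as the paper: a double-inclusion argument where each direction reduces to a single application of Lemma~\ref{lmm:inverse-action}, with the reverse inclusion handled by the observation that $\graph$ is the reversed graph of $\reversedgraph$. The paper's proof is nearly word-for-word what you wrote, minus your final explanatory paragraph about the symmetry of the construction.
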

\begin{proof}
    See Appendix~\ref{appdx-sec:reverse-graph}.
\end{proof}

Lemma~\ref{lmm:inverse-rset} leads directly to the following computationally-friendly definition of the Q-sets:
\begin{subequations}
    \label{eqn:q-set-comp}
    \begin{alignat}{2}
        \nsp{0}{i} &= \polyreq(\y^{(i)}), \\
        \nsp{k}{i} &= \Big(\bigcap_{\ip\in \neighbor{i}} \widetilde{\RSet}(\nsp{k-1}{\ip})\Big) \; \cap \; \polyreq(\y^{(i)}) \quad & \forall ~ k \geq 1.
    \end{alignat}
\end{subequations}

Based on the above result, we can easily prove that the Q-sets are polytopes.

\py*
\begin{proof}
    Recall from Lemma~\ref{lmm:RSet-poly} that the reachable set of a polytope remains a polytope. Given the recursive construction of the Q-sets in~\eqref{eqn:q-set-comp}, and noting that all operations (reachable set computations and intersections) preserve polyhedrality, the result follows by induction on $k$. 
\end{proof}

As a direct consequence of Q-sets being polytopes, we have the following corollary.
\begin{corollary}
    The k-step CRR, $\factor{k}{}$, is attained at one of the vertices of the Q-set.
\end{corollary}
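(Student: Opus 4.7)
The plan is to reduce the statement to a standard linear-programming fact. By definition and by Lemma~\ref{cor:sufficient_resource}, we have
\[
    \factor{k}{} \;=\; \nspfactor{k} \;=\; \max_{i \in \nodeset} \min_{\x \in \nsp{k}{i}} \ones^\top \x,
\]
so $\factor{k}{}$ is the optimal value of a linear program $\min \ones^\top \x$ over the polytope $\nsp{k}{i^*}$, where $i^* \in \argmax_{i \in \nodeset} \nspfactornode{k}{i}$. The corollary then amounts to saying that this linear program attains its optimum at a vertex of $\nsp{k}{i^*}$, which is the fundamental theorem of linear programming applied to a pointed polyhedron with a bounded objective.

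To carry this out, I would first invoke Theorem~\ref{thm:Q-polytope} to conclude that each $\nsp{k}{i}$ is a polytope, and then verify the two hypotheses needed for the LP argument. For pointedness, note that $\nsp{k}{i} \subseteq \polyreq(\y^{(i)}) = \{\x \mid \x \geq \xreq(\y^{(i)})\}$; since the latter set is contained in a translate of the non-negative orthant, it contains no line, and therefore neither does $\nsp{k}{i}$. Hence $\nsp{k}{i}$ is a pointed polyhedron (and so possesses at least one vertex). For boundedness of the objective, observe that any $\x \in \nsp{k}{i}$ satisfies $\ones^\top \x \geq \ones^\top \xreq(\y^{(i)}) \geq 0$, so $\ones^\top \x$ is bounded below on $\nsp{k}{i}$.

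With these two ingredients in place, the fundamental theorem of linear programming (see, e.g.,~\cite{bertsimas1997introduction}) guarantees that the minimum of the linear function $\ones^\top \x$ over the pointed polyhedron $\nsp{k}{i^*}$ is attained at a vertex $\x^* \in \nsp{k}{i^*}$. Then $\factor{k}{} = \ones^\top \x^*$, which is the claim.

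I do not expect any serious obstacle here, since the result is essentially a packaging of Theorem~\ref{thm:Q-polytope} with standard LP facts. The only subtle point worth spelling out is pointedness (which rules out the pathological case of an unbounded-below linear program on a polyhedron containing a line); once that is handled via the containment in $\polyreq(\y^{(i)})$, the conclusion follows immediately.
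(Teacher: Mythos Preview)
Your proposal is correct and follows essentially the same approach as the paper: both invoke Theorem~\ref{thm:Q-polytope} to cast the CRR computation as a linear program over a Q-set and then appeal to the fundamental theorem of LP, with the objective bounded below by zero. Your argument is slightly more careful in explicitly verifying pointedness via the containment $\nsp{k}{i} \subseteq \polyreq(\y^{(i)})$, whereas the paper leaves this implicit.
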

\begin{proof}
    Since the Q-sets are polytopes, the optimization for $\nspfactornode{k}{i}$ is a linear program as in~\eqref{eqn:crr-def-nsp}.
    Furthermore, since the CRR is bounded from below by zero, an optimal solution is attainable and can be attained on one of the vertices. 
\end{proof}

\subsection{Action Extraction From Q-Sets}
Recall that we have treated the next state $\x_{t+1}$ in the reachable set as the action chosen by the \defender{} at step~$t$. 
Executing such a strategy in practice requires identifying a feasible action matrix $K_t \in \K$ that satisfies $\x_{t+1} = K_t \x_t$.

By construction of the reachable set $\RSet(\x_t)$, the existence of such an action is guaranteed. 
We propose to find a feasible action $K_t$ that transitions the system to $\x_{t+1}$ by solving a simple matrix equation.
Recall the characterization of $\RSet(\x_t)$ in Lemma~\ref{lmm:RSet-Polytope}.
Specifically, if $\x_{t+1} \in \RSet(\x_t)$, it satisfies
\begin{align}\label{eqn:back-prop-coeff}
    \x_{t+1} = \sum_{\ell} \lambda^{(\ell)} \left(\hatK^{(\ell)}\x_t \right).
\end{align}

Any $\x_{t+1}$ in the intersection would have the safety guarantee, and consequently the selection can be arbitrary. 
Since the intersection set is a bounded polytope,
one may simply select the centroid or a vertex of the intersection  as $\x_{t+1}$.
Since $\x_t$ and $\{\hatK^{(\ell)}\}_\ell$ are all known variables at time step $t$, the vector-form coefficients $\bm{\lambda}$ can be found by solving the following problem:
\begin{equation}
    \label{eqn:action-extraction}
     \Phi \bm{\lambda} = \x_{t+1}, \st \bm{\lambda} \geq 0 \text{ and } \sum_{\ell} \lambda^{(\ell)} = 1,
\end{equation}
where the matrix $\Phi \in \mathbb{R}^{|\nodeset| \times |\hatcalK|}$ has $\left(\hatK^{(\ell)}\x_t\right)_{\ell = 1}^{|\hatcalK|}$ as its columns.
Again, the feasibility of~\eqref{eqn:action-extraction} is guaranteed, due to the construction of $\RSet(\x_t)$.
With the solved $\bm{\lambda}$, the feasible action that brings the \defender{} from $\x_t$ to $\x_{t+1}$ is given by 
\begin{equation*}
    K_{t} = \sum_{\ell} \lambda^{(\ell)} \hatK^{(\ell)}.
\end{equation*}

\section{Numerical Illustrations}

This section provides numerical examples that illustrate the results developed in the previous sections.

\subsection{Q-set Propagation}
\noindent
\figref{fig:qset_propagation} illustrates how the Q-sets, $\nsp{k}{i}$, change with the horizon $k$.
For the three-node graph selected for this example, the propagation in~\algref{alg:Q-prop} converges after four iterations, at which point the algorithm finds that $k_\infty= 4$. The CRR for this graph is $\factor{\infty}{}=3$.
\begin{figure*}[b]
    \centering    \includegraphics[width=\textwidth]{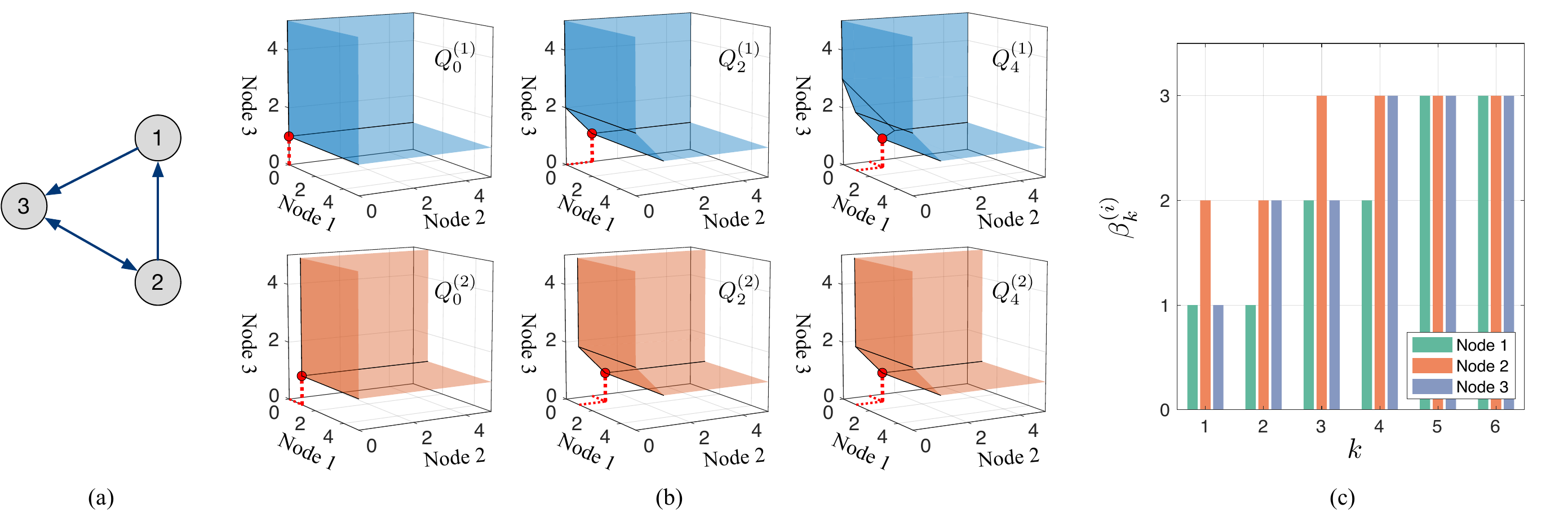}
    \caption{Illustration of Q-sets evolution with different horizons $k$. (a) The three node graph used for this example. (b) We consider a single unit of \attacker{} resource, and $\nsp{k}{i}$ for nodes $i=\{1,2\}$ and horizon $k=\{0,2,4\}$ are shown here for brevity. 
    The red dot in each figure indicates the element in the Q-set that achieves the smallest amount of resource for $k$-step defense, i.e., $\nspfactornode{k}{i}$ in~\eqref{eqn:crr-def-nsp}. (c) The evolution of $\nspfactornode{k}{i}$ on each node, until they converge at $k_\infty = 6$. 
    }
    \label{fig:qset_propagation}
\end{figure*}
It is worth noting that the Q-set for a given node may not change at every time step: e.g., $\nsp{k}{1}$ changes only twice between $k=1$ to $2$ and between $k=3$ to $4$.

We can verify the monotonicity of the Q-sets described in Remark~\ref{rem:monotonicity} by observing how the Q-sets get ``carved off'' and become smaller as $k$ increases.
Specifically, some regions of the state space with small amount of resources get excluded when $k$ changes from $0$ to $2$ and similarly from $2$ to $4$.
As an example, the state $\x = [0,0,1] \in \nsp{0}{1}$ can guard against any immediate next action made by a unit \attacker{} at node 1 (i.e., $\y^{(1)}$). This is shown by the red dot in $\nsp{0}{1}$ (top left subfigure in \figref{fig:qset_propagation}). However, this state is insufficient to defend over two time steps, and thus it is not included in $\nsp{2}{1}$. 
Similarly, we can see that the state $\x=[0,0,2]\in \nsp{2}{1}$ is sufficient to guard over two time steps, but not for four or more time steps.
The vertices of $\nsp{4}{1}$ (top right subfigure) are $[0,0,3]$, $[1,1,1]$, $[1,0,2]$, and $[0,2,1]$. One can verify that any of these states, as well as any convex combination of these states is sufficient to guard against one unit of no-splitting \attacker{} indefinitely.

\subsection{Effect of Edges on CRR}
The relationship between the CRR and the graph structure is not straightforward.
One might, for example, expect a positive correlation between the number of edges and the CRR, since an increase in the number of outgoing edges from a node increases the number of neighboring nodes that must be covered by the \defender{}. 
However, we show by a counter-example (found by the algorithm) that this is not the case.

The following example illustrates how the addition of edges can drastically change the CRR.
\figref{fig:alpha_example} provides examples of directed graphs with five nodes but with different edge sets.
The corresponding indefinite-defense CRR, $\factor{\infty}{}$, for each graph is obtained using \algref{alg:Q-prop}.

In the simplest ring-graph instance, the \defender{} needs only a single robot to indefinitely defend against a single \attacker{}, 
consistent with our prior results reported in~\cite{shishika2022dynamic}.
Interestingly, if the edge between nodes 4 and 5 is made bidirectional, the CRR jumps to $\factor{\infty}{} = 5$, giving the \attacker{} a significant advantage.
If we further add a bidirectional edge between nodes 3 and 4, the CRR decreases to $4$, which benefits the \defender{}.

Finally, if instead of adding the edge between nodes 3 and 4 we introduce a self-loop at node~3, the CRR drops from $5$ to $3$.
This observation highlights that some edges (e.g., the self-loop at node~3) have a larger impact on the game than others (e.g., the edge between nodes 3 and 4).

\begin{figure}[ht]
    \centering
    \includegraphics[width=0.85\textwidth]{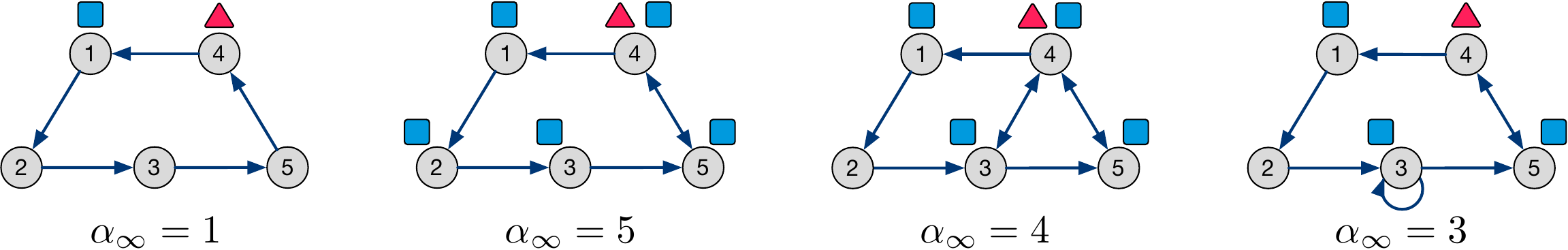}
    \caption{Examples of how CRR changes with the graph structure. All self-loops are explicitly presented. The necessary and sufficient amount of blue agents are placed in the safe set for a given red agent in each figure.}
    \label{fig:alpha_example}
\end{figure}

\subsection{Non-integer Resource Ratio}

Another natural conjecture regarding the CRR is that it must always take integer values.
However, the following dDAB example on a six-node graph (see \figref{fig:35-is-enough}) shows that the critical resource ratio can be non-integer in \textit{finite-horizon} dDAB games.
For this example, \algref{alg:Q-prop} returns $\factor{2}{} = 3.5$. In other words, 3 units of \defender{} resources are insufficient to guarantee a two-step defense against a single unit of \attacker{} resource, whereas 3.5 units are sufficient.
A detailed explanation of why 3 units are insufficient is provided in Appendix~\ref{appdx-sec:non-integer}. Here, we focus on presenting the strategy that allows the \defender{} to successfully defend with 3.5 units of resources.

\begin{figure}[b]
    \centering
    \includegraphics[width=0.2\linewidth]{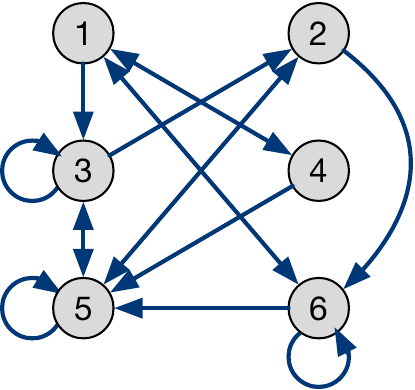}
    \vspace{-5pt}
    \caption{A six-node graph. All self-loops are explicitly presented.}
    \label{fig:35-graph}
\end{figure}

\figref{fig:35-is-enough}
presents the game tree starting with 3.5 units of \defender{} resource, and we show that regardless of the (no-splitting) strategy used by the \attacker{}, the \defender{} can defend until the end of time step 2.
{Since $\factor{2}{}=3.5$ is attained with $\beta_2^{(3)}=3.5$, we let the \attacker{} start with $\yinit=\y^{(3)}$.}
It is easy to verify that the initial \defender{} state $\x_0$ is in $\polyreq(\y^{(3)})$. 
The \attacker{} has three feasible moves at $t=0$: move to node 2, move to node 5, or stay at node 3. 
We only present the first two moves in \figref{fig:35-is-enough}, since for the third move, the \defender{} can just maintain its current state as a countermeasure and does not lose any defense time.\footnote{Even though node 2 does not have a self-loop, the \defender{} resources on nodes 2 and 5 can swap locations to keep the current configuration.}
Furthermore, we focus on explaining the \attacker{}'s move to node~2, since the defense against \attacker{} moving to node~5 can be achieved without using the half unit of resource on node~6.
After observing that the \attacker{} moves to node~2, the \defender{} takes action (a)%
\footnote{Action (a) splits \defender{} resources so that the unit of \defender{} on node~2 moves to node~6; 
the half unit on node~3 moves to node~2 
and the other half stays on node~3; 
the half unit on node~6 moves to node 1, and finally the unit on node 5 stays.} 
and arrives at the state at the beginning of time step~1.
The \attacker{} then has two options, either move to node~5 or to node~6.
Suppose the \attacker{} moves to node~6, the \defender{} initiates action~(b)\footnote{Action~(b) moves the half unit on node~1 to node~5;
the half unit on node~2 to node~5,
the unit on node~6 to node~1, 
and the rest of the resources on nodes 3 and 5 stay.}, which ensures that the configuration at the beginning of time step~2 is still in the required set.  
Similar moves can be made for trajectory (ii) to ensure the defense until the end of time step~2.
For more details regarding the \defender{} actions (a) to (c), see Appendix~\ref{appdx-sec:non-integer}.

By dynamically redistributing fractional resources, the \defender{} achieves defense with only an additional half unit of resource. 
The strategy presented is found by the algorithms in \secref{sec:nsp}, which verifies the efficacy of the proposed approach. 

\begin{figure*}[tb]
    \centering
    \includegraphics[width=0.95\linewidth]{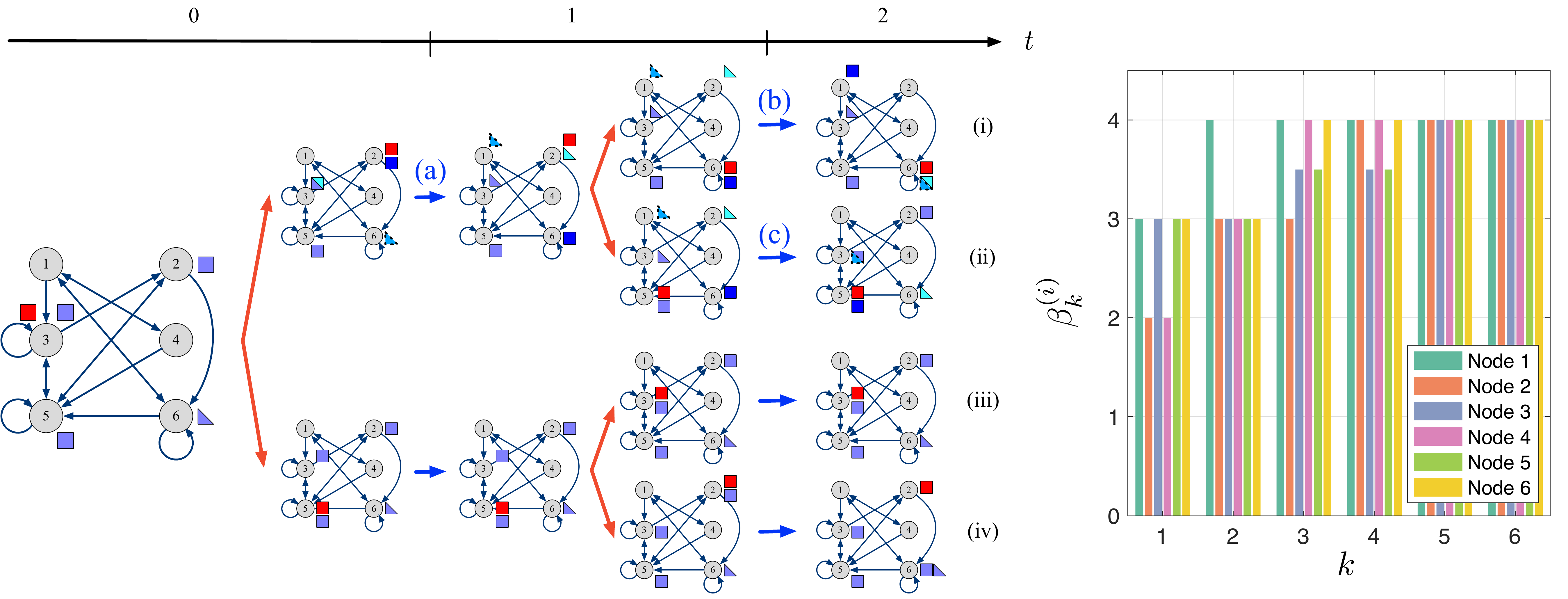}
    \caption{ {(Left) The two time-step game tree over a six-node graph. All self-loops are explicitly presented.  
    The \attacker{} and \defender{} resources are visualized as red and blue boxes, and the blue triangles represent half unit of \defender{} resource.
    Different blue colors are introduced to better visualize the splitting and regrouping of the \defender{} resources. 
    (Right) The evolution of $\nspfactornode{k}{i}$ on each node, until they converge to $\alpha_\infty\!=\!4$ at horizon $k_\infty \!=\! 6$.}
    }
    \label{fig:35-is-enough}
\end{figure*}

\subsection{Experiments on the Robotarium Testbed}

We implement the proposed dDAB algorithm and the resulting \defender{} and \attacker{} strategies on the Robotarium platform ~\cite{wilson2020robotarium}
to demonstrate the deployability on a physical multi-robot system.
While the set-based dynamic program in~\eqref{eqn:Q-set-def} operates with continuous resources, the Robotarium
experiments additionally demonstrate how the same algorithm can be used with discrete, embodied resources (mobile robots) via a simple discrete allocation strategy wrapper.

Specifically, at each time step, the \defender{} first selects a target Q-set $\nsp{k}{j}$ based on the observed \attacker{} allocation, as described in line~6 of Algorithm~7.
Given the current \defender{} \emph{discrete} allocation $\x_t$, the wrapper first computes the intersection $\RSet(\x_t) \cap \nsp{k}{j}$ and then selects a discrete point $\x_{t+1}$ (corresponding to $X$ discrete robots) within the intersection.
Based on the new discrete allocation, each \defender{} robot is assigned a target node to achieve the next discrete allocation $\x_{t+1}$, and the location of the assigned node is used as the robot’s target waypoint.
These waypoints are then sent to the Robotarium control interface of the multi-robot testbed, where the built-in safety barrier certificates ensure collision-free execution.

We evaluate the implementation on two representative examples to highlight different operational scenarios of dDAB.

\paragraph{Scenario 1}
This scenario emulates a broad-area outdoor defense task over a network of seven nodes. 
The five nodes marked with squares are the key nodes that the \defender{} needs to constantly maintain numerical advantage, while the the remaining two circular nodes are the \attacker{}'s spawning nodes. 
The \attacker{} robots may appear from the forest or arrive from the sea at the two circular nodes and attempt to breach the defense at the square (key) nodes. 

The Q-set computation indicates that four \defender{} robots are required to indefinitely hold off a single \attacker{}; accordingly, we first run an experiment with two \attacker{} robots (red) versus eight \defender{} robots (blue). 
Fig.~\ref{fig:robotarium-outdoor-1} present snapshots of the experiment taken at the moments when the \attacker{} is about to select its next allocation. 
It can be observed that the neighboring nodes of those currently occupied by the \attacker{} robots consistently contain a sufficient number of \defender{} robots, ensuring successful defense regardless of the \attacker{}'s next move.

\begin{figure}
    \centering
    \includegraphics[width=\linewidth]{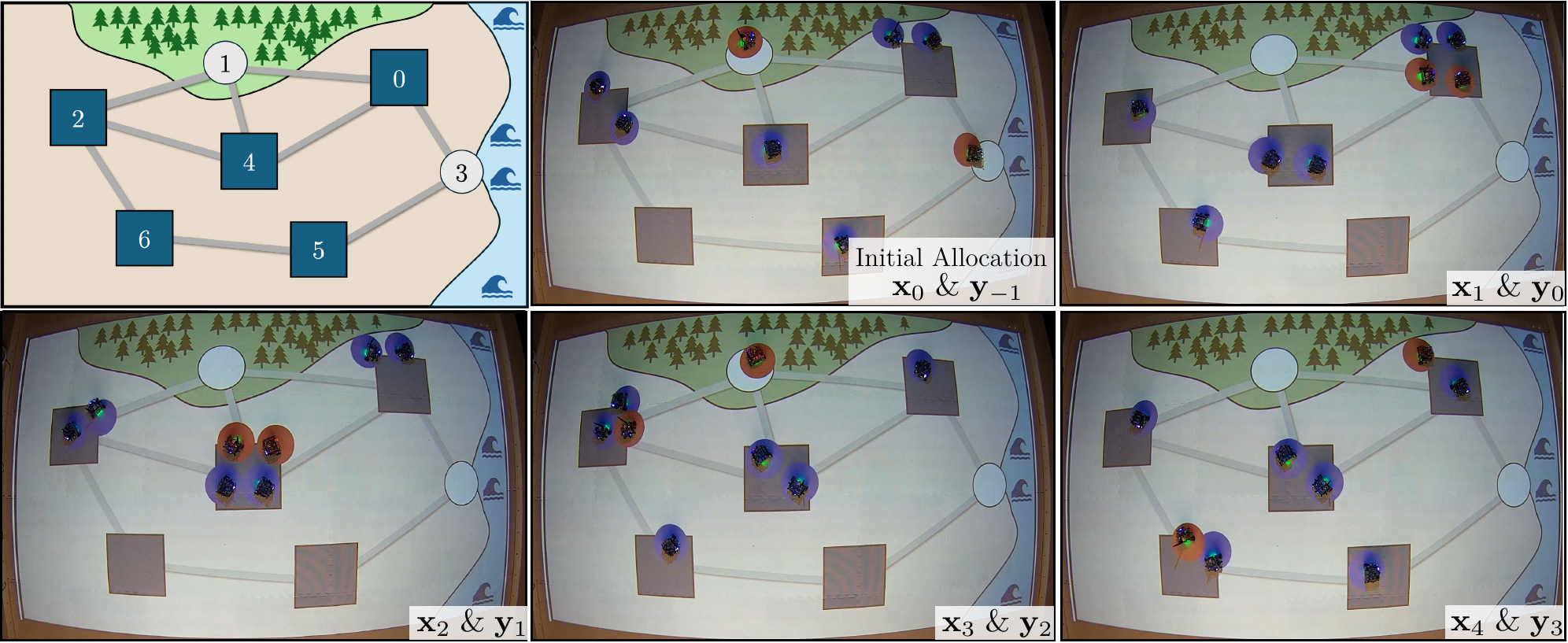}
    \vspace{-0.25in}
    \caption{
    Snapshots from 
    hardware experiment
    for Scenario~1 with eight \defender{} robots against two \attacker{} robots. 
    The first panel shows the underlying graph, and the remaining panels show the discrete allocations at the moments when the \attacker{} selects its next move. 
    The \defender{} team consistently positions sufficient robots at nodes neighboring the \attacker{} robots’ locations, preventing any potential breach after \attacker{}'s next move.}
    \label{fig:robotarium-outdoor-1}
    \vspace{-0.1in}
\end{figure}

Next, we repeat the experiment after removing one \defender{} robot from node~2.
Under this reduced \defender{} team, Algorithm~\ref{alg:attacker-strategy} predicts an earliest breach at $t\!=\!2$ for the \attacker{} robot initiated from node~3. 
Figure~\ref{fig:robotarium-outdoor-2} shows the movement sequence selected by the \attacker{} that leads to this breach at node~2, along with the corresponding \defender{} responses.

\begin{figure}[h]
    \vspace{-0.2in}
    \centering
    \includegraphics[width=\linewidth]{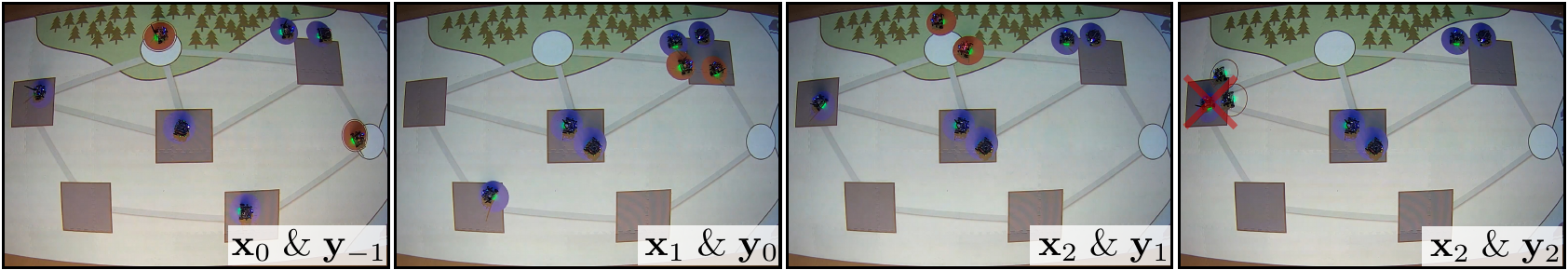}
    \vspace{-0.2in}
    \caption{Allocation sequence for Scenario~1 after removing one \defender{} robot. The \attacker{} robots achieve a breach at node~2 at time step $t\!=\!3$.}
    \label{fig:robotarium-outdoor-2}
    \vspace{-0.15in}
\end{figure}

\paragraph{Scenario 2}
This scenario represents an indoor surveillance problem with nine rooms (nodes~0–8), of which six are key rooms (nodes~0–5). 
The \defender{} team must ensure that, at every time step, at least one defender is present in the same key room as the \attacker{}. 
We deploy four \defender{} robots to defend against a single \attacker{} robot—a configuration that, according to the Q-set analysis, guarantees indefinite defense. 
The \attacker{} is spawned outside the building at node~9, enters through node~6, and then explores the rooms at random. As shown in Fig.~\ref{fig:robotarium-indoor}, at each time step there is always a \defender{} robot co-located with the \attacker{} and at least one \defender{} robot positioned in each neighboring room, thereby maintaining continuous surveillance throughout the experiment regardless of the \attacker{}'s moves.

\begin{figure}[t]
    \centering
    \includegraphics[width=\linewidth]{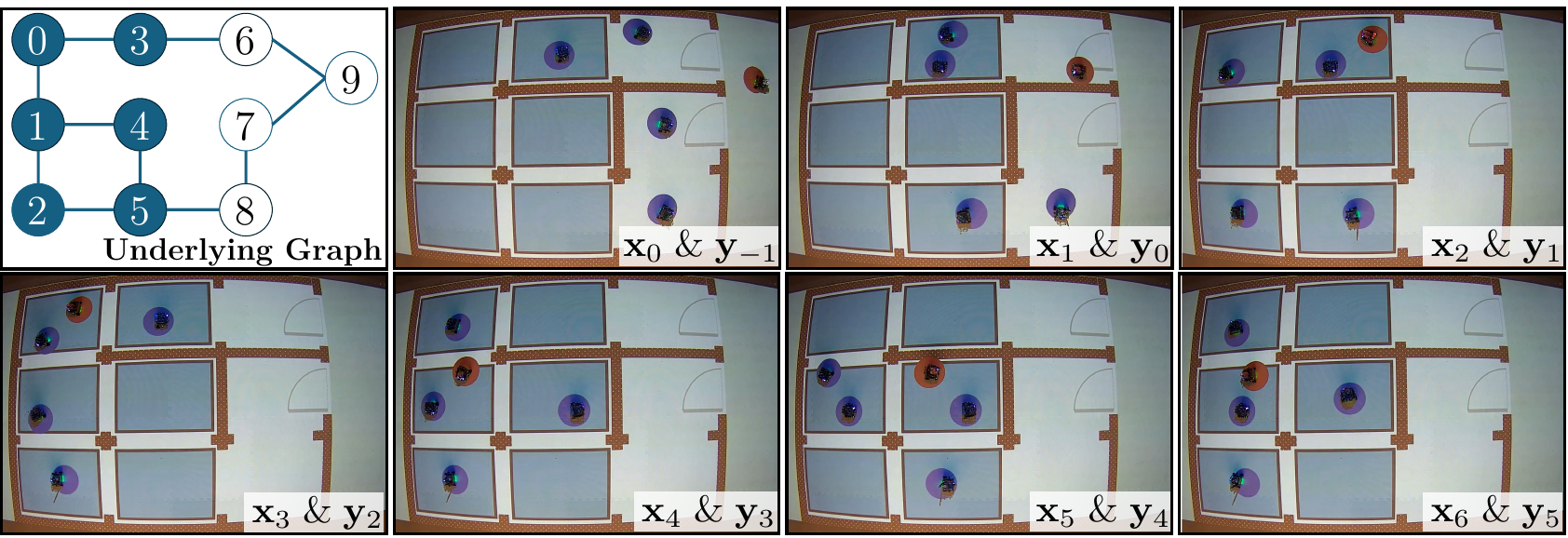}
    \vspace{-0.25in}
    \caption{Snapshots from the indoor surveillance experiment with four \defender{} robots against one \attacker{}. 
    At each time step, the \defender{} robots maintain presence in the \attacker{}’s room and its neighboring rooms, ensuring continuous surveillance.}
    \label{fig:robotarium-indoor}
    \vspace{-0.1in}
\end{figure}

\section{Open Problems}
The Q-prop algorithm in~\algref{alg:Q-prop} is an iterative algorithm that finds the Q-sets. 
Through the sink problem in~\secref{appdx:sink-example}, we have shown that there are graphs on which the Q-prop algorithm does not converge. 
Perhaps there are general conditions on the graph that guarantees that the Q-prop algorithm converges to the indefinite defense Q-sets. 
It is also of interest to see the convergence behavior of the algorithm, i.e. asymptotic vs. finite iteration. 

Empirically, we observed that the critical resource ratio  $\alpha_\infty$ for undirected graph is always integer-valued.
We further observed that $\alpha_\infty \leq |\nodeset|$ for all undirected graphs and $\alpha_\infty > |\nodeset|$ only for directed graphs. 
It is unclear whether these two observations can be formally proved or if additional strengthened conditions on the underlying graphs are required. 

\section{Conclusion}
In this work, we formulated a dynamic adversarial resource-allocation problem by combining the Colonel Blotto game with ideas from population dynamics on graphs. 
Instead of achieving a desired allocation instantly as in traditional Blotto Game formulation, we require that players' resources traverse through the edges of the graph. 
We developed an efficient reachable-set approach to predict the state evolution.
We fully characterize the game by deriving the necessary and sufficient condition (the Q-sets) for either of the player to win the game, along with the corresponding reactive strategies. 
The efficacy of the proposed approach is verified through numerical simulations and physical experiments on the Robotarium platform.
Future work will investigate conditions required for the convergence of the Q-prop algorithm, which leads to guaranteed indefinite defense. 
It is also of interest to consider heterogeneous resources as in~\cite{prorok2017impact} and decentralized decision-making via the common-information approach~\cite{guan2023zero}.

\bibliographystyle{IEEEtran}
\bibliography{refs}

\clearpage

\begin{appendices}
\section{Example of Degenerate Case with Non-strongly-connected Graph}
\label{appdx:sink-example}
In Section~\ref{sec:introduction}, we assumed that the graph is strongly connected. 
Namely, for any pair of node $i,j \in \nodeset$, there is a directed path from node $i$ to node $j$.
This assumption is used to avoid the degenerate case, where a subset of the graph is a sink for the \defender{},  
as shown in \figref{fig:sink_example}.

\begin{figure}[!htb]
    \centering
    \includegraphics[width=0.3\linewidth]{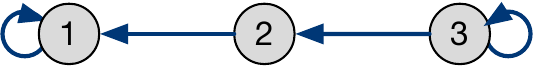}
    \caption{An example of a graph with a sink (node 1). 
    The graph is not strongly connected, since there is no directed path from node 1 to node 3.}
    \label{fig:sink_example}
\end{figure}

One can easily see that by having a single \attacker{} on node 3, the \defender{} must maintain at least one \defender{} resource on node 2. 
This one unit of \defender{} resource will be forced to move to node 1 at the next time step and will stay on node 1 forever. 
Consequently, the \defender{} must ``sacrifice" a unit of its resource at every time step in order to guard node2, and the \attacker{} can trivially win the game by staying on node 3 and wait till the \defender{} runs out of resource and leaves node 2 unattended.

\section{Proof of Theorem~\ref{thm:extreme-action}}
\label{appdx:reachable-sets}

\ea*
\begin{proof}
    We provide a proof by double inclusion.
    The direction of $\convexhull{\hatcalK} \subseteq \K$ is easy to show, as the extreme actions are all admissible actions and the linear constraints in \eqref{eqn:stochastic-matrix-constraint}--\eqref{eqn:adjacency-constraint} hold under convex combinations.
    
    To show that $\K \subseteq \convexhull{\hatcalK}$, we provide a formula of $\{\lambda^{(\ell)}\}_\ell$ in~\eqref{eqn:action-convex-comb} for an arbitrary $K\in \K$. 
    We first define the active edge set $\mathcal{I}^{(\ell)}$ for the extreme action $\hatK^{(\ell)} \in \hatcalK$ as 
    \begin{equation*}
        \mathcal{I}^{(\ell)} = \left\{(j,i) \Big\vert [\hatK^{(\ell)}]_{ij}=1\right\}.
    \end{equation*}
    Then, given any admissible action $K \in \K$, the  coefficients $\lambda^{(\ell)}$ corresponding to the extreme action $\hatK^{(\ell)}$ can be computed as
    \begin{equation}
         \lambda^{(\ell)} = \prod_{(j,i)\in \mathcal{I}^{(\ell)}} [K]_{ij}.
    \end{equation}
    One can further verify that the above formula satisfies~\eqref{eqn:action-convex-comb} and $\sum_{\ell = 1}^{|\hatcalK|} \lambda^{(\ell)} = 1$. 
    Consequently, any admissible action is in the convex hull of the extreme actions.
    With the double inclusion, we have proved the relation in~\eqref{eqn:convex-hull}. 
\end{proof}

\section{Fixed-Point Results}
\label{appdx-sec:fixed-point}

\begin{lemma}
    \label{appdx-lmm:nonempty-intersection}
    Let $A$ be a compact set, $\{B_k\}$ be a sequence of descending closed sets, i.e. $B_0 \supseteq B_1 \supseteq B_2 \supseteq \cdots$. Define $B = \bigcap_{k} B_k$. 
    Suppose that $A\cap B_k \ne \emptyset$, then $A\cap B \ne \emptyset$.
\end{lemma}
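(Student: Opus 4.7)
The statement is a standard consequence of compactness (the finite intersection property), so the plan is to reduce it to that classical fact. I will present both a sequential-compactness proof and mention the finite intersection property reformulation.

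The plan is as follows. First, I would observe that each $A \cap B_k$ is nonempty by hypothesis, and the sequence $\{A \cap B_k\}_{k \geq 0}$ is a descending chain (since $B_k \supseteq B_{k+1}$ implies $A \cap B_k \supseteq A \cap B_{k+1}$). Second, for each $k$ I would pick a witness $x_k \in A \cap B_k$. This produces a sequence $\{x_k\} \subseteq A$.

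Next I would invoke compactness of $A$ (working in $\mathbb{R}^n$, compactness is equivalent to sequential compactness, so there is a convergent subsequence $x_{k_j} \to x^\star$ with $x^\star \in A$). The key step is then to show $x^\star \in B_m$ for every fixed $m$. Fix $m$; for all $j$ with $k_j \geq m$, the descending property gives $x_{k_j} \in B_{k_j} \subseteq B_m$. Thus the tail of the subsequence lies in $B_m$, and since $B_m$ is closed, the limit $x^\star$ lies in $B_m$ as well. Since $m$ was arbitrary, $x^\star \in \bigcap_m B_m = B$, so $x^\star \in A \cap B$, establishing nonemptiness.

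I do not expect any real obstacle: the only subtlety is the usual care in making sure that we argue with the tail of the subsequence rather than the full sequence, and that we use closedness of $B_m$ (not merely of $B$) to pass to the limit. As an alternative framing I could note that $\{A \cap B_k\}$ is a descending family of closed subsets of the compact set $A$ with the finite intersection property (any finite intersection equals $A \cap B_{\max k_i}$, which is nonempty), so by the compactness characterization via the finite intersection property, $\bigcap_k (A \cap B_k) = A \cap B \neq \emptyset$. Either formulation yields the result in a few lines.
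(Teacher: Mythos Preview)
Your proposal is correct and follows essentially the same approach as the paper: pick witnesses $x_k \in A\cap B_k$, extract a convergent subsequence in the compact set $A$, and use closedness of each $B_m$ together with the descending property to conclude the limit lies in every $B_m$. The only cosmetic difference is that the paper phrases the last step as a proof by contradiction (using a positive distance to $B_{k_N}$), whereas your direct ``tail lies in $B_m$, $B_m$ closed, hence limit in $B_m$'' argument is slightly cleaner; your finite-intersection-property alternative is also valid.
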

\begin{proof}
    Let $x_k \in A\cap B_k$. Since $A$ is compact, there is a convergent subsequence $x_{k_n}$ such that $\lim_{n \to \infty} x_{k_n} = x$ and $x \in A$. We will show that $x\in B$.
    
    Suppose $x \notin B$, then $x \notin B_K$ for some $K$. 
    Furthermore, since $\{B_k\}$ are descending, $x \notin B_k$ for all $k \geq K$.
    Let $N$ be a large enough integer such that $k_N \geq K$.
    Since $B_{K_N}$ is closed, we have that $\mathrm{dist}(x, B_{k_N}) = \min_{y \in B_{K_N}} \norm{x - y} = \epsilon > 0$.
    Furthermore, since $B_{k_N} \supseteq B_{k_{N+1}} \supseteq \cdots$, we have for all $n \geq N$ that
    \[\mathrm{dist}(x, B_{k_n}) \geq \mathrm{dist}(x, B_{k_N}) = \epsilon > 0,\]
    which implies that $\norm{x_{k_n} - x} \geq \epsilon >0$ for all $n \geq N$.
    However, we have that $x_{k_n}\to x$, which is a contradiction. 
    Thus, we have $x \in B$.
    
\end{proof}

\fp*
\begin{proof}
    We provide a proof via double inclusion.  
    Denote $F = \Big\{\x  \big\vert \x \in \polyreq(\y^{(i)}) \wedge \RSet(\x) \cap \nsp{\infty}{\ip} \ne \emptyset ~\forall \ip \in \neighbor{i} \Big\}.$
    
    We first show that $\nsp{\infty}{i} \subseteq F$. 
    Consider an arbitrary $\x \in \nsp{\infty}{i}$. 
    Clearly, $\x \in \nsp{k}{i} \subseteq \polyreq(\y^{(i)})$.
    Furthermore, for all $j \in \neighbor{i}$, we have $\RSet(\x) \cap \nsp{k}{j} \neq \emptyset$ for all $k$.
    Since $\RSet(\x)$ is compact and all $\nsp{k}{j}$ are closed, we can apply Lemma~\ref{appdx-lmm:nonempty-intersection} and conclude that $\RSet(\x) \cap \nsp{\infty}{i} \neq \emptyset$. 
    Consequently, $\nsp{\infty}{i} \subseteq F$.
    
    Next, consider an arbitrary $\x \in F$. Clearly, $\x \in \polyreq(\y^{(i)})$. 
    Furthermore, since $\nsp{\infty}{j} = \bigcap_{k=0}^{\infty} \nsp{k}{j}$, we have $\RSet(\x) \cap \nsp{k}{j} \ne \emptyset$ for all $k$ and $j \in \neighbor{i}$ from the definition of $F$.
    Consequently, $\x \in \nsp{k}{i}$ for all $k$.
    
\end{proof}

\section{Combinition of Subteam Actions}
\label{appdx-1}

\begin{lemma}[Overall action]\label{lmm:overall-action}
Suppose the \defender{} allocation at time $t$ is given as a convex combination:
\begin{equation}
    \x_{t} = \sum_{i} \xi_i \x_{i,t},
\end{equation}
where $\xi_i \geq 0$ and $\sum_{i} \xi_i = 1$. 
Then, for any set of actions $K_{i,t} \in \mathcal{K}$, there exists an admissible overall action $K \in \mathcal{K}$ such that:
\begin{equation}\label{eqn:overall-action-requirement}
    \x_{t+1} = K_t \x_t = \sum_{i} \xi_i K_{i,t} \x_{i,t}.
\end{equation}
Furthermore, the $pq$-th entry of the overall action is given by
\begin{equation} \label{eqn:overal-action-formula}
    \left[K_t\right]_{pq} = 
    \begin{cases}
    \sum_i \left[ \frac{\xi_i \left[\x_{i,t}\right]_q}{\left[\x_t\right]_q} \left[K_{i,t}\right]_{pq} \right] & \text{if } [\x_t]_q > 0, \\
    \frac{1}{d^+_q} &\text{if } [\x_t]_q =0 \text{ and } A_{pq} = 1, \\
    0 & \text{if } [\x_t]_q =0 \text{ and } A_{pq} = 0.
    \end{cases}
\end{equation}
\end{lemma}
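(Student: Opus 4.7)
The plan is to take the matrix $K_t$ given by formula~\eqref{eqn:overal-action-formula} as a \emph{definition}, and then verify directly two things: (i)~the action requirement~\eqref{eqn:overall-action-requirement}, i.e.\ $K_t \x_t = \sum_i \xi_i K_{i,t} \x_{i,t}$; and (ii)~admissibility $K_t \in \mathcal{K}$, namely the three linear constraints \eqref{eqn:stochastic-matrix-constraint}--\eqref{eqn:adjacency-constraint}. Both checks reduce to elementary computations, so the real content is bookkeeping plus a careful split into the cases $[\x_t]_q > 0$ and $[\x_t]_q = 0$.

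For~(i), I would compute $[K_t \x_t]_p = \sum_q [K_t]_{pq}[\x_t]_q$ and split the sum by whether $[\x_t]_q$ is positive or zero. On the positive coordinates, plugging in the first branch of~\eqref{eqn:overal-action-formula} produces a cancellation of $[\x_t]_q$ with the denominator, leaving $\sum_i \xi_i [\x_{i,t}]_q [K_{i,t}]_{pq}$. On the zero coordinates, both the left- and right-hand contributions vanish: the factor $[\x_t]_q = 0$ kills the left, and on the right the identity $[\x_t]_q = \sum_i \xi_i [\x_{i,t}]_q$ combined with $\xi_i, [\x_{i,t}]_q \geq 0$ forces $\xi_i [\x_{i,t}]_q = 0$ for every $i$. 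Swapping the order of summation then gives $[K_t \x_t]_p = \sum_i \xi_i [K_{i,t}\x_{i,t}]_p$.

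For~(ii), non-negativity and the adjacency constraint~\eqref{eqn:adjacency-constraint} are immediate from the definition in both branches (the positive-column branch is a non-negative combination of entries $[K_{i,t}]_{pq}$, each of which respects the adjacency). Left-stochasticity~\eqref{eqn:stochastic-matrix-constraint} I would check column-by-column. When $[\x_t]_q > 0$,
\[
\sum_p [K_t]_{pq} \;=\; \sum_i \frac{\xi_i [\x_{i,t}]_q}{[\x_t]_q} \sum_p [K_{i,t}]_{pq} \;=\; \frac{1}{[\x_t]_q}\sum_i \xi_i [\x_{i,t}]_q \;=\; 1,
\]
because each $K_{i,t}$ is left-stochastic; when $[\x_t]_q = 0$, the column sum is $d_q^+/d_q^+ = 1$.

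The main (and essentially only) subtlety will be the degenerate columns where $[\x_t]_q = 0$: the main formula has a zero denominator there, so we must supply an alternative admissible column. The uniform-over-out-neighbors choice in the second branch of~\eqref{eqn:overal-action-formula} is harmless precisely because it is multiplied by $[\x_t]_q = 0$ when forming $K_t \x_t$, so it never affects~(i); any left-stochastic, adjacency-respecting column would serve equally well. The well-definedness of $1/d_q^+$ uses the standing assumption that $\graph$ is strongly connected, which guarantees $d_q^+ \geq 1$ at every node. Once these branches are handled consistently, the three verifications above complete the proof.
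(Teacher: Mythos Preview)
Your proposal is correct and follows essentially the same approach as the paper: define $K_t$ by~\eqref{eqn:overal-action-formula}, verify~\eqref{eqn:overall-action-requirement} by splitting the sum over $q$ according to whether $[\x_t]_q>0$ or $=0$ (using $[\x_t]_q=0\Rightarrow \xi_i[\x_{i,t}]_q=0$ for all $i$), and then check the three admissibility constraints column-by-column. The only cosmetic difference is that the paper computes from the right-hand side of~\eqref{eqn:overall-action-requirement} toward $K_t\x_t$, while you go the other direction; your explicit remark that strong connectivity ensures $d_q^+\geq 1$ is a detail the paper leaves implicit.
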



\begin{proof}
Let the set $\mathcal{I}=\{i \; | \; [\x_t]_i >0\}$ denote the set of nodes that have non-zero \defender{} resource at time step $t$.
Note that $[\x_t]_q = 0$ implies $[\x_{i,t}]_q = 0$ for all $i$.
From \eqref{eqn:overall-action-requirement}, we have
\begin{align*}
    [\x_{t+1}]_p &= \left[\sum_{i} \xi_i K_{i,t} \x_{i,t}\right]_p\\
    &= \sum_i \xi_i \sum_{q} \left[K_{i,t}\right]_{pq} \left[\x_{i,t}\right]_q\\
    &= \sum_i \xi_i \Bigg( \sum_{q\in \mathcal{I}} \frac{\left[K_{i,t}\right]_{pq} \left[\x_{i,t}\right]_q}{\left[\x_t\right]_q} \left[\x_t\right]_q 
    + \sum_{\substack{q\notin \mathcal{I}\\ A_{pq}=1}} \frac{1}{d^+_q} \underbrace{\left[\x_{i,t}\right]_q }_{=0} 
    + \sum_{\substack{q\notin \mathcal{I}\\ A_{pq}=0}} 0 
    \underbrace{\left[\x_{i,t}\right]_q}_{=0}\Bigg)\\
    &= \sum_{q \in \mathcal{I}} \left[\sum_i  \frac{\xi_i \left[\x_{i,t}\right]_q}{\left[\x_t\right]_q}  \left[K_{i,t}\right]_{pq}\right] \left[\x_t\right]_q + \sum_{\substack{q\notin \mathcal{I}\\ A_{pq}=1}} \frac{1}{d^+_q} \underbrace{\left[\x_{t}\right]_q}_{=0}  
    + \sum_{\substack{q\notin \mathcal{I}\\ A_{pq}=0}} 0 \underbrace{\left[\x_{t}\right]_q}_{=0}\\
    &=\sum_{q} [K_t]_{pq} [\x_t]_q.
\end{align*}
The node $p$ with $[\x_t]_q = 0$ can be ignored, since the relocation action for node $q$ has no effect on $\x_{t+1}$ where there is no resource on node $q$. 
The additional two cases are presented in~\eqref{eqn:overal-action-formula} to solely ensure that the overall action is admissible and well-defined.

Next, we show that the overall action~\eqref{eqn:overal-action-formula} is admissible. 
That is it satisfies the three conditions for admissible actions:
\begin{enumerate}
    \item $K_t \geq 0$ is obvious. 
    \item $[K_t]_{pq} >0$ only if $[A]_{pq}=1$ is also obvious. 
    For the case $[\x_t]_q > 0$   in~\eqref{eqn:overal-action-formula}, $[K_t]_{pq} >0$ only if at least one of $[K_{i,t}]_{pq} >0$. 
    Since $K_{i,t}$ is admissible, its $pq$-th entry can be positive only if $[A]_{pq}=1$.
    For the last two cases where $[\x_t]_q = 0$, the admissibility of $K_t$ is straightforward.
    \item The column sum of $K$ is unity. For the case where $[\x_t]_q \neq 0$, it follows from 
    \begin{align*}
        \sum_p \left[K_t\right]_{pq} 
        &= \sum_p \sum_i \frac{\xi_i \left[\x_{i,t}\right]_q}{\left[\x_t\right]_q} \left[K_{i,t}\right]_{pq}\\
        &=  \sum_i \frac{\xi_i \left[\x_{i,t}\right]_q}{\left[\x_t\right]_q}\sum_p \left[K_{i,t}\right]_{pq}\\
        &= \sum_i \frac{\xi_i \left[\x_{i,t}\right]_q}{\left[\x_t\right]_q} = 1.
    \end{align*}
    When $[\x_t]_q = 0$, we have
    \begin{equation*}
        \sum_p \left[K_t\right]_{pq} = \sum_{p \in \{(q,p)\in \edgeset\}} \frac{1}{d_q^+} = 1. 
    \end{equation*}
\end{enumerate}
In summary, we have shown that there exists an overall action that satisfies \eqref{eqn:overall-action-requirement}, and it is also admissible. 
\end{proof}

\section{Proof of Theorem~\ref{thm:kstep_defense_general_new}}
\label{appdx-sec:general-defender-thm}

\mt*

To rigorously establish the above theorem, we first introduce a more precise formulation that explicitly captures the time-step dependencies.

\begin{theorem}\label{thm:kstep_defense_general}
Given the \attacker{}'s current state $\y_{t-1}$, and the \defender{}'s state can be described as a superposition of the subteams:
\begin{equation}
    \label{eq:safe_state_general}
    \x_t = \sum_{i=1}^{N} \x_{t,T}^{(i)} 
     = \sum_i^n \frac{[\y_{t-1}]_i}{Y} \xhat_{T-t}^{(i)},\;\; \text{where\;}\xhat_{T-t}^{(i)}\in\nsp{T-t}{i}.
\end{equation}
Then, the \defender{} has a strategy to guarantee defense until time step $T$ against any admissible \attacker{} strategy.
\end{theorem}
\begin{proof}
We break the proof into three steps. 
\textbf{Step I:} In Lemma~\ref{lem:safeness}, we show that \eqref{eq:safe_state_general} is a sufficient condition for the \defender{} to defend during the current time step.
\textbf{Step II:} Lemma~\ref{lem:induction} provides a strategy to maintain condition~\eqref{eq:safe_state_general} at the next time step against any admissible \attacker{} strategy.
In other words, for $t\in \{0, \ldots, T-1\}$, if $\x_t$ satisfies \eqref{eq:safe_state_general} for a given $\y_{t-1}$, then for any $\y_{t} \in \RSet(\y_{t-1})$, there is an admissible action $K_t$ such that $\x_{t+1} = K_t \x_t$ satisfies~\eqref{eq:safe_state_general} at $t+1$.
\textbf{Step III:} Based on mathematical induction, condition~\eqref{eq:safe_state_general} is satisfied for all time steps.
Therefore, the defense is guaranteed until time~$T$.
\end{proof}

\begin{lemma}[One-step Safety Guarantee]\label{lem:safeness}
If the \defender{} state $\x_t$ satisfies \eqref{eq:safe_state_general}, then we have $\x_t \in \polyreq(\y_{t-1})$.
In other words, \eqref{eq:safe_state_general} provides sufficiency for the \defender{} to defend the current time step $t$.
\end{lemma}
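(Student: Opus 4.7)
The plan is to show that the required-set inequality $\x_t \geq \xreq(\y_{t-1})$ (component-wise) follows from superposition: the \attacker's reachable set from the split state $\y_{t-1}$ can be decomposed into a weighted sum of reachable sets from the concentrated states, against each of which the corresponding \defender{} subteam is individually sufficient.

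First, I would establish the following decomposition of $\RSet(\y_{t-1})$. Given any admissible $F \in \F$, linearity of the action gives $F \y_{t-1} = \sum_{i=1}^N [\y_{t-1}]_i (F \e_i)$, and each column $F \e_i$ lies in $\RSet(\e_i)$. Hence every $\y_t \in \RSet(\y_{t-1})$ can be written as $\y_t = \sum_i [\y_{t-1}]_i \y_t^{(i)}$ with $\y_t^{(i)} \in \RSet(\e_i)$. Conversely, given any such collection of $\y_t^{(i)} \in \RSet(\e_i)$ one can assemble an admissible $F$ by using $\y_t^{(i)}$ as its $i$-th column (columns sum to one and respect the adjacency constraint), so this decomposition is exact. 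Taking the $j$-th component yields $[\y_t]_j = \sum_i [\y_{t-1}]_i [\y_t^{(i)}]_j$.

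Next, I would invoke the containment $\nsp{T-t}{i} \subseteq \polyreq(\y^{(i)})$ from the base case \eqref{eqn:Q-set-def-0} and the monotonicity remarked in \eqref{eqn:Q-set-monotonicity}. Since $\xhat_t^{(i)} \in \nsp{T-t}{i}$, it lies in $\polyreq(Y \e_i)$, so by Remark~\ref{rmk:preq-equivalent},
\begin{equation*}
    [\xhat_t^{(i)}]_j \;\geq\; \max_{\y' \in \RSet(Y \e_i)} [\y']_j \;=\; Y \max_{\y' \in \RSet(\e_i)} [\y']_j \;\geq\; Y [\y_t^{(i)}]_j
\end{equation*}
for every node $j$ and every choice $\y_t^{(i)} \in \RSet(\e_i)$. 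Multiplying by $[\y_{t-1}]_i/Y$ gives $[\x_{t,T}^{(i)}]_j \geq [\y_{t-1}]_i [\y_t^{(i)}]_j$.

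Finally, summing this bound over $i$ and using the decomposition from the first step,
\begin{equation*}
    [\x_t]_j \;=\; \sum_{i=1}^N [\x_{t,T}^{(i)}]_j \;\geq\; \sum_{i=1}^N [\y_{t-1}]_i [\y_t^{(i)}]_j \;=\; [\y_t]_j.
\end{equation*}
Since $\y_t \in \RSet(\y_{t-1})$ was arbitrary, taking the maximum over $\RSet(\y_{t-1})$ on the right yields $[\x_t]_j \geq [\xreq(\y_{t-1})]_j$ for all $j$, i.e., $\x_t \in \polyreq(\y_{t-1})$. By Proposition~\ref{prop:p-req}, this is precisely one-step defense. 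The main subtlety to get right is the first step: verifying that \emph{every} element of $\RSet(\y_{t-1})$ (not just the extreme ones) admits the per-node decomposition through admissible unit-mass reachable sets, which is ultimately just a consequence of the column-by-column structure of the left-stochastic adjacency-constrained matrices defining $\F$.
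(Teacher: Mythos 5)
Your proof is correct and follows essentially the same route as the paper's: both rest on the column-wise superposition identity $[F\y_{t-1}]_j = \sum_i [\y_{t-1}]_i [F\e_i]_j$ for left-stochastic $F\in\F$, combined with $\xhat_t^{(i)}\in\nsp{T-t}{i}\subseteq\polyreq(\y^{(i)})$ and the characterization of $\polyreq$ in Remark~\ref{rmk:preq-equivalent}. The only cosmetic difference is that you quantify over points $\y_t\in\RSet(\y_{t-1})$ and decompose them through unit-mass reachable sets, whereas the paper quantifies directly over $F_{t-1}\in\F$, which sidesteps the (correctly handled) decomposition subtlety you flag at the end.
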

\begin{proof}
Recalling the definition of \defender{} subteams in~\eqref{eq:defender_subteam}, the condition \eqref{eq:safe_state_general} can be written as
\begin{equation*}
    \x_t = \sum_{i=1}^{N} \frac{[\y_{t-1}]_i}{Y} \xhat_{t}^{(i)}.
\end{equation*}
By definition $\xhat_{t}^{(i)}\in\nsp{T-t}{i}$, which implies $\hat{\x}^{(i)}_{t}\in\polyreq(\y^{(i)})$. 
Therefore for any $F_{t-1}\in\F$ we have
\begin{equation*}
    [\hat{\x}^{(i)}_{t}]_j \geq [F_{t-1}\y^{(i)}]_j = Y [F_{t-1} \e_i]_j,\quad \forall\; j \in \nodeset.
\end{equation*}
Multiplying both sides with $\frac{1}{Y}[\y_{t-1}]_i$, it follows that
\begin{equation*}
    \frac{1}{Y}[\y_{t-1}]_i [\hat{\x}^{(i)}_t]_j \geq  [\y_{t-1}]_i [F_{t-1} \e_i]_j.
\end{equation*}
By taking the sum over $i$, we obtain
\begin{align*}
    [\x_t]_j &= \frac{1}{Y} \sum_{i\in\nodeset} [\y_{t-1}]_i [\hat{\x}^{(i)}_t]_j \geq  \sum_{i\in\nodeset}  [\y_{t-1}]_i [F_{t-1} \e_i]_j \\
    &=  \sum_{i\in\nodeset} \left[  [\y_{t-1}]_i  F_{t-1} \e_i \right]_j 
    =   \Big[ F_{t-1} \sum_{i\in\nodeset} [\y_{t-1}]_i  \e_i \Big]_j \\
    & =   \left[ F_{t-1} \y_{t-1} \right]_j.
\end{align*}
Since the above inequality holds for all $F_{t-1} \in \F$, it follows that
$\x_t \in \polyreq(\y_{t-1})$ (see Remark~\ref{rmk:preq-equivalent}).
Consequently, $\x_t$ can defend the current time step $t$.

\end{proof}

The next lemma shows that the \defender{} can preserve the condition in~\eqref{eq:safe_state_general} against any \attacker{} strategy.

\begin{lemma}[Inductive Condition]\label{lem:induction}
Suppose the \defender{}'s state at time $t$ satisfies
\begin{equation}
    \x_t = \sum_{i=1}^{N} \x_{t,T}^{(i)}.
    \label{eq:inductive_condition}
\end{equation} 
Then, for any \attacker{} action $\y_t \in \RSet(\y_{t-1})$, there exists a \defender{}'s reaction $\x_{t+1} \in \RSet(\x_t)$ such that
\begin{equation}
    \x_{t+1} = \sum_{i=1}^{N} \x_{t+1,T}^{(i)},
    \label{eqn:new_defender_split}
\end{equation}
i.e., the \defender{}'s state at the next time step can also be written as a combination of valid subteams defined in \eqref{eq:defender_subteam}.
\end{lemma}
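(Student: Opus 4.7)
The plan is to mirror the attacker's splitting pattern with the defender's action, one subteam at a time, using the Q-set recursion to guarantee each piece lands in the correct Q-set one step later. Throughout, I treat the defender's global action as a superposition of subteam-specific actions.

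First, I would decompose the attacker's action. Writing $\y_t = F_{t-1}\y_{t-1}$ for some $F_{t-1}\in\F$, the $i$-th attacker subteam of size $[\y_{t-1}]_i$ sends a fragment of size $[F_{t-1}]_{ji}[\y_{t-1}]_i$ to each out-neighbor $j\in\neighbor{i}$, and these aggregate to $[\y_t]_j = \sum_i [F_{t-1}]_{ji}[\y_{t-1}]_i$. Because $\hat{\x}_t^{(i)}\in\nsp{T-t}{i}$, the recursion~\eqref{eqn:Q-set-def-k} supplies, for each $j\in\neighbor{i}$, a point $\hat{\x}^{(i,j)}\in\RSet(\hat{\x}_t^{(i)})\cap\nsp{T-t-1}{j}$ together with an admissible $K^{(i,j)}\in\K$ realizing $\hat{\x}^{(i,j)}=K^{(i,j)}\hat{\x}_t^{(i)}$. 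I would then assemble a subteam-level transition $\tilde{K}^{(i)} := \sum_j [F_{t-1}]_{ji}K^{(i,j)}$; since the $i$-th column of $F_{t-1}$ sums to one and $\K$ is convex, $\tilde{K}^{(i)}\in\K$.

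To synthesize a single global $K_t\in\K$ realizing $\x_{t+1} = \sum_i \tilde{K}^{(i)}\x_{t,T}^{(i)}$ from $\x_t = \sum_i \x_{t,T}^{(i)}$, I define columnwise $[K_t]_{kl} := \sum_i [\tilde{K}^{(i)}]_{kl}[\x_{t,T}^{(i)}]_l/[\x_t]_l$ at nodes with $[\x_t]_l>0$ (picking any admissible column elsewhere). This is a columnwise convex combination of elements of $\K$, hence $K_t\in\K$, and by construction $K_t\x_t = \sum_i\tilde{K}^{(i)}\x_{t,T}^{(i)}$. Regrouping the double sum across the $(i,j)$ pairs yields $\x_{t+1} = \sum_j \frac{[\y_t]_j}{Y}\hat{\x}_{t+1}^{(j)}$ with $\hat{\x}_{t+1}^{(j)} := \sum_i \frac{[F_{t-1}]_{ji}[\y_{t-1}]_i}{[\y_t]_j}K^{(i,j)}\hat{\x}_t^{(i)}$ (terms with $[\y_t]_j = 0$ dropped, the corresponding subteam being zero). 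This $\hat{\x}_{t+1}^{(j)}$ is a convex combination of points in $\nsp{T-t-1}{j}$, hence lies in $\nsp{T-t-1}{j}$ by convexity of the polytope (Theorem~\ref{thm:Q-polytope}), which is exactly the decomposition~\eqref{eqn:new_defender_split} demanded.

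The main obstacle is the synthesis step: different defender subteams may prescribe different routings for mass sitting on the same node, so it is not immediate that a single admissible $K_t$ exists. The resolution hinges on $\K$ being defined by purely linear constraints (non-negativity, column sums one, adjacency support), all of which are preserved under the nodewise convex averaging above. A secondary sanity check, immediate from column stochasticity of the $K^{(i,j)}$'s and the fact that each $\hat{\x}_t^{(i)}$ has total mass $X$, is that each $\hat{\x}_{t+1}^{(j)}$ also has total mass $X$, so the scaling factor $[\y_t]_j/Y$ in the new subteam decomposition matches Definition~\ref{def:defender_subteam}.
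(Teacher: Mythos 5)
Your proposal is correct and follows essentially the same route as the paper: you split each defender subteam according to the attacker's column fractions $[F_{t-1}]_{ji}$, use the Q-set recursion to obtain the per-pair responses $K^{(i,j)}$, regroup by destination node and invoke convexity of the Q-sets (Theorem~\ref{thm:Q-polytope}) to certify the new subteams, and your nodewise mass-weighted synthesis of $K_t$ is exactly the paper's overall-action construction (Lemma~\ref{lmm:overall-action}), re-derived inline. The only nit is your closing sanity check: elements of a Q-set need not have total mass $X$ (Q-sets live in $\R^N_{\geq 0}$, not $\Delta_X$), but nothing in your main argument relies on that claim.
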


\begin{proof}
Denote an \attacker{}'s action that takes $\y_{t-1}$ to $\y_t$ as $F_{t-1}$.%
\footnote{
This action may be non-unique as discussed in Section~\ref{subsec:reachable_sets}, but its existence suffices for the purpose of this proof.}
Let $\bs{f}_i$ be the $i$-th column of $F_{t-1}$, i.e., $F_{t-1} = [\bs{f}_1, \bs{f}_2, ..., \bs{f}_N]$, where $\bs{f}_i ^\top \bs{1}=1$ (since $F_{t-1}$ is left stochastic).
We can interpret $\bs{f}_i$ to be the splitting action of the \attacker{} subteam on node $i$ at time $t-1$,
where the fraction of a (possibly empty) subteam on node $i$ relocating to node $j$ is given by $[\bs{f}_i]_j$.

For notational convenience, we drop the second subscript~$T$, when denoting the \defender{} subteams, $\x_{t,T}^{(i)}$.
From Definition~\ref{def:defender_subteam}, we have that the re-scaled $i$-th defender subteam satisfies $\xhat_{t}^{(i)} = (Y \x_{t}^{(i)})/[\y_{t-1}]_i \in \nsp{T-t}{i}.$
From the Q-set definition, we can always construct a satisficing \defender{} action $K^{(i\to j)}$ against a no-splitting \attacker{} moving from node $i$ to~$j$, which guarantees that $\xhat_{t+1}^{(i\to j)} = K^{(i\to j)}\xhat_{t}^{(i)} \in \nsp{T-t-1}{j}.$

Intuitively, the $i$-th \defender{} subteam should react to the splitting of the $i$-th \attacker{} subteam in the following manner. 
First, the $i$-th \defender{} subteam is divided into ``sub-subteams", according to the $i$-th \attacker{} subteam's splitting action $\bs{f}_i$ from the previous time step (see~\figref{fig:splitting-attacker}). 
The $j$-th \defender{} sub-subteam of its $i$-th subteam then counteracts the $j$-th \attacker{} sub-subteam that moves from node $i$ to node $j$.
This counteraction is achieved by the \defender{} sub-subteam applying the action $K^{(i \to j)}$.

Following the intuition above, the $j$-th sub-subteam of the $i$-th \defender{} subteam at time step $t$ has the configuration $[\bs{f}_i]_j \x_t^{(i)} = \frac{[\bs{f}_i]_j [\y_{t-1}]_i}{Y}\xhat^{(i)}_t$,
and it applies the action $K^{(i\to j)}$ to counteract the \attacker{} sub-subteam that moved from node $i$ to node $j$.
The next configuration achieved by this \defender{} sub-subteam is then given by
\begin{equation*}
    \x_{t+1}^{(i\to j)} = \frac{[\bs{f}_i]_j[\y_{t-1}]_i}{Y} K^{(i\to j)} \xhat_{t}^{(i)} = \frac{[\bs{f}_i]_j[\y_{t-1}]_i}{Y}\xhat_{t+1}^{(i\to j)}.
\end{equation*}
Note that $\x_{t+1}^{(i\to j)}$ is only a part of the new $j$-th \defender{} subteam, which originated from the previous $i$-th subteam.

By collecting \defender{} resources originating from different subteams that reacted to the \attacker{} resources that ended up at node $j$ (i.e., $\x_{t+1}^{(i\to j)}$ for $i\in\neighbor{j}$), the new $j$-th \defender{} subteam can be computed as
\begin{equation}
    \x_{t+1}^{(j)} = \sum_{i\in \nodeset} \x_{t+1}^{(i\to j)} = \sum_{i\in \nodeset} \frac{[\bs{f}_i]_j[\y_{t-1}]_i}{Y} \xhat_{t+1}^{(i\to j)}.
\end{equation}

We now verify that this is a valid \defender{} subteam, i.e., it is a state in the corresponding Q-set (scaled by the size of the \attacker{} subteam).
By the definition in~\eqref{eq:defender_subteam}, the rescaled new $j$-th subteam is
\begin{equation}
    \xhat_{t+1}^{(j)} = \frac{Y}{[\y_t]_j} \x_{t+1}^{(j)} = \sum_i \frac{[\bs{f}_i]_j[\y_{t-1}]_i}{[\y_t]_j} \xhat_{t+1}^{(i\to j)}.%
    \footnote{Note that $[\y_t]_j \!= \!0$ implies $\x^{(j)}_{t+1}\!=\!0$. In this case, we can set $\xhat_{t+1}^{(j)}=0$.}
\end{equation}
Noting that $\sum_i [\bs{f}_i]_j[\y_{t-1}]_i = \sum_i [F_{t-1}]_{ji}[\y_{t-1}]_i =  [\y_t]_j$, we see that $\xhat_{t+1}^{(j)}$ is a convex combination of the states $\{\xhat_{t+1}^{(i\to j)}\}_i$.
Since Q-sets are polytopes (Theorem~\ref{thm:Q-polytope}), and also since $\xhat_{t+1}^{(i\to j)} \in \nsp{k-1}{j}$ for all $i\in \nodeset$ by construction, we conclude that $\xhat_{t+1}^{(j)}\in \nsp{k-1}{j}$. 
Thus, the new configuration at time $t+1$ can be written as a superposition of valid subteams.

Finally, since $\x_t = \sum_{i, j} \frac{[\y_{t-1}]_i [\bs{f}_i]_j}{Y}\xhat_t^{(i)}$, 
we can construct the overall \defender{} action $K_t$ that takes $\x_t$ in~\eqref{eq:inductive_condition} to $\x_{t+1}$ in~\eqref{eqn:new_defender_split} based on the sub-subteam actions $K^{(i\to j)}$ (see Lemma~\ref{lmm:overall-action} in Appendix~\ref{appdx-1}),
which completes the proof.

\end{proof}

A minimum working example that illustrates the concepts in the above proof is presented in \figref{fig:subteam_mwe}. 
The readers can use the figure as a roadmap for better understanding of the intuition behind Theorem~\ref{thm:kstep_defense_general}.

\begin{figure}
    \centering
    \includegraphics[width=0.87\linewidth]{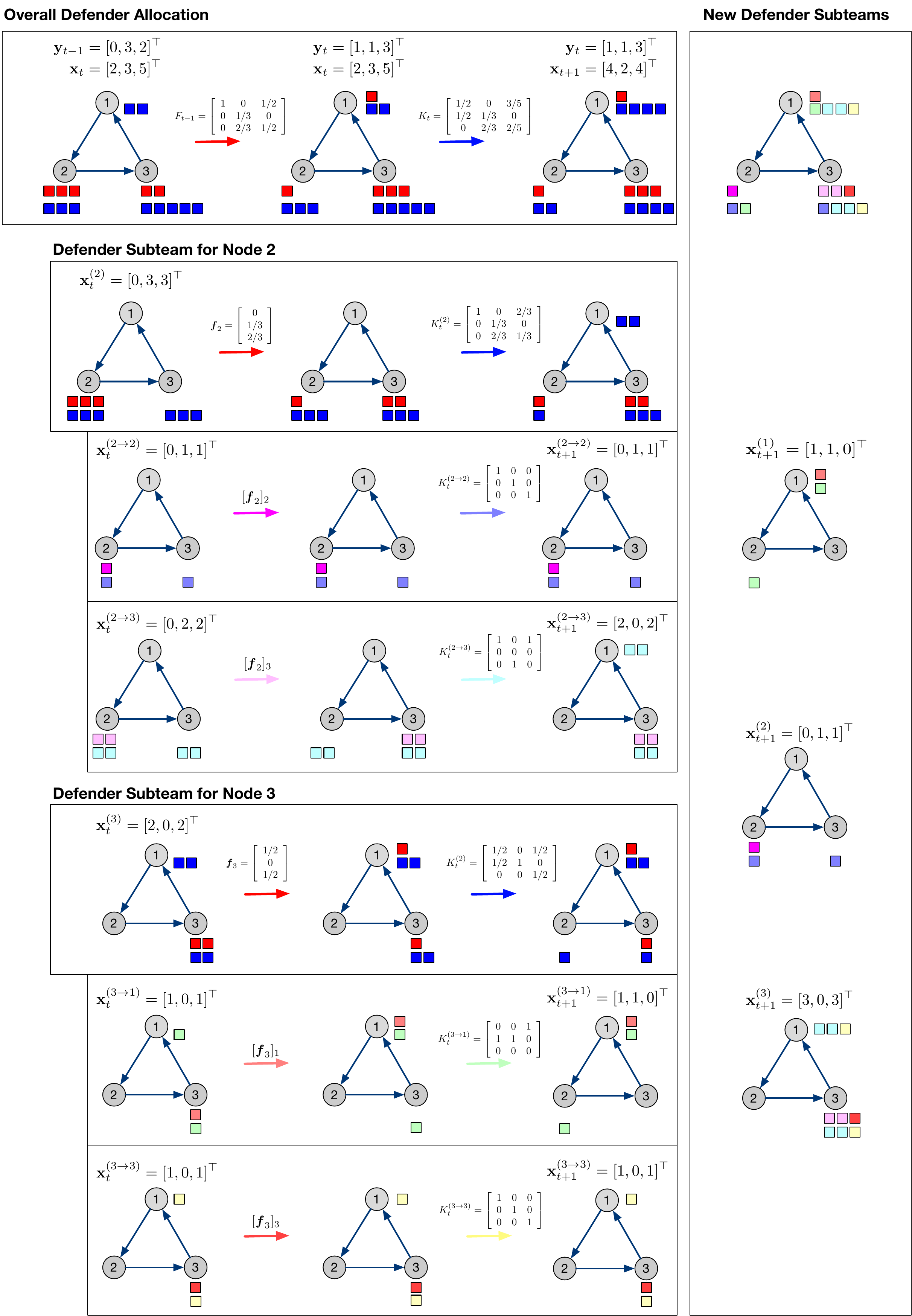}
    \caption{A minimum working example for the proof of Lemma~\ref{lem:induction}.}
    \label{fig:subteam_mwe}
\end{figure}

\newpage
\section{Results on Reversed Graphs}
\label{appdx-sec:reverse-graph}

\ia*
\begin{proof}
When $[\x']_j = 0$, the actions applied on node $j$ has no influence on the next configuration. 
Consequently, we ignore such cases. 
The readers can refer to the proof of Lemma~\ref{lmm:overall-action} for a proper handling of cases where $[\x']_j = 0$. 

We first verify that the reverse action $\widetilde{K}$ in~\eqref{eqn:reverse-action} is admissible. 
It is easy to see that $\widetilde{K}$ satisfies the underlying graph constraints, since the $ij$-th entry of $\widetilde{K}$ depends on the $ji$-th entry of the $K$ matrix.
Each column of $\widetilde{K}$ also sums to unity, since
\begin{equation*}
    \sum_{i} \left[\widetilde{K}\right]_{ij} = \sum_{i} \frac{[K]_{ji} [\x]_i}{[\x']_j} = \frac{\sum_i [K]_{ji} [\x]_i}{[\x']_j} = \frac{[\x']_j}{[\x']_j} = 1.
\end{equation*}

Finally, we show that $\widetilde{K}\x' =\x$.
\begin{align*}
    \left[\widetilde{K}\x'\right]_i &= \sum_{j} \widetilde{K}_{ij} [\x']_j = \sum_{j}\frac{[K]_{ji} [\x]_i}{[\x']_j} [\x']_j\\
    &= \sum_j[K]_{ji}[\x]_i = [\x]_i,
\end{align*}
which completes the proof. 

\end{proof}

\irs*
\begin{proof}
We prove the equality through a double inclusion. 
To show that $\RsetInv(P) \subseteq \widetilde{\RSet}(P)$, select an arbitrary point $\x \in \RsetInv(P)$, then there exists $K \in \K$ and $\x' \in P$, such that $\x' = K \x$. 
According to Lemma~\ref{lmm:inverse-action}, we can construct an inverse action $\widetilde{K} \in \widetilde{\K}$ such that $\x = \widetilde{K} \x'$.
Thus, we have $\x \in \widetilde{\RSet}(P)$.

Consider a point $\x \in \widetilde{\RSet}(P)$. By definition, there exists a point $\x' \in P$ and an admissible action $\widetilde{K} \in \widetilde{\K}$ such that $\x = \widetilde{K} \x'$. 
Note that we can regard the original graph $\graph$ as a reversed graph of its reversed graph $\widetilde{\graph}$.
Consequently, per Lemma~\ref{lmm:inverse-action}, there exists an action $K \in \K$ such that $\x' = K \x$.
Since $\x' \in P$, we have shown that $\x \in \RsetInv(P)$.
\end{proof}

\section{Details of Numerical Examples}
\label{appdx-sec:non-integer}

\subsection{Game Trajectories with Three Units of Resource}
\figref{fig:3-not-enough} presents a game tree where 3 units of \defender{} resource fail to defend against a single \attacker{}.
The \attacker{} selects to start on node 3, i.e. $\yinit=\y^{(3)}$.
The initial \defender{} allocation corresponds to the only feasible state with three unit of \defender{} resource in the $\polyreq(\y^{(3)})$.
The \attacker{} then moves from node 3 to node 2 at time step~0.
Note that with the \attacker{} on node 2, it is necessary for the \defender{} to place one unit of resource on both nodes 5 and 6 to be in the required set, which leads to the three possible configurations at the beginning of time step 1. 
For each of the configurations, the \attacker{} has a corresponding move, which leads to a $\polyreq$ (marked with light blue) that the \defender{} cannot achieve at the beginning of time step 2. 
For example, in trajectory (i), the \attacker{} moves from node 2 to node 5 at time step 1. 
This move leads to a $\polyreq$ that has one unit of \defender{} on each of the nodes 2, 3 and 5, which cannot be achieved by the \defender{}.\footnote{Notice that node 2 does not have a self-loop.}
Consequently, the \attacker{} has a strategy to defeat the \defender{} at the end of time step~2.

\begin{figure}[ht]
    \centering
    \includegraphics[width=0.8\linewidth]{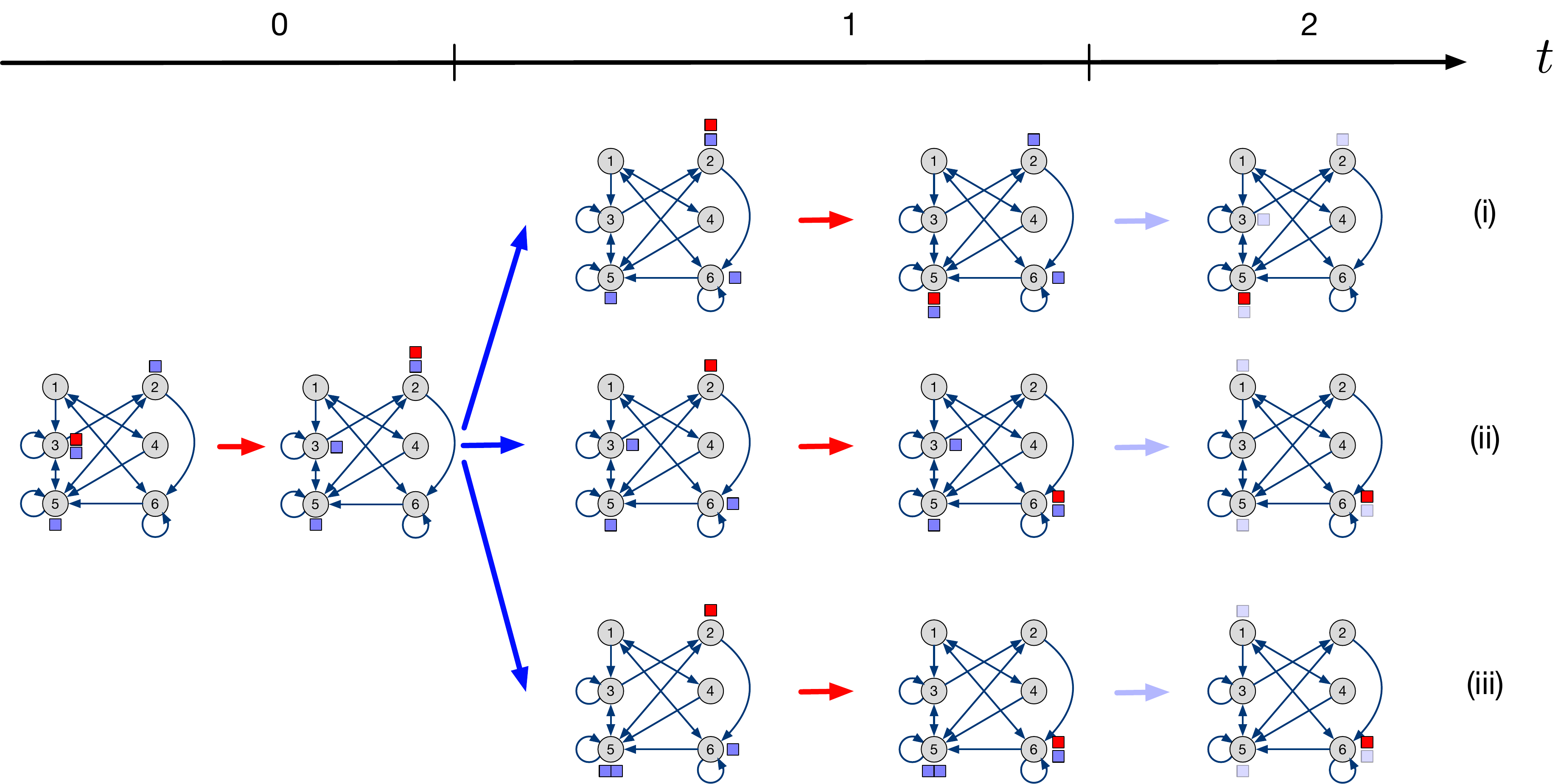}
    \caption{A two-time step game tree starting with one unit of \attacker{} and three unit of \defender{}. Regardless the strategy used by the \defender{}, the \defender{} will be defeated at the end of time step 2.}
    \label{fig:3-not-enough}
\end{figure}

\subsection{Defender Actions Used in Figure~\ref{fig:35-is-enough}}
The \defender{} actions (a) to (c) used in \figref{fig:35-is-enough} are given as follows.
\begin{equation*}
    \text{(a)} = \left[
    \begin{array}{cccccc}
        0 & 0 & 0   & 1 & 0 & 1  \\
        0 & 0 & 0.5 & 0 & 0 & 0  \\
        1 & 0 & 0.5 & 0 & 0 & 0  \\
        0 & 0 & 0   & 0 & 0 & 0  \\
        0 & 0 & 0   & 0 & 1 & 0  \\
        0 & 1 & 0   & 0 & 0 & 0  \\
    \end{array}
    \right], \qquad 
    \text{(b)} = \left[
    \begin{array}{cccccc}
        0 & 0 & 0   & 0 & 0 & 1  \\
        0 & 0 & 0   & 0 & 0 & 0  \\
        0 & 0 & 1   & 0 & 0 & 0  \\
        0 & 0 & 0   & 0 & 0 & 0  \\
        0 & 0 & 0   & 1 & 1 & 0  \\
        1 & 1 & 0   & 0 & 0 & 0  \\
    \end{array}
    \right], \qquad 
    \text{(c)} = \left[
    \begin{array}{cccccc}
        0 & 0 & 0   & 0 & 0 & 0  \\
        0 & 0 & 0   & 0 & 1 & 0  \\
        1 & 0 & 1   & 0 & 0 & 0  \\
        0 & 0 & 0   & 0 & 0 & 0  \\
        0 & 0 & 0   & 1 & 0 & 1  \\
        0 & 1 & 0   & 0 & 0 & 0  \\
    \end{array}
    \right].
\end{equation*}


 



\end{appendices}

\end{document}